\documentclass[review,onefignum,onetabnum]{siamart190516}



\usepackage{lipsum}
\usepackage{amsfonts}
\usepackage{graphicx}
\usepackage{epstopdf}
\usepackage{algorithmic}
\ifpdf
  \DeclareGraphicsExtensions{.eps,.pdf,.png,.jpg}
\else
  \DeclareGraphicsExtensions{.eps}
\fi


\newsiamremark{remark}{Remark}
\newsiamremark{hypothesis}{Hypothesis}
\crefname{hypothesis}{Hypothesis}{Hypotheses}
\newsiamthm{claim}{Claim}


\title{Geometrical Characterization of Sensor Placement for Cone-Invariant and Multi-Agent Systems against Undetectable Zero-Dynamics Attacks\thanks{Submitted to the editors DATE.
\funding{This research is supported in part by City University of Hong Kong under the projects 7004866, 7005065, in part by the Knut and Alice Wallenberg Foundation, in part by the Swedish Foundation for Strategic Research, and in part by the Research
Grants Council of Hong Kong Special Administrative Region, China, under the Theme-Based Research Scheme T23-701/14-N.}}}

\author{Jianqi Chen\thanks{Department of Electrical Engineering, City University of Hong Kong, Hong Kong, China
  (\email{jianqchen2-c@my.cityu.edu.hk},  \email{jichen@cityu.edu.hk} ).}
\and Jieqiang Wei \thanks{ACCESS Linnaeus Centre, School of Electrical Engineering, KTH Royal Institute of Technology, Sweden
  (\email{jieqiang@kth.se}, \email{hsan@kth.se},  \email{kallej@kth.se}).}
\and Wei Chen \thanks{Department of Mechanics and Engineering Science, College of Engineering, Peking University, China
(\email{w.chen@pku.edu.cn}).}  
\and Henrik Sandberg\footnotemark[3]
\and Karl H. Johansson\footnotemark[3]
\and Jie Chen\footnotemark[2]
}

\usepackage{amsopn}


\usepackage{graphicx}      
\usepackage{amsmath}
\usepackage{amssymb}
\usepackage{subcaption}
\usepackage{verbatim}
\usepackage{dsfont}
\usepackage{xcolor}
\usepackage{upgreek}
\usepackage{amsfonts}
\usepackage{textcomp}
\usepackage[normalem]{ulem}
\usepackage{arydshln,leftidx,mathtools}
\usepackage{caption}
\usepackage{blindtext}
\usepackage{float}
\usepackage{tikz}
\usepackage{enumerate}
\ifpdf
\hypersetup{
  pdftitle={Sensor Placement of Cone-Invariant and Multi-Agent Systems against Undetectable Zero-Dynamics Attacks},
  pdfauthor={Jianqi Chen,  Jieqiang Wei,  Wei Chen,  Henrik Sandberg,  Karl H. Johansson,  and Jie Chen}
}
\fi



\newtheorem{assumption}{Assumption}
\theoremstyle{definition}
\newtheorem{exmp}{Example}[section]

\DeclareMathOperator{\Tr}{Tr}

\DeclareMathOperator{\rank}{\textbf{R}}

\begin{document}
\nolinenumbers
\maketitle

\begin{abstract}
Undetectable attacks are an important class of malicious attacks threatening the 
security  of cyber-physical systems, which can modify a system's state but leave the system output measurements unaffected, and hence cannot be detected from the output.
This paper studies undetectable attacks on cone-invariant systems and multi-agent systems. 
We first provide a general characterization of zero-dynamics attacks, which characterizes fully undetectable attacks targeting the non-minimum phase zeros of a system. This geometrical characterization makes it possible to develop a defense strategy seeking to place a minimal number of sensors to detect and counter the zero-dynamics attacks on the system's actuators.
The detect and defense scheme amounts to computing a set containing potentially vulnerable actuator locations and nodes, and a defense union for feasible placement of sensors based on the geometrical properties of the cones under consideration.

\end{abstract}

\begin{keywords}
Undetectable attacks; sensor placement; defense strategy; cone-invariant systems; multi-agent systems.     
\end{keywords}

\begin{AMS}
15B48,  47L07,  93A16,  93B70   
\end{AMS}

\section{Introduction}
Cyber-physical systems (CPS),  networked and enabled by today's ubiquitous information technology (IT)
infrastructure, have been widely regarded as new-generation engineered systems integrated by physical plants, control and communication networks, and computational units. The technological, economic,  and societal impact of such systems is vastly 
expanding due to rapidly increasing network interconnections of
different components,
such as sensors, actuators, control processing units,  and communication devices. Much to our chagrin, however, interconnected network and computing devices expose the vulnerability of CPS and open the door to potential cyber threats.  Malicious attackers can gain access to sensing and 
actuating devices to launch attacks through IT components, which is likely to compromise the safe and reliable 
operation of a CPS and,  in an extreme scenario, lead to catastrophic consequences \cite{cardenas2008research}.

In light of the ever-increasing presence and ever-expanding scope of CPS,
 there have been growing needs to address challenges in
 detecting attack threats,
mitigating the impact of attacks, and designing effective defense strategies.
Of particular relevance to this paper are a class of attacks that can be broadly referred to as  undetectable attacks.
In \cite{liu2011false},   \textit{false data 
injection attacks} were considered for static systems, which inject
false,  ill-attempted data into the system. 
\emph{Stealthy deception attacks},   which alter sensor readings to avoid detection and cause damage,
were studied in \cite{teixeira2010cyber}. 
\textit{Replay attacks} \cite{mo2009secure}
record the system's operating data
and fake the past data as the current operating signal to drive
the system. 
\textit{Zero-dynamics attacks} target the
transmission zeros of the system and hide in the output \cite{pasqualetti2013attack}. 
\textit{Covert attacks} exploit 
decoupling structures to deceive the controller by interrupting the input and the output simultaneously \cite{smith2011decoupled}.

While as alluded to above various undetectable attacks have been designed,
 the work  on the detection and prevention of such  attacks has been scarce. 
Generally, the detection of an attack can be done by inserting additional signals so as to perturb the system's operating data, or by adding additional sensors, so as to change the system's structure and dynamics.
For example,  it was shown in  \cite{mo2009secure} that 
adding a Gaussian signal unknown to the attacker into communication
channels can make \textit{replay attacks} detectable,  and in \cite{teixeira2012revealing},  that
by placing sensors on all neighbors of the potential attacked nodes,  
\textit{zero-dynamics attacks} can be prevented.
Nevertheless,
one should note that adding a large  number of  sensors may neither be effective nor feasible
in many situations, especially in a distributed setting,  because of environmental constraints
and cost-efficient considerations.  As such,  where possible,
more efficient defense strategies with a reasonable number of sensors are called for.

In this paper we study zero-dynamics attacks and their detection,  in both a single-loop cone-invariant system and multi-agent system configuration.
We provide a general mathematical view of such attacks on cone-invariant linear time-invariant (LTI) systems.
Zero-dynamics attacks have been previously investigated  in
\cite{pasqualetti2013attack,teixeira2012revealing,chen2016dynamic,teixeira2015secure,milovsevic2018security}.
These  attacks typically attack the system actuators while hidden from the sensors of the system.   As an example,  when the system contains a real non-minimum phase zero $s_0$,  a zero-dynamics attack can be designed in the form \cite{pasqualetti2013attack} of
$$d(t)=-e^{s_0 t}d_0.$$
This attack signal can alter the system's state response but is unobserved from the system output.  
In this paper,  we present a  
general zero-dynamics attack configuration,  where we show that more generally, non-minimum phase complex zeros can also be subjected to zero attacks.

Our more significant developments are devoted to
the detection and prevention of zero-dynamics attacks.
Via a geometrical framework, we show that for LTI cone-invariant systems,  zero-dynamics attacks can be detected, or a system can be made resistive to such attacks, by developing
 an efficient sensor placement strategy.
 While seemingly a mathematical notion,  cone-invariant systems encompass many systems of practical interest. One particular instance corresponds to positive systems,  whose states form an invariant positive cone. 
Positive systems have many applications, e.g.,
 in modeling growth behaviors of economical systems, ecological systems, population dynamics, and more generally, dynamic systems involving positivity constraints \cite{berman1994nonnegative,shafai1997explicit}. 
 Recently, positive systems have been used to model power grids,
traffic flow, communication/computation networks,  as well as
production planning and logistics \cite{rantzer2015scalable}. 
Another case of interest is 
Lyapunov and Riccati differential equations,  which constitute a positive semi-definite  cone in the space of Hermitian matrices \cite{tanaka2013dc}. 
Finally,  cone-invariance with respect to polyhedral cones and  ellipsoidal cones are also of interest and have been considered in \cite{dorea1995design,grussler2014modified}.

In this paper,   for cone-invariant systems and  multi-agent systems \cite{ren2007information,yu2010some},
we develop a unified defense framework against zero-dynamics  attacks. 
This is accomplished by characterizing 
a vulnerable set containing potential positions or nodes where  attacks may get into the system,
 and meanwhile by constructing  a defensive union including feasible sensor placements  based on the
geometric  properties of the cones under consideration.
Our contributions  can be summarized as follows.  In Section \cref{sec:2},  we provide a mathematical
 background on convex cones and systems that are cone-invariant.
In Section \cref{sec:3},   we provide a general characterization of zero-dynamics attacks on LTI systems,  which enables us to fully characterize all possible attacks utilizing the knowledge of a system's non-minimum phase zeros. Based on this characterization, we develop in Section  \cref{sec:4} defense strategies for cone-invariant systems, which amount to placing a minimal number of sensors to counter zero-dynamics attacks on the system's actuators.
 Section \cref{sec:5} then extends the results to multi-agent systems,
 in which undetectable attacks are launched on certain nodes of the multi-agent systems and the defense is implemented at other nodes.
 Section \cref{sec:6}  discusses generalizations to multi-input,  multi-output  (MIMO) systems that are subject to multiple attacks. 
 Section \cref{sec:7} presents numerical studies on a microgrid power network under attack, to illustrate our defense strategies.  This paper concludes in  Section \cref{sec:8}.


\section{Mathematical Preliminaries} \label{sec:2}
\subsection{Convex Cones}
\begin{definition} 
A nonempty set $\mathcal{K} \subset\mathbb{R}^{n}$ is said to be a
convex cone if $\mathcal{K}+\mathcal{K}\subset\mathcal{K}$ and 
$a\mathcal{K}\subset\mathcal{K}$ for all $a\geq 0$.
\end{definition}

We define the set $-\mathcal{K}=\lbrace x : -x\in \mathcal{K}\rbrace$. 
A convex cone $\mathcal{K}$ is said to be pointed if $\mathcal{K}\cap -\mathcal{K}=\left\lbrace 0 \right\rbrace $ and solid
if $\mathrm{int}\, \mathcal{K}$, i.e., the interior of $\mathcal{K}$, is nonempty.
The boundary of $\mathcal{K}$ is denoted as $\partial\mathcal{K}$. Moreover, a closed,  pointed,   and solid convex cone is called a proper cone.
A proper cone induces a partial order in $\mathbb{R}^n$, that is,  $\forall x,y\in \mathbb{R}^n$, $y\preceq x$ if and only if $x-y \in \mathcal{K}$. 
Hereafter, unless otherwise specified, all cones we mention will be   proper cones.

Herein, we briefly survey some typical and representative cones  \cite{berman1994nonnegative,stern1991exponential,boyd2004convex}.
\begin{itemize}
\item A \textit{polyhedral cone} represents the intersection of finitely many closed half spaces, each containing the origin on its boundary.
Every polyhedral cone intersected by $k$ closed half spaces can be described mathematically as
$$\mathcal{K}=\left\lbrace {x\in \mathbb{R}^{n}:Ax\geq_{+} 0, A\in \mathbb{R}^{k\times n}} \right\rbrace,$$
where $\geq_{+}$ represents the element-wise order.
When $k=n$ and $A=I_n$, then $\mathcal{K}=\mathbb{R}_{+}^n$ denotes the nonnegative orthant, which is also called  \textit{positive cone}. 
 
\item A \textit{positive semi-definite (\textit{PSD})  cone} is one that consists of positive semi-definite matrices, i.e., 
$$\mathcal{S}^n_{+}=\left\lbrace X\in \mathcal{S}^n: v^{\top}Xv\geq 0,\forall v\in \mathbb{R}^{n} \right\rbrace, $$ 
where
$\mathcal{S}^n\in \mathbb{R}^{n\times n}$ denotes the set of $n\times n$ real symmetric matrices. A PSD cone $\mathcal{S}^n_{+}$ can be equivalently described in the $n(n+1)/2$-dimensional
vector space, consisting of the elements of all positive semi-definite
matrices on the  main diagonal and above it.
\item An \textit{ellipsoidal cone} is defined as
$$\mathcal{K}=\left\lbrace x \in \mathbb{R}^{n}:x^{\top}Qx\leq 0, l_{n}^{\top}x\geq 0 \right\rbrace,$$
where $Q\in \mathcal{S}^n$ has inertia\footnote{The inertia of a real symmetric matrix is a triple of the number of positive eigenvalues, zero eigenvalues, and negative eigenvalues.} $(n-1,0,1)$ and $l_n$ is a unit eigenvector of $Q$ corresponding to the negative eigenvalue.
When $Q=Q_n=\mathrm{diag}\left\lbrace -1,I_{n-1} \right\rbrace $ and $l_n=e_1$, the first Euclidean coordinate, then $\mathcal{K}$ becomes a \textit{Lorentz cone }(also called ice-cream cone or second-order cone),  denoted as
$$\mathcal{L}_{+}^n=\left\lbrace x\in \mathbb{R}^n: x_1\geq \sqrt{x_2^2+\cdots +x_n^2} \right\rbrace.$$
\end{itemize}
Illustrations of the positive cone  $\mathbb{R}_{+}^3$, the PSD cone $\mathcal{S}_+^2$ with
$$ \begin{bmatrix}
x_1&x_2\\x_2&x_3
\end{bmatrix}\in \mathcal{S}_+^2,
$$ and the Lorentz cone $\mathcal{L}_{+}^3$ are given in Fig. \ref{fig:three cones}.
\begin{figure}[h!]
\begin{center}
\hspace*{-0.5cm}
\vspace{-0.2cm}
\includegraphics[width=10cm, height=8.0cm]{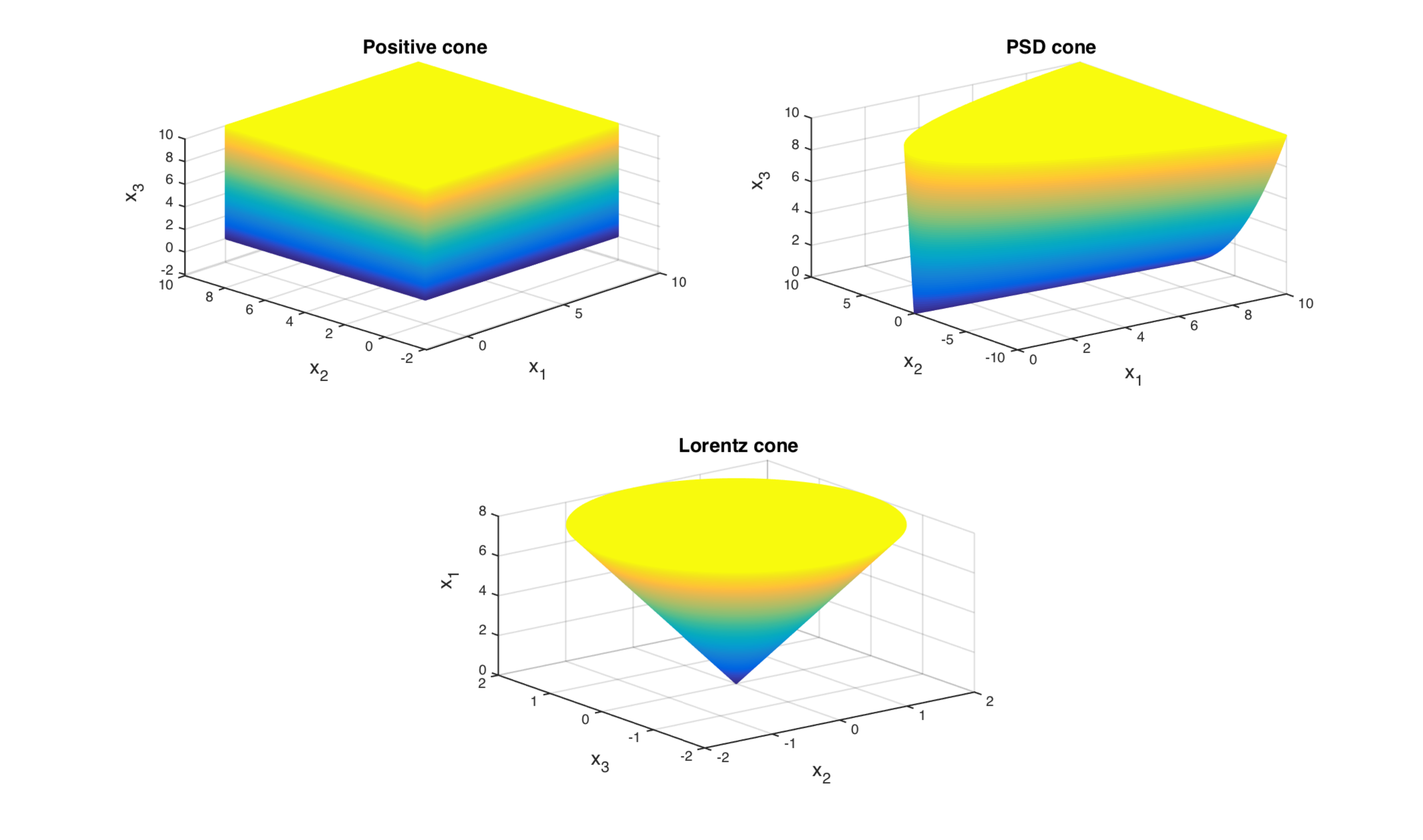}
\end{center}
\caption{Positive cone, PSD cone and Lorentz cone.}
\label{fig:three cones}
\end{figure}

Given a cone $\mathcal{K}$ in $\mathbb{R}^{n}$, its dual cone is defined as
\begin{center}
$\mathcal{K}^{\ast}=\left\lbrace y\in \mathbb{R}^{n}: 
\left\langle x,y \right\rangle \geq 0, \forall x\in \mathcal{K}\right\rbrace $,
\end{center}  
where $\left\langle\, ,\, \right\rangle $ denotes a pre-defined inner product.
A cone $\mathcal{K}$ is said to be self-dual if $\mathcal{K}=\mathcal{K}^{\ast}$. For instance,
$\mathbb{R}_{+}^n$ and $\mathcal{L}_{+}^n$ are self-dual cones endowed with the standard inner product $\left\langle x,y \right\rangle  = x^{\top}y$, and $\mathcal{S}_+^n$ is a self-dual cone endowed with the inner product $\left\langle X,Y \right\rangle =\Tr{(XY)}$. One useful lemma reveals the geometric relationship between each entry in $\mathcal{K}$ and its corresponding $\mathcal{K}^{\ast}$ \cite{berman1994nonnegative}.
\begin{lemma}\label{lemma:cone and its dual}
For a cone $\mathcal{K}$ and its dual cone $\mathcal{K}^{\ast}$, 
it holds that $\left\langle y,x\right\rangle>0 $, $\forall\, y \in \mathrm{int}\,\mathcal{K}^{\ast},\,x \in \mathcal{K}$.
\end{lemma}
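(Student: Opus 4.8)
The plan is to argue by a simple perturbation: exploit the fact that $y$ sits strictly inside $\mathcal{K}^{\ast}$, so there is room to nudge $y$ a little in the direction $-x$ and still remain in $\mathcal{K}^{\ast}$; the defining inequality of the dual cone applied to the nudged vector then forces $\langle y,x\rangle$ to be bounded below by a strictly positive quantity.

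First I would dispose of the trivial case: the asserted strict inequality is meant for $x\neq 0$ (for $x=0$ one has $\langle y,x\rangle=0$), so fix $x\in\mathcal{K}$ with $x\neq 0$ and $y\in\mathrm{int}\,\mathcal{K}^{\ast}$. Since $\mathbb{R}^{n}$ is finite dimensional, its topology is the one induced by the given inner product, and there is a radius $\epsilon>0$ such that the closed ball $\{z:\|z-y\|\leq\epsilon\}$, with $\|\cdot\|$ the norm induced by $\langle\,,\,\rangle$, is contained in $\mathcal{K}^{\ast}$.

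Next I would introduce the perturbed vector $y':=y-\epsilon\,x/\|x\|$. By construction $\|y'-y\|=\epsilon$, hence $y'\in\mathcal{K}^{\ast}$. Applying the definition of the dual cone to the pair $y'\in\mathcal{K}^{\ast}$, $x\in\mathcal{K}$ gives $\langle y',x\rangle\geq 0$, that is,
\[
\langle y,x\rangle-\frac{\epsilon}{\|x\|}\langle x,x\rangle\;\geq\;0,
\]
which rearranges to $\langle y,x\rangle\geq\epsilon\|x\|>0$, the claimed inequality.

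The argument is elementary; the only point requiring a little care is the choice of norm used to define $\mathrm{int}\,\mathcal{K}^{\ast}$. I would remark that since all norms on $\mathbb{R}^{n}$ are equivalent, the interior is independent of this choice, so we are free to use the inner-product norm, which is precisely what makes the perturbation $y-\epsilon x/\|x\|$ interact cleanly with $\langle\,,\,\rangle$; for the PSD-cone instance one simply reads $\|X\|=\sqrt{\Tr(X^{2})}$ and the same computation carries over verbatim. No property of $\mathcal{K}$ beyond its dual $\mathcal{K}^{\ast}$ having nonempty interior (guaranteed here by pointedness of $\mathcal{K}$) is used in this direction.
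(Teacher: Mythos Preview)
Your perturbation argument is correct and is in fact the standard proof of this fact. Note, however, that the paper does not supply its own proof of this lemma: it is quoted from \cite{berman1994nonnegative} and used as background, so there is no in-paper argument to compare against. Your observation that the strict inequality requires $x\neq 0$ is apt; the lemma as stated tacitly assumes $x\in\mathcal{K}\setminus\{0\}$, which is also how it is invoked later in the paper (e.g., with $x=(s_0I-A)^{-1}b\neq 0$).
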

\subsection{Matrices}
In the sequel, matrices characterizing the invariance with respect to given cones will be studied.

\begin{definition}\label{definition:cone-invariant}
For a cone $\mathcal{K}$, a matrix $A\in\mathbb{R}^{n \times n}$ is called a $\mathcal{K}$-invariant matrix if $A\mathcal{K}\subset \mathcal{K}$. 
\end{definition}

Denote by $\pi(\mathcal{K})$ the set of cone-invariant matrices over $\mathcal{K}$.
For example,  any nonnegative matrix $A\in \mathbb{R}^{n\times n}$ is an $\mathbb{R}_{+}^n$-invariant matrix. Any matrix $A\!=\!B\otimes B,\,\forall B\in \mathbb{R}^{n\times n}$, is an $\mathcal{S}_{+}^n$-invariant matrix, where $\otimes$ denotes the Kronecker product. The matrix $A\in \mathbb{R}^{n\times n}$ is an $\mathcal{L}_{+}^n$-invariant matrix if it satisfies the relation
\begin{equation}
a_1^{\top}\in \mathcal{L}_{+}^n,\quad A^{\top}Q_{n}A-\delta Q_n\preceq 0,
\nonumber
\end{equation}
for some $\delta\in \mathbb{R}$, where $a_1$ is the first row of $A$ and $\preceq$ is defined on $\mathcal{S}_{+}^n$ \cite{loewy1975positive}.

\begin{definition} \label{definition:K-positive}
A matrix $A\in \pi(\mathcal{K})$ is said to be $\mathcal{K}$-positive  if $A(\mathcal{K} \setminus \left\lbrace 0\right\rbrace )\subset \mathrm{int} \mathcal{K}$.  
\end{definition}
Denote by $\pi^{+}(\mathcal{K})$  the set of $\mathcal{K}$-positive matrices.

Except for cone-invariant matrices, \textit{cross-positive matrices} (exponential cone-invariant matrices) are also introduced herein \cite{schneider1970cross}.
\begin{definition}\label{definition:Cross-positive}
For a cone $\mathcal{K}$ and its dual cone $\mathcal{K}^{\ast}$, a matrix $A\in\mathbb{R}^{n \times n}$ is called cross-positive over $\mathcal{K}$ if 
for all $x\in \mathcal{K}$, $y\in \mathcal{K}^{\ast}$ such that $\left\langle y,x \right\rangle=0$,  it holds that $\left\langle y,Ax \right\rangle\geq 0 $.
\end{definition}

Denote by $\varpi(\mathcal{K})$ the set of cross-positive matrices over $\mathcal{K}$. It is straightforward that $\pi(\mathcal{K})\subset \varpi(\mathcal{K})$ for a given cone $\mathcal{K}$. For instance, $\varpi(\mathbb{R}_{+}^{n})$ is the set of Metzler matrices \cite{berman1994nonnegative}.  Any matrix $A=B\oplus B,\,\forall B\in \mathbb{R}^{n\times n}$, is cross-positive over $\mathcal{S}_{+}^n$, where $\oplus$ denotes the Kronecker sum.
The matrix $A\in \mathbb{R}^{n\times n}$ is cross-positive over $\mathcal{L}_{+}^n$ if 
\begin{equation}
A^{\top}Q_n+Q_nA-\xi Q_n \preceq 0
\nonumber
\end{equation}
for some $\xi\in \mathbb{R}$ \cite{stern1991exponential}, where $\preceq$ is defined on $\mathcal{S}_{+}^n$.

Different from the conventional irreducibility\footnote{The matrix $A$ is irreducible if no permutation matrix $P$ exists such that $P^{\top}AP=\begin{bmatrix}
A_{11}&A_{12}\\0&A_{22}
\end{bmatrix}$, where $A_{11}$ and $A_{22}$ are square submatrices.} of a matrix, the cone-irreducibility of cross-positive matrices is defined below \cite{berman1994nonnegative,schneider1970cross}.

\begin{definition} \label{definition:irreducible matrix}
A matrix $A\in \varpi(\mathcal{K})$ is said to be $\mathcal{K}$-irreducible if no eigenvector of $A$ lies on the boundary of $\mathcal{K}$.
\end{definition}
Denote by $\varpi'(\mathcal{K})$  the set of $\mathcal{K}$-irreducible cross-positive matrices.

\begin{remark} \label{remark:positive irreducible and irreducible matrix}
From \cite{vandergraft1968spectral}, for any matrix $A\in \varpi(\mathbb{R}_{+}^n)$,  the $\mathbb{R}_{+}^n$-irreducibility and the  conventional matrix irreducibility are equivalent.
\end{remark}

Moreover, when $A\in \pi(\mathcal{K})$, there exists another equivalent definition of
 matrix cone-irreducibility in terms of the \textit{face} of the cone.

\begin{definition}\label{definition:face}
Let $\mathcal{F}\subseteq \mathcal{K}$ be a pointed closed cone. Then $\mathcal{F}$ is called a face of $\mathcal{K}$ if $\forall x\in \mathcal{F}$,
\begin{equation}
0\preceq y\preceq x\Rightarrow y\in \mathcal{F},
\nonumber
\end{equation}
where $\preceq$ is defined on $\mathcal{K}$. 
The face $\mathcal{F}$ is said to be nontrivial if $\mathcal{F}\neq \lbrace 0\rbrace$ and $\mathcal{F}\neq \mathcal{K}$.
\end{definition} 

\begin{definition}\label{definition:irreducible cone-invariant matrix}
A matrix $A\in \pi(\mathcal{K})$ is said to be $\mathcal{K}$-irreducible if the only faces of $\mathcal{K}$ leaving invariant with $A$ are $\lbrace 0\rbrace$ or $\mathcal{K}$ itself.
\end{definition}
Denote by $\pi'(\mathcal{K})$ the set of $\mathcal{K}$-irreducible cone-invariant matrices. 
Then the following relations are known 
\cite{schneider1970cross}.
\begin{lemma}\label{lemma:containment}
$$\pi^{+}(\mathcal{K})\subseteq \pi'(\mathcal{K})\subseteq \pi(\mathcal{K}).$$
\end{lemma}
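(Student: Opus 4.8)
The plan is to establish the two inclusions separately, the second being essentially immediate from the definitions and the first resting on a short geometric observation about faces of a proper cone.

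First, $\pi'(\mathcal{K}) \subseteq \pi(\mathcal{K})$ holds by construction: \cref{definition:irreducible cone-invariant matrix} introduces $\pi'(\mathcal{K})$ as the set of $\mathcal{K}$-irreducible matrices \emph{among} the cone-invariant ones, so every $A \in \pi'(\mathcal{K})$ is already in $\pi(\mathcal{K})$ and there is nothing to prove.

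For $\pi^{+}(\mathcal{K}) \subseteq \pi'(\mathcal{K})$, I would fix $A \in \pi^{+}(\mathcal{K})$ and proceed in two steps. Step one: $A \in \pi(\mathcal{K})$, since for $x \in \mathcal{K}$ either $x = 0$ and $Ax = 0 \in \mathcal{K}$, or $x \neq 0$ and $Ax \in \mathrm{int}\,\mathcal{K} \subseteq \mathcal{K}$ by \cref{definition:K-positive}. Step two: $A$ is $\mathcal{K}$-irreducible. Here the key auxiliary claim is that a nontrivial face $\mathcal{F}$ of $\mathcal{K}$ contains no interior point of $\mathcal{K}$. To see this, suppose $x \in \mathcal{F} \cap \mathrm{int}\,\mathcal{K}$; then for every $y \in \mathcal{K}$ there is $\varepsilon > 0$ with $x - \varepsilon y \in \mathcal{K}$, i.e.\ $0 \preceq \varepsilon y \preceq x$ in the order induced by $\mathcal{K}$, so by \cref{definition:face} we get $\varepsilon y \in \mathcal{F}$ and hence $y \in \mathcal{F}$ since $\mathcal{F}$ is a cone; this forces $\mathcal{F} = \mathcal{K}$, contradicting nontriviality. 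Granting the claim, let $\mathcal{F}$ be any face with $A\mathcal{F} \subseteq \mathcal{F}$ and suppose it is nontrivial; then $\mathcal{F} \neq \{0\}$, so pick $x \in \mathcal{F} \setminus \{0\}$, and $Ax \in \mathrm{int}\,\mathcal{K}$ by $\mathcal{K}$-positivity while $Ax \in \mathcal{F}$ by invariance, contradicting the claim. Thus the only $A$-invariant faces are $\{0\}$ and $\mathcal{K}$, i.e.\ $A \in \pi'(\mathcal{K})$.

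I expect the only non-routine step to be the claim that a nontrivial face carries no interior point; it relies on the fact that an interior element of a solid cone dominates, up to positive scaling, every element of $\mathcal{K}$ in the induced partial order, which is a standard consequence of solidity and the definition of the interior. Everything else is bookkeeping with the definitions.
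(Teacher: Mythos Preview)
Your argument is correct. Both inclusions are handled cleanly: the second is immediate from \cref{definition:irreducible cone-invariant matrix}, and for the first your key geometric claim---that a nontrivial face of a proper cone meets $\mathrm{int}\,\mathcal{K}$ only trivially---is proved correctly via the order-theoretic absorption property of interior points together with \cref{definition:face}. One minor remark: since \cref{definition:K-positive} already stipulates $A \in \pi(\mathcal{K})$ as a precondition for $\mathcal{K}$-positivity, your ``Step one'' is redundant, though harmless.

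As for comparison with the paper: the paper does not actually prove \cref{lemma:containment}. It is stated as a known fact with a citation to Schneider and Vidyasagar (1970), so there is no in-paper argument to match against. Your proof therefore supplies what the paper omits, and does so by a standard and self-contained route.
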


The following lemma,  adapted from \cite{berman1994nonnegative}, on
cross-positive matrices will be useful in the sequel.
\begin{lemma}\label{lemma:eigenvalue and eigenvector of cross-positive}
Let $\mathrm{spectrum}(A)$ be the set of eigenvalues of $A$.

\noindent If $A\in \varpi(\mathcal{K})$, then 

\noindent 1) $\mu = \max \lbrace \mathrm{Re}\, (\delta) : \delta\in\mathrm{spectrum}(A)\rbrace$ is an eigenvalue of $A$, 

\noindent 2) $\mathcal{K}$ and $\mathcal{K}^{\ast}$ contain right- and left-eigenvectors of $A$ associated with the eigenvalue $\mu$.

\noindent If $A\in \varpi'(\mathcal{K})$, then 

\noindent 3) $\mu = \max \lbrace \mathrm{Re}\, (\delta) : \delta\in\mathrm{spectrum}(A)\rbrace$ is a simple eigenvalue of $A$, 

\noindent 4) $\mathrm{int}\,\mathcal{K}$ and $\mathrm{int}\,\mathcal{K}^{\ast}$
contain right- and left-eigenvectors of $A$ associated with the eigenvalue $\mu$.
\end{lemma}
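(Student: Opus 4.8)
The plan is to reduce the entire statement to the classical Perron--Frobenius theory for matrices leaving a proper cone invariant, by passing to the one-parameter semigroup $\{e^{tA}\}_{t\ge0}$. The cornerstone is the exponential characterization of cross-positivity: $A\in\varpi(\mathcal{K})$ if and only if $e^{tA}\in\pi(\mathcal{K})$ for all $t\ge0$, and $A\in\varpi'(\mathcal{K})$ if and only if $e^{tA}\in\pi^{+}(\mathcal{K})$ for all $t>0$ (see \cite{schneider1970cross,berman1994nonnegative}). I would also use two elementary duality facts: $B\in\pi(\mathcal{K})$ implies $B^{\top}\in\pi(\mathcal{K}^{\ast})$, which is immediate from $\langle B^{\top}y,x\rangle=\langle y,Bx\rangle\ge0$ for $x\in\mathcal{K},\,y\in\mathcal{K}^{\ast}$; and $B\in\pi^{+}(\mathcal{K})$ implies $B^{\top}\in\pi^{+}(\mathcal{K}^{\ast})$, the latter because for $y\in\mathcal{K}^{\ast}\setminus\{0\}$ and $x\in\mathcal{K}\setminus\{0\}$ one has $Bx\in\mathrm{int}\,\mathcal{K}$, hence $\langle B^{\top}y,x\rangle=\langle y,Bx\rangle>0$ by Lemma~\ref{lemma:cone and its dual} applied to the pair $(\mathcal{K}^{\ast},\mathcal{K})$.

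With this in hand, fix $t>0$ small enough that the map $\delta\mapsto e^{t\delta}$ is injective on the finite set $\mathrm{spectrum}(A)$ and that $t\,|\mathrm{Im}\,\delta|<2\pi$ for every $\delta\in\mathrm{spectrum}(A)$. Since $\mathrm{spectrum}(e^{tA})=\{e^{t\delta}:\delta\in\mathrm{spectrum}(A)\}$ and $|e^{t\delta}|=e^{t\,\mathrm{Re}\,\delta}$, the spectral radius is $\rho(e^{tA})=e^{t\mu}$ with $\mu=\max\{\mathrm{Re}\,\delta:\delta\in\mathrm{spectrum}(A)\}$, and this modulus is attained only at $\delta=\mu$ (a complex $\delta$ with $\mathrm{Re}\,\delta=\mu$ would force $t\,\mathrm{Im}\,\delta\in2\pi\mathbb{Z}$, i.e.\ $\mathrm{Im}\,\delta=0$). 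In particular $\mu\in\mathrm{spectrum}(A)$. Applying the Perron--Frobenius theorem for proper cones to $e^{tA}\in\pi(\mathcal{K})$ and to $(e^{tA})^{\top}=e^{tA^{\top}}\in\pi(\mathcal{K}^{\ast})$, the number $e^{t\mu}=\rho(e^{tA})$ is an eigenvalue of $e^{tA}$ with a right-eigenvector $x\in\mathcal{K}\setminus\{0\}$ and a left-eigenvector $w\in\mathcal{K}^{\ast}\setminus\{0\}$. To lift these to $A$: decomposing $x=\sum_j x_j$ along the generalized eigenspaces of $A$, the identity $e^{tA}x=e^{t\mu}x$ forces $x_j=0$ whenever $\delta_j\ne\mu$ (as $e^{tA}-e^{t\mu}I$ is invertible on those blocks), so $x$ lies in the generalized eigenspace of $A$ at $\mu$; there $e^{tA}=e^{t\mu}e^{tN}$ with $N:=A-\mu I$ nilpotent, and $e^{tN}x=x$ forces $Nx=0$ upon applying the largest power of $N$ that does not annihilate $x$. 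The same reasoning (with $A^{\top}$ in place of $A$) applies to $w$. This establishes parts 1) and 2).

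For parts 3) and 4), I would run the identical argument with $e^{tA}\in\pi^{+}(\mathcal{K})$ and $e^{tA^{\top}}=(e^{tA})^{\top}\in\pi^{+}(\mathcal{K}^{\ast})\subseteq\pi'(\mathcal{K}^{\ast})$ (Lemma~\ref{lemma:containment}). The Perron--Frobenius theorem for cone-irreducible (indeed cone-positive) matrices then upgrades $e^{t\mu}=\rho(e^{tA})$ to a \emph{simple} eigenvalue of $e^{tA}$, with right-eigenvector in $\mathrm{int}\,\mathcal{K}$ and left-eigenvector in $\mathrm{int}\,\mathcal{K}^{\ast}$. Because $\delta\mapsto e^{t\delta}$ is injective on $\mathrm{spectrum}(A)$, the generalized eigenspace of $A$ at $\mu$ is contained in the one-dimensional generalized eigenspace of $e^{tA}$ at $e^{t\mu}$, so $\mu$ is a simple eigenvalue of $A$; and the interior eigenvectors of $e^{tA}$ and $e^{tA^{\top}}$ are then, trivially, genuine eigenvectors of $A$ and $A^{\top}$ for $\mu$, which gives parts 3) and 4).

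The step I expect to be the real obstacle is the first one---establishing, or carefully attributing, the exponential characterizations $A\in\varpi(\mathcal{K})\Leftrightarrow e^{tA}\in\pi(\mathcal{K})$ and its strict/irreducible refinement. Once these are granted, the remainder is a routine combination of the cone Perron--Frobenius theorem (including the irreducible case already invoked through Lemma~\ref{lemma:containment}) with the elementary Jordan-form bookkeeping needed to transfer spectral data from $e^{tA}$ back to its generator $A$; the small-$t$ choice is precisely what makes that transfer unambiguous.
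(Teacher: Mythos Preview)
The paper does not prove this lemma at all: it is stated as ``adapted from \cite{berman1994nonnegative}'' and used as a black box. So there is no paper proof to compare against; you have supplied an argument where the authors simply cite the literature.

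Your approach---pushing everything through the semigroup $e^{tA}$ and invoking the Krein--Rutman/Perron--Frobenius theorem for $\pi(\mathcal{K})$ and $\pi'(\mathcal{K})$---is the standard and correct route, and your Jordan-block bookkeeping for lifting eigenvectors from $e^{tA}$ back to $A$ is clean. Two points deserve tightening. First, the sentence ``this modulus is attained only at $\delta=\mu$ \ldots\ In particular $\mu\in\mathrm{spectrum}(A)$'' is placed \emph{before} you invoke Perron--Frobenius, but logically it depends on it: you need Krein--Rutman to know that the positive real number $\rho(e^{tA})=e^{t\mu}$ is itself an eigenvalue of $e^{tA}$, and only then does the small-$t$ injectivity force the corresponding $\delta$ to be real and equal to $\mu$. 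Reorder those two sentences. Second, the strict characterization you quote, $A\in\varpi'(\mathcal{K})\Leftrightarrow e^{tA}\in\pi^{+}(\mathcal{K})$ for $t>0$, is stronger than you need and not entirely trivial to locate in exactly that form; what you actually use (and what follows immediately from Definition~\ref{definition:irreducible matrix} plus the fact that $A$ and $e^{tA}$ share eigenvectors for small $t$) is $e^{tA}\in\pi'(\mathcal{K})$, which already delivers the simple eigenvalue and interior eigenvector via the irreducible Perron--Frobenius theorem. Weakening to $\pi'(\mathcal{K})$ removes the only step you flagged as an obstacle.
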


Note that Lemma \ref{lemma:eigenvalue and eigenvector of cross-positive} reduces to the Krein-Rutman theorem if $A\in \pi(\mathcal{K})$,  and furthermore, to the well-known Perron-Frobenius theorem if 
$A\in \pi(\mathbb{R}_{+}^n)$ \cite{berman1994nonnegative}.
\subsection{Cone-Invariant Systems}
Consider a linear time-invariant (LTI)  system with input $u(t)\in \mathbb{R}$, output $y(t)\in \mathbb{R}$, and state $x(t)\in \mathbb{R}^n$,  with realization
\begin{equation}
\begin{aligned}
\dot{x}(t)&=Ax(t)+bu(t),\\
y(t)&=c^{\top}x(t).
\end{aligned}
\label{eq:cis}
\end{equation}
We say that this 
system  is  cone-invariant over a  cone $\mathcal{K}$ if $x(t)\in \mathcal{K}$ and $y(t)\geq 0$ whenever $x(0)\in \mathcal{K}$ and $u(t)\geq 0$ for $t\geq 0$. Equivalently, 
the following lemma reveals the relations of the system's cone-invariance to the triple $(A,b,c^{\top})$ \cite{zheng2016projected,tarbouriech1994positively}.

\begin{lemma}
The LTI system \eqref{eq:cis} is a cone-invariant system over $\mathcal{K}$ if and only if
\begin{center}
 $A\in \varpi(\mathcal{K})$, $b\in \mathcal{K},$  and $c\in \mathcal{K}^{\ast}$.
\end{center}
\label{lemma:CIS}
\end{lemma}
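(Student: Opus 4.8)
The plan is to prove the equivalence in Lemma \ref{lemma:CIS} by characterizing cone-invariance of the solution flow $x(t) = e^{At}x(0) + \int_0^t e^{A(t-\tau)}bu(\tau)\,d\tau$ in terms of conditions on $A$, $b$, and $c$ separately, exploiting the fact that invariance must hold simultaneously for all valid initial states and all nonnegative inputs.

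First I would establish the necessity of the three conditions. Taking $u \equiv 0$, the requirement that $x(0) \in \mathcal{K}$ implies $x(t) = e^{At}x(0) \in \mathcal{K}$ for all $t \geq 0$ means the semigroup $\{e^{At}\}_{t\geq 0}$ leaves $\mathcal{K}$ invariant; by the classical theory of cross-positive matrices (see \cite{schneider1970cross}, and note the exponential cone-invariance terminology introduced after Definition \ref{definition:Cross-positive}), this is equivalent to $A \in \varpi(\mathcal{K})$. For $b \in \mathcal{K}$: take $x(0) = 0$ and $u(\tau) \equiv 1$; then $x(t) = \int_0^t e^{A\tau}b\,d\tau \in \mathcal{K}$, and dividing by $t$ and letting $t \to 0^+$ gives $b = \lim_{t\to 0^+} \frac{1}{t}\int_0^t e^{A\tau}b\,d\tau \in \mathcal{K}$ since $\mathcal{K}$ is closed. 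For $c \in \mathcal{K}^{\ast}$: the condition $y(0) = c^{\top}x(0) \geq 0$ for every $x(0) \in \mathcal{K}$ is precisely the definition of the dual cone, so $c \in \mathcal{K}^{\ast}$.

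For sufficiency, assume $A \in \varpi(\mathcal{K})$, $b \in \mathcal{K}$, $c \in \mathcal{K}^{\ast}$. Then $e^{At}\mathcal{K} \subseteq \mathcal{K}$ for all $t \geq 0$ by the cross-positivity characterization. Given $x(0) \in \mathcal{K}$ and $u(\tau) \geq 0$, the term $e^{At}x(0) \in \mathcal{K}$, and for the convolution term, $e^{A(t-\tau)}b \in \mathcal{K}$ for each $\tau \in [0,t]$ since $b \in \mathcal{K}$; multiplying by the scalar $u(\tau) \geq 0$ keeps it in $\mathcal{K}$ (cone closed under nonnegative scaling), and the integral is a limit of nonnegative combinations of such elements, hence in $\mathcal{K}$ by convexity and closedness. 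Adding the two pieces, $x(t) \in \mathcal{K}$. Finally $y(t) = c^{\top}x(t) \geq 0$ because $c \in \mathcal{K}^{\ast}$ and $x(t) \in \mathcal{K}$.

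The main obstacle — really the only nontrivial ingredient — is the equivalence between semigroup invariance $e^{At}\mathcal{K} \subseteq \mathcal{K}$ for all $t \geq 0$ and the infinitesimal condition $A \in \varpi(\mathcal{K})$. One direction is a short argument: if $e^{At}\mathcal{K} \subseteq \mathcal{K}$ and $x \in \mathcal{K}$, $y \in \mathcal{K}^{\ast}$ with $\langle y, x\rangle = 0$, then $\langle y, e^{At}x\rangle \geq 0$ for all $t \geq 0$, so the right derivative at $t=0$ satisfies $\langle y, Ax\rangle \geq 0$. The converse (that cross-positivity of $A$ forces the semigroup to preserve $\mathcal{K}$) is the deeper Schneider--Vidyasagar-type result; I would simply cite \cite{schneider1970cross} (and \cite{stern1991exponential}) for it rather than reprove it, since the excerpt already invokes this body of theory when defining cross-positive matrices, and also cite \cite{zheng2016projected,tarbouriech1994positively} for the packaged statement. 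A minor technical point to handle carefully is the regularity of $u$: one should state that $u$ is, say, locally integrable (or piecewise continuous) so that the variation-of-constants formula and the limiting arguments above are valid.
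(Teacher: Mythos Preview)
Your proof is correct. The paper itself does not prove Lemma~\ref{lemma:CIS}; it states the result and cites \cite{zheng2016projected,tarbouriech1994positively} for it, so there is no paper-proof to compare against beyond noting that your argument is a standard and valid way to unpack what those references contain. Your decomposition into the three separate conditions via judicious choices of $x(0)$ and $u$, together with the Schneider--Vidyasagar equivalence between $e^{At}\mathcal{K}\subseteq\mathcal{K}$ and $A\in\varpi(\mathcal{K})$, is exactly the expected route.
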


When $\mathcal{K}=\mathbb{R}_{+}^n$, $\mathcal{K}=\mathcal{S}_{+}^n$, or $\mathcal{K}=\mathcal{L}_{+}^n$,  the related cone-invariant system  reduces to a positive system, a dynamic covariance system, or a Lorentz cone-invariant system, respectively, which all possess rich theoretical features and broad applications \cite{rantzer2015scalable,tanaka2013dc,bhattacharya2009cone}. 

In the remainder of the paper,  we study cone-invariant systems over a general cone $\mathcal{K}$.
We examine the behavior of such systems under undetectable attacks
and construct accordingly  a
 defense mechanism to counter the attacks.  We consider
cone-invariant systems that are
Hurwitz stable; in other words, the systems are stable
in the absence of attacks.

\section{Undetectable Attacks} \label{sec:3}
Consider a MIMO LTI system
\begin{equation}
\begin{aligned}
\dot{x}(t)&=Ax(t)+Bd(t),\\
y(t)&=Cx(t),
\end{aligned}
\label{eq:mimolti}
\end{equation}
with output measurement $y(t)\in \mathbb{R}^l$ excited by the initial state $x(0)\in\mathbb{R}^n$ and a malicious attack $d(t)\in \mathbb{R}^m$. 
Throughout this paper,  from \eqref{eq:mimolti},  we address the scenario where the attacks are launched on the system's actuators.
We denote the system output as $y(t)=y(x(0),d(t),t).$
An  intuitive interpretation of undetectable attacks is that such an attack will result in the same output as that
in the absence of attacks 
\cite{pasqualetti2013attack}, and hence cannot be detected from output measurement. 

\begin{definition}\label{definition:undetect attack}
An attack $d(t)\in \mathbb{R}^m$  on the  system \eqref{eq:mimolti} is undetectable if the system output satisfies the relation
\begin{equation}
y(x(0)- \tilde{x}(0),\, d(t),\, t) = 0,\,\forall\, t\geq 0,
\label{eq:undetect attack 1}
\end{equation}
for some $x(0)$ and $\tilde{x}(0)$. 
\end{definition}
Here $x(0)$ and $\tilde{x}(0)$ are understood to be the system's actual initial state and an attack initial condition.
Hence,  the design of an undetectable attack amounts to finding a input $d(t)$ that,   when coordinating with the initial state $x(0)- \tilde{x}(0)$,  yields a zero output.


\begin{lemma}
\label{lemma:real undetectable attack}
For the system \eqref{eq:mimolti},  the attack signal 
\begin{equation}\label{eq:undetectable attack 1}
d(t)=-e^{s_0t}d_0,~s_0\in \mathbb{R},~d_0\in \mathbb{R}^m
\end{equation}
is undetectable if and only if there exists
a vector $\zeta\neq 0$ such that
\begin{equation}\label{eq:rosenbrock}
\begin{bmatrix}
s_0I-A&B\\C&0
\end{bmatrix}\begin{bmatrix}
\zeta\\d_0
\end{bmatrix}=0.
\end{equation}
\end{lemma}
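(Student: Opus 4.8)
The plan is to work directly with the solution of the state equation under the proposed attack signal and show that the undetectability condition in Definition \ref{definition:undetect attack} is equivalent to the Rosenbrock-type condition \eqref{eq:rosenbrock}. First I would substitute $d(t) = -e^{s_0 t} d_0$ into \eqref{eq:mimolti} and recall that, with initial condition $x(0) - \tilde x(0)$, the state trajectory is
\begin{equation}
x(t) = e^{At}(x(0) - \tilde x(0)) - \int_0^t e^{A(t-\tau)} B e^{s_0 \tau} d_0 \, d\tau,
\nonumber
\end{equation}
and the output is $y(t) = C x(t)$.

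For the "if" direction, suppose \eqref{eq:rosenbrock} holds, so $(s_0 I - A)\zeta + B d_0 = 0$ and $C\zeta = 0$. The natural choice is to set $x(0) - \tilde x(0) = \zeta$. Then $Bd_0 = (s_0 I - A)\zeta$, and a standard computation of the convolution integral (using that $e^{A(t-\tau)}(s_0 I - A)e^{s_0\tau}\zeta = -\frac{d}{d\tau}\big(e^{A(t-\tau)}e^{s_0\tau}\zeta\big)$, valid regardless of whether $s_0$ is an eigenvalue of $A$) shows that $x(t) = e^{s_0 t}\zeta$. Hence $y(t) = C e^{s_0 t}\zeta = e^{s_0 t}(C\zeta) = 0$ for all $t \ge 0$, so the attack is undetectable.

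For the "only if" direction, suppose the attack is undetectable, i.e. $y(t) \equiv 0$ for some choice of $x(0) - \tilde x(0) =: \eta$. The idea is to take successive derivatives of $y(t) = C x(t)$ at $t = 0$ and also exploit the structure forced by the exponential input. A cleaner route is to take the Laplace transform: $Y(s) = C\big[(sI - A)^{-1}\eta - (sI - A)^{-1} B \frac{d_0}{s - s_0}\big] \equiv 0$ as a rational function of $s$. Multiplying through and isolating the pole structure, the residue at $s = s_0$ forces a consistency condition; setting $\zeta := (s_0 I - A)^{-1} B d_0 - $ (an appropriate combination coming from $\eta$), or more transparently, evaluating the identity $(sI-A)\eta(s-s_0) - B d_0 = (s-s_0)(sI-A)x(s)$-type relations, one extracts a nonzero vector $\zeta$ with $(s_0 I - A)\zeta = -Bd_0$ and $C\zeta = 0$. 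One must also verify $\zeta \ne 0$: if $\zeta = 0$ then $Bd_0 = 0$, and one argues the attack is then trivial or handles that degenerate case separately (in the zero-dynamics literature $d_0$ is taken nonzero and $B$ injective, so $Bd_0 \ne 0$ forces $\zeta \ne 0$).

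The main obstacle I anticipate is the bookkeeping in the "only if" direction when $s_0$ happens to be an eigenvalue of $A$, since then $(s_0 I - A)$ is singular and one cannot simply invert it; the Laplace-domain argument must be handled carefully so that the identity $Y(s) \equiv 0$ is read as an identity of polynomials after clearing denominators, and the vector $\zeta$ must be identified from the coefficient matching rather than by direct inversion. I would therefore phrase the argument via the polynomial identity obtained by multiplying $Y(s)$ by the characteristic polynomial of $A$ and by $(s - s_0)$, which keeps everything well-defined, and then recover \eqref{eq:rosenbrock} by evaluating appropriately at $s = s_0$. The "if" direction, by contrast, is a routine verification and should be dispatched quickly.
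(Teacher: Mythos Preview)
The paper does not give a standalone proof of this lemma; it is stated as essentially known, with the remark that the attack is undetectable if and only if $s_0$ is a real transmission zero. The only related argument appears in the proof of Theorem~\ref{theorem:nonzero imginary part} (the complex-$s_0$ generalization), and that proof establishes only the ``if'' direction: the authors take Laplace transforms of $d(t)$ and $y(t)$, substitute the relations coming from \eqref{eq:rosenbrock}, and manipulate until $Y(s)$ collapses to $0$.

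Your ``if'' direction is correct but proceeds differently, in the time domain: with $x(0)-\tilde x(0)=\zeta$ you verify directly that $x(t)=e^{s_0 t}\zeta$ solves the forced state equation, so $y(t)=e^{s_0 t}C\zeta=0$. This is at least as clean as the paper's Laplace computation and sidesteps any issue with $s_0$ being a pole of $(sI-A)^{-1}$. Your ``only if'' direction goes beyond what the paper actually writes; the Laplace-domain plan is sound in outline, but the sentence where you ``extract a nonzero vector $\zeta$'' is left vague, and the degenerate case $\zeta=0$ (equivalently $Bd_0=0$) is not fully resolved. If you want a self-contained finish, clearing denominators in $Y(s)\equiv 0$ gives a polynomial identity $C\,\mathrm{adj}(sI-A)\big[(s-s_0)\eta - Bd_0\big]\equiv 0$; evaluating at $s=s_0$ and unwinding yields the Rosenbrock relation, or one can simply invoke the classical equivalence between invariant zeros and rank drop of the Rosenbrock matrix, which is how the paper implicitly treats it.
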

Here the attack initial condition can be obtained by $\tilde{x}(0)=x(0)-\zeta$.
It is rather clear that the attack \eqref{eq:undetectable attack 1}  is
undetectable if and only if $s_0$ is a real transmission zero of the system, and hence is also referred to as a zero-dynamics attack.
Then a natural question is how to construct a real attack signal cooperated with a real attack initial condition to attack the system with complex transmission zeros?
This is answered in the following theorem.

\begin{theorem}\label{theorem:nonzero imginary part}
For the system \eqref{eq:mimolti},    the attack signal
\begin{equation}
d(t)=-l_1{\rm Re}\left(e^{s_0t}d_0 \right)-l_2{\rm Im}\left( e^{s_0t}d_0 \right), ~s_0\in \mathbb{C},~d_0\in\mathbb{C}^m,~\forall\, l_1,\,l_2\in \mathbb{R},
\label{eq:undetectable attack 2}
\end{equation}
coordinated with
\begin{equation}
\tilde{x}(0)=x(0)-\left( l_1{\rm Re}\left( \zeta \right) +l_2{\rm Im} \left( \zeta\right)\right)
\end{equation}
is undetectable if and only if 
 there exists
a vector $\zeta\neq 0$ such that 
 the equation \eqref{eq:rosenbrock} is satisfied.
\end{theorem}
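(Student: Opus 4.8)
The plan is to read off an explicit complex zero‑output trajectory from the two block rows of \eqref{eq:rosenbrock}, recover the real attack \eqref{eq:undetectable attack 2} from it by taking real and imaginary parts, and obtain the converse by complexifying the undetectability hypothesis and matching modes via a Laplace‑transform argument.

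\emph{Sufficiency.} The two block rows of \eqref{eq:rosenbrock} read $s_0\zeta = A\zeta - Bd_0$ and $C\zeta = 0$. Hence the complex‑valued function $x_c(t) = e^{s_0 t}\zeta$ satisfies $\dot x_c(t) = s_0 e^{s_0 t}\zeta = e^{s_0 t}(A\zeta - Bd_0) = Ax_c(t) + B(-e^{s_0 t}d_0)$, i.e.\ it is the state response of \eqref{eq:mimolti} to the complex input $-e^{s_0 t}d_0$ from the complex initial state $\zeta$, and its output is $Cx_c(t) = e^{s_0 t}C\zeta \equiv 0$. Since $A$, $B$, $C$ are real, splitting $x_c$, $\zeta$, and $-e^{s_0 t}d_0$ into real and imaginary parts shows that $\mathrm{Re}(x_c)$ and $\mathrm{Im}(x_c)$ are state responses to the inputs $-\mathrm{Re}(e^{s_0 t}d_0)$ and $-\mathrm{Im}(e^{s_0 t}d_0)$ from the initial states $\mathrm{Re}(\zeta)$ and $\mathrm{Im}(\zeta)$, each with identically zero output. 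By linearity, $l_1\mathrm{Re}(x_c) + l_2\mathrm{Im}(x_c)$ is the state response of \eqref{eq:mimolti} to the input \eqref{eq:undetectable attack 2} started from $l_1\mathrm{Re}(\zeta) + l_2\mathrm{Im}(\zeta) = x(0) - \tilde x(0)$, with zero output for all $t\ge 0$. Definition~\ref{definition:undetect attack} then gives undetectability, and this is exactly the pattern already used in Lemma~\ref{lemma:real undetectable attack} for real $s_0$.

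\emph{Necessity.} Conversely, assume \eqref{eq:undetectable attack 2} with the stated coordination is undetectable. Applying Definition~\ref{definition:undetect attack} with $(l_1,l_2)=(1,0)$ and $(0,1)$ and forming the complex combination of the two zero‑output trajectories produces a complex state response $x_c$ of \eqref{eq:mimolti} with $x_c(0)=\zeta$, input $-e^{s_0 t}d_0$, and $Cx_c(t)\equiv 0$. Laplace‑transforming and using $\widehat{(-e^{s_0t}d_0)}(s) = -d_0/(s-s_0)$ yields the rational identity $C(sI-A)^{-1}\big[(s-s_0)\zeta - Bd_0\big]\equiv 0$. Writing $(s-s_0)\zeta - Bd_0 = (sI-A)\zeta - r$ with $r := (s_0 I - A)\zeta + Bd_0$ turns this into $C\zeta \equiv C(sI-A)^{-1}r$; letting $s\to\infty$ forces $C\zeta = 0$, and hence $C(sI-A)^{-1}r\equiv 0$, i.e.\ $r$ lies in the unobservable subspace of $(C,A)$. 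In the case relevant to these attacks, $\mathrm{Re}(s_0)\ge 0$, so Hurwitz stability makes $s_0 I - A$ invertible; since the unobservable subspace is $A$‑invariant it is invariant under $(s_0 I - A)^{-1}$ (a polynomial in $A$ by Cayley--Hamilton), so replacing $\zeta$ by $\zeta - (s_0 I - A)^{-1}r$ produces a vector that together with $d_0$ satisfies \eqref{eq:rosenbrock} exactly, and it is nonzero once the degenerate case $Bd_0 = 0$ (where the attack input has no dynamical effect and the statement is vacuous) is set aside.

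\emph{Main obstacle.} The sufficiency direction is essentially bookkeeping once the trajectory $e^{s_0t}\zeta$ is identified and the realness of $(A,B,C)$ is exploited. The delicate point is necessity: a zero‑output response driven by a complex exponential need not itself be a ``pure'' zero‑dynamics trajectory of the form $e^{s_0t}\zeta$ when the realization is non‑minimal, and one must both separate the $e^{s_0t}$ mode from the $e^{\lambda t}$, $\lambda\in\mathrm{spectrum}(A)$, modes and run the $A$‑invariance argument above to extract an exact solution of \eqref{eq:rosenbrock}. Under a standing minimality assumption this collapses to reading \eqref{eq:rosenbrock} directly off $C\zeta = 0$ and $r=0$.
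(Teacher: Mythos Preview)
Your sufficiency argument is correct and takes a genuinely different route from the paper. The paper writes $s_0=\alpha+\beta j$, $d_0=d_1+d_2 j$, $\zeta=x_1+x_2 j$, expands the attack into real sinusoids, takes Laplace transforms, and verifies $Y(s)\equiv 0$ by an explicit algebraic cancellation using the real and imaginary parts of \eqref{eq:rosenbrock}. You instead recognize $e^{s_0 t}\zeta$ as a complex zero-output trajectory of the real system and pass to real and imaginary parts only at the end; this avoids the entire Laplace computation and makes the role of \eqref{eq:rosenbrock} transparent. Both arguments are valid; yours is shorter and conceptually cleaner, while the paper's has the virtue of being self-contained at the level of real signals.

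On necessity: the paper's proof in fact establishes only the ``if'' direction, despite the ``if and only if'' in the statement. Your argument therefore goes further than the paper, but two caveats are worth flagging. First, you invoke Hurwitz stability of $A$ and $\mathrm{Re}(s_0)\ge 0$ to invert $s_0I-A$; neither hypothesis appears in Theorem~\ref{theorem:nonzero imginary part} as stated (they enter only later, for cone-invariant systems), so without them $(s_0I-A)^{-1}$ need not exist. Second, your replacement $\zeta\mapsto \zeta-(s_0I-A)^{-1}r$ yields \emph{some} vector satisfying \eqref{eq:rosenbrock}, but not necessarily the $\zeta$ that appears in the coordination formula; whether this suffices depends on how the quantifier on $\zeta$ in the statement is read. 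Your closing remark that minimality collapses $r$ to zero is the clean resolution, and is the tacit assumption under which the biconditional is unambiguous.
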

Note that for a zero $s_0 \in \mathbb{R}$,  the undetectable attack \eqref{eq:undetectable attack 2}  reduces to that in \eqref{eq:undetectable attack 1}.   
\begin{proof}
Denote $d_0=d_1+d_2j$, $s_0=\alpha+\beta j$ and $\zeta=x_1+x_2j$. The attack signal \eqref{eq:undetectable attack 2}  can thus be rewritten as 
$$
\begin{aligned}
d(t)&=e^{\alpha t}( sin(\beta t)l_1d_2-\cos(\beta t)l_1d_1-sin(\beta t)l_2d_1-cos(\beta t)l_2d_2)\\
&=e^{\alpha t}( sin(\beta t)(l_1d_2-l_2d_1)-\cos(\beta t)(l_1d_1+l_2d_2)).
\end{aligned}$$
For the initial state,  it holds that $x(0)-\tilde{x}(0)=l_1x_1+l_2x_2$.
Also,   the real and imaginary parts of the equation \eqref{eq:rosenbrock}  are obtained as
\begin{equation}\label{eq:sI-A}
\begin{aligned}
(\alpha I-A) x_1-\beta x_2+Bd_1&=0,
\\
(\alpha I-A) x_2+\beta x_1+Bd_2&=0,
\end{aligned}
\end{equation}
where
$Cx_1=0$,  and $Cx_2=0.$
Denote the Laplace transforms of $d(t)$ and $y(t)$ by $D(s)$ and $Y(s)$.  Therefore,
\begin{equation}
D(s)=\dfrac{\beta}{(s-\alpha)^{2}+\beta^{2}}(l_1d_2-l_2d_1)-\dfrac{s-\alpha}{(s-\alpha)^{2}+\beta ^{2}}(l_1d_1+l_2d_2),
\nonumber
\end{equation}
and
\begin{equation}
\begin{aligned}
Y(s)&=C(sI-A)^{-1}BD(s)+C(sI-A)^{-1}\left( x(0)-\tilde{x}(0)\right) 
\\
&=C(sI-A)^{-1}B\dfrac{(\beta l_1-(s-\alpha)l_2)}{(s-\alpha)^{2}+\beta^{2}}d_2 
-C(sI-A)^{-1}B\dfrac{(\beta l_2+(s-\alpha)l_1)}{(s-\alpha)^{2}+\beta^{2}}d_1
\\
&~~~~~~+C(sI-A)^{-1}(l_1x_1+l_2x_2).
\end{aligned}
\nonumber
\end{equation}
In addition, together with  \eqref{eq:sI-A},  it follows that
\begin{equation}
\begin{aligned}
Y(s)&=-C(sI-A)^{-1}\dfrac{(\beta l_1-(s-\alpha)l_2)}{(s-\alpha)^{2}+\beta^{2}}((\alpha I-A) x_2+\beta x_1)
\\
&~~~~~~~~~~+C(sI-A)^{-1}\dfrac{(\beta l_2+(s-\alpha)l_1)}{(s-\alpha)^{2}+\beta^{2}}((\alpha I-A) x_1-\beta x_2)
\\
&~~~~~~~~~~+C(sI-A)^{-1}\dfrac{(s-\alpha)^{2}+\beta^{2}}{(s-\alpha)^{2}+\beta^{2}}(l_1x_1+l_2x_2)
\\
&=C(sI-A)^{-1} \dfrac{((s-\alpha)l_1+\beta l_2)}{(s-\alpha )^{2}+\beta ^{2}}(sI-A)x_1 
\\
&~~~~~~~~~~+C(sI-A)^{-1}\dfrac{((s-\alpha)l_2-\beta l_1)}{(s-\alpha)^{2}+\beta^{2}}(sI-A)x_2 
\\
&=\dfrac{((s-\alpha)l_1+\beta l_2)}{(s-\alpha)^{2}+\beta ^{2}}Cx_1 
+\dfrac{((s-\alpha)l_2-\beta l_1) }{(s-\alpha)^{2}+\beta^{2}}Cx_2
=0.
\end{aligned}
\nonumber
\end{equation}
We conclude that $y(x(0)-\tilde{x}(0),d(t),t)=0$,  and the proof is completed.
\end{proof}

We conclude this section by commenting on zero-dynamics attacks on cone-invariant systems. Note that when 
restricting  general LTI systems to  cone-invariant systems over a cone $\mathcal{K}$,   because of the  cone-invariance additionally required,  
the definition of undetectable attacks,   in comparison to Definition \ref{definition:undetect attack},  should be modified accordingly as follows. Specifically,  we say  that
an attack $d(t)\in \mathbb{R}$ is undetectable on the cone-invariant system \eqref{eq:cis} over a cone $\mathcal{K}$ if  Definition \ref{definition:undetect attack} is satisfied, 
and additionally the  state in  \eqref{eq:cis} satisfies the invariance condition
$x(t)\in \mathcal{K},\,\forall\, t\geq 0.$
In other words, for cone-invariant systems,  not only the attack cannot be observed at the output of the system,  but also the cone-invariance property of the system must be preserved.   Clearly,  both the attacks 
$d(t)$  in  \eqref{eq:undetectable attack 1}  and \eqref{eq:undetectable attack 2}  are unobserved at the output.  To ensure that the condition $x(t)\in \mathcal{K}$ is valid,   it is necessary that $d(t)\geq 0,\forall t\geq 0$,  which
can be satisfied for the attack $d(t)$  in  \eqref{eq:undetectable attack 1}  by letting $d_0< 0$.  On the other hand,  the attack $d(t)$ in \eqref{eq:undetectable attack 2} cannot be kept nonnegative  since it is a linear combination of sinusoidal signals.   Consequently, 
for a stable cone-invariant system \eqref{eq:cis},  only the attack $d(t)$ in \eqref{eq:undetectable attack 1}  is still undetectable.

\section{Sensor Placement}  \label{sec:4}
In light of the discussion at the end of section \ref{sec:3},  in this section,  we develop a detection strategy  to cope with  undetectable attacks $d(t)=-e^{s_0t}d_0,$  on the non-minimum phase zeros $s_0\geq 0$ of stable cone-invariant systems.

\subsection{Protection via Sensor Placement} 
We consider a single zero-dynamics attack,  that is, the attack
enters the system  through one actuator or one node. 
 Here without loss of generality,   we take $b \in \left\lbrace e_i,i=1,\ldots,n\right\rbrace $ to represent  the accessibility of the attack to the system,  where $e_i$ is the $i$th Euclidean coordinate.
To characterize the available attacked positions in cone-invariant systems, we make the following assumption.

\begin{assumption}\label{assumption:the non-empty attack set}
The set $\Lambda_{\mathcal{K}}=\left\lbrace e_i: e_i\in \mathcal{K} \right\rbrace $ is nonempty.
\end{assumption}
The set $\Lambda_{\mathcal{K}}$ characterizes all accessible positions for potential attackers.
The cardinality of $\Lambda_{\mathcal{K}}$ varies with the choice of $\mathcal{K}$. For example, 
\begin{equation}\label{eq:Lambda cone}
\begin{aligned}
&\qquad\qquad\qquad  \Lambda_{\mathbb{R}_{+}^n}= \left\lbrace e_i,i=1,\ldots,n \right\rbrace,\\
&\Lambda_{\mathcal{S}_{+}^n}=\left\lbrace \mathrm{vec}(\mathrm{diag}(e_i)),i=1,\ldots,n \right\rbrace,~ \mathrm{and}~\Lambda_{\mathcal{L}_{+}^n}= \lbrace e_1\rbrace.
\end{aligned}
\end{equation}
Clearly,   for these cones,  $\Lambda_{\mathcal{K}}$ is not empty. Analogously, we also assume that the set $\left\lbrace e_j: e_j\in \mathcal{K}^{\ast} \right\rbrace $ is nonempty.

In the rest of this paper,  we seek  defense approaches by designing $c$ to prevent undetectable attacks.  In general,   
adding a large number of sensors 
 implies an effective defensive strategy.
More precisely, by the PBH test \cite{zhou1996robust}, the system's observability implies that $x(0)=\tilde{x}(0)$,  and
hence $bd(t)=0,\forall\, t\geq 0$,  that is,  the undetectable attack is non-existent.
Nevertheless,  this approach is clearly not a cost-efficient way,  especially 
for large-scale and distributed systems.  
Recalling from Lemma \ref{lemma:real undetectable attack} the condition for the attack
 $d(t)=-e^{s_0t}d_0$,   
it is clear that 
a defensive strategy is successful if it can block all non-minimum phase zeros of  the given system.

\begin{definition}\label{definition:successful defense strategy}
We say that a defense strategy, namely a design of $c$,  is (almost) successful if one of the following conditions holds: 
\end{definition}
\begin{enumerate}[(i)]
\item $\rank\left( \begin{bmatrix}
s_0I-A&b\\c^{\top}&0
\end{bmatrix}\right)=n+1 $,
\item $\rank\left( \begin{bmatrix}
s_0I-A&b\\c^{\top}&0
\end{bmatrix}\right)< n+1$ \textit{and} $d_0=0$,
\end{enumerate}
\textit{for each $s_0$ with ($s_0>0$) $s_0\geq 0$.  Here $\textbf{R}$   denotes the rank of a matrix. }

For $s_0\geq 0$,  the condition (i) indicates that by designing an appropriate $c$,
 $s_0$ is not a non-minimum phase zero of system, while the condition (ii), via PBH test,  means that $s_0$ as a non-minimum phase zero is also an unobservable mode of the system. 
For $s_0=0$  with $d_0\neq 0$,  
the corresponding attack $d(t)=d_0$ as a constant signal cannot make the states of a stable system divergent.  Hence we call a strategy is almost successful if the above conditions can be satisfied only for $s_0>0$. 

\subsection{Sensor Placement Strategy}
In this section, we concentrate on the case with a single measurement, i.e., $c\in \lbrace e_i, i=1,\ldots,n \rbrace$, as a defense strategy to achieve, apparently, the minimal cost. Then we obtain a collective set containing all $c$ that can enable a successful defense. 

\begin{theorem}\label{theorem:sensor sets in cis}
Given a stable cone-invariant system \eqref{eq:cis} over a cone $\mathcal{K}$, 
for an undetectable attack \eqref{eq:undetectable attack 1}  with $\forall b\in \Lambda_{\mathcal{K}}$,
the defense strategy with $ \forall c\in \Pi $
 is successful, where $\Pi=\left\lbrace e_j: e_j\in \mathrm{int}\,\mathcal{K}^{\ast}\right\rbrace $. Moreover, if $A\in \varpi'(\mathcal{K})$,   then $\Pi=\left\lbrace e_j:e_j\in \mathcal{K}^{\ast}\right\rbrace$.
\end{theorem}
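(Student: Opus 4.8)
The plan is to translate ``successful defense'' into the requirement that the scalar transfer value $g(s_0):=c^\top(s_0 I-A)^{-1}b$ be nonzero for every $s_0\ge 0$, and then to use cross-positivity together with Lemma~\ref{lemma:cone and its dual} to make $g$ strictly positive there. Since the system is Hurwitz, $s_0 I-A$ is invertible for all $s_0\ge 0$, so a Schur-complement factorization of the matrix appearing in Definition~\ref{definition:successful defense strategy} gives
\[
\rank\begin{bmatrix} s_0 I-A & b\\ c^\top & 0\end{bmatrix}=n+\rank\!\bigl(-c^\top(s_0 I-A)^{-1}b\bigr),
\]
which equals $n+1$ exactly when $g(s_0)\neq 0$, the correction term being a scalar. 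Moreover, whenever this matrix drops rank, any nonzero null vector $[\zeta^\top,d_0]^\top$ must have $d_0\neq 0$ (otherwise $(s_0 I-A)\zeta=0$ forces $\zeta=0$), so by Lemma~\ref{lemma:real undetectable attack} an undetectable attack with $d_0\neq 0$ exists and condition~(ii) of Definition~\ref{definition:successful defense strategy} fails. Hence the defense is successful if and only if $g(s_0)\neq 0$ for every $s_0\ge 0$, and it suffices to establish this.

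\textbf{Positivity of $g$.} Hurwitz stability yields the absolutely convergent representation $(s_0 I-A)^{-1}=\int_0^\infty e^{-s_0 t}e^{tA}\,dt$ for $s_0\ge 0$, hence $g(s_0)=\int_0^\infty e^{-s_0 t}\,c^\top e^{tA}b\,dt$. Because $A\in\varpi(\mathcal{K})$, the defining property of cross-positive (exponentially cone-invariant) matrices gives $e^{tA}\in\pi(\mathcal{K})$ for all $t\ge 0$; since $b\in\Lambda_{\mathcal{K}}\subset\mathcal{K}\setminus\{0\}$ and $e^{tA}$ is invertible, $e^{tA}b\in\mathcal{K}\setminus\{0\}$ for every $t\ge 0$. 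If $c\in\Pi$, that is $c=e_j\in\mathrm{int}\,\mathcal{K}^{\ast}$, then Lemma~\ref{lemma:cone and its dual} forces $c^\top e^{tA}b>0$ for \emph{all} $t\ge 0$, so the integrand is strictly positive and $g(s_0)>0$ for every $s_0\ge 0$; this proves the first assertion (and, since $s_0=0$ is included, actually gives the stronger ``successful'', not merely ``almost successful'', conclusion).

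\textbf{The irreducible case.} When $A\in\varpi'(\mathcal{K})$ it suffices to strengthen $e^{tA}b\in\mathcal{K}\setminus\{0\}$ to $e^{tA}b\in\mathrm{int}\,\mathcal{K}$ for $t>0$: then Lemma~\ref{lemma:cone and its dual}, applied to the dual pair $(\mathcal{K}^{\ast},\mathcal{K})$ (legitimate because $\mathcal{K}^{\ast\ast}=\mathcal{K}$), yields $c^\top e^{tA}b>0$ for all $t>0$ and every $c=e_j\in\mathcal{K}^{\ast}\setminus\{0\}$, whence $g(s_0)>0$ again. The required strengthening, $e^{tA}\in\pi^{+}(\mathcal{K})$ for $t>0$ whenever $A\in\varpi'(\mathcal{K})$, is the continuous-time analogue of ``irreducible Metzler $\Rightarrow$ $e^{tA}$ entrywise positive for $t>0$'', and I expect proving it to be the main obstacle of the argument.

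\textbf{An internal route for the last step.} If one prefers to avoid that external fact: suppose $g(s_0)=0$ for some $s_0\ge 0$; then $c^\top e^{tA}b\equiv 0$, so $c^\top A^k b=0$ for all $k\ge 0$, and therefore $\mathcal{C}:=\overline{\mathrm{cone}}\{e^{tA}b:t\ge 0\}$ is a nonzero, $e^{sA}$-invariant ($s\ge 0$) subcone of $\mathcal{K}$ lying in the proper subspace $\{z:c^\top z=0\}$. Regarding $\mathcal{C}$ as a proper cone in its linear span $W$ (which is $A$-invariant) and applying Lemma~\ref{lemma:eigenvalue and eigenvector of cross-positive} to $A|_W\in\varpi(\mathcal{C})$ produces a real eigenvector $v^{\ast}$ of $A$ with $v^{\ast}\in\mathcal{C}\subset\mathcal{K}$ and $c^\top v^{\ast}=0$; by Lemma~\ref{lemma:cone and its dual} such a $v^{\ast}$ lies on $\partial\mathcal{K}$, contradicting Definition~\ref{definition:irreducible matrix}. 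The remaining details (absolute convergence of the integral, nonemptiness of $\mathrm{int}\,\mathcal{K}^{\ast}$ so that $\Pi$ makes sense) are routine.
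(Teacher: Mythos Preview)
Your proof is correct. The reduction to $g(s_0)=c^\top(s_0I-A)^{-1}b\neq 0$ and the first (reducible) part coincide with the paper's argument; your integral representation of the resolvent is exactly the sufficiency direction of the paper's Lemma~\ref{lemma:(sI-A)cone-invariance}.

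For the irreducible case the two proofs diverge. The paper works entirely at the resolvent level: it argues by contradiction that if $(\bar s_0I-A)^{-1}\notin\pi^+(\mathcal{K})$ for some $\bar s_0\ge 0$, then one can build a nontrivial face $\mathcal{F}_w$ of $\mathcal{K}$ that is invariant under $(\bar s_1 I-A)^{-1}$, which via Definitions~\ref{definition:irreducible cone-invariant matrix} and~\ref{definition:irreducible matrix} forces an eigenvector of $A$ onto $\partial\mathcal{K}$. Your Route~B instead works at the semigroup level: from $g(s_0)=0$ and nonnegativity of the integrand you get $c^\top e^{tA}b\equiv 0$, then apply the Krein--Rutman part of Lemma~\ref{lemma:eigenvalue and eigenvector of cross-positive} to $A|_W$ acting on the orbit cone $\mathcal{C}=\overline{\mathrm{cone}}\{e^{tA}b:t\ge 0\}$ to extract a boundary eigenvector directly. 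Both routes end at the same contradiction with Definition~\ref{definition:irreducible matrix}, but yours avoids the face machinery at the price of checking that $\mathcal{C}$ is proper in its span $W$ and that $A|_W\in\varpi(\mathcal{C})$ (both routine, but worth spelling out). The paper's route, in return, produces the reusable standalone fact $(s_0I-A)^{-1}\in\pi^+(\mathcal{K})$ for all $s_0\ge 0$, which is invoked again in the second-order multi-agent proofs. Your Route~A (that $A\in\varpi'(\mathcal{K})$ implies $e^{tA}\in\pi^+(\mathcal{K})$ for $t>0$) is true but, as you acknowledge, left unproved; Route~B renders it unnecessary.
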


To establish Theorem \ref{theorem:sensor sets in cis},  we need the following lemma.

\begin{lemma}\label{lemma:(sI-A)cone-invariance}
Given a Hurwitz stable matrix $A$ and a cone $\mathcal{K}$, for each $s_0\geq 0$, $(s_0I-A)^{-1}$ is a  cone-invariant matrix, i.e., $(s_0I-A)^{-1}\in \pi(\mathcal{K})$ if  and only if $A$ is a  cross-positive matrix over $\mathcal{K}$, i.e., $A\in \varpi(\mathcal{K})$.
\end{lemma}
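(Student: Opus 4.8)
The plan is to prove both implications by linking the resolvent $(s_0I-A)^{-1}$ to the matrix exponential $e^{tA}$. The main tool is the classical characterization of cross-positivity (see \cite{schneider1970cross,berman1994nonnegative,stern1991exponential}): $A\in\varpi(\mathcal{K})$ if and only if $e^{tA}\in\pi(\mathcal{K})$ for all $t\ge 0$, together with the Laplace-transform identity $(s_0I-A)^{-1}=\int_0^\infty e^{-s_0t}e^{tA}\,dt$, which is valid for every $s_0\ge 0$ precisely because $A$ is Hurwitz, so that $\|e^{tA}\|\le Me^{-\alpha t}$ for some $M,\alpha>0$ and the integral converges absolutely.

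For the ``if'' direction, suppose $A\in\varpi(\mathcal{K})$, hence $e^{tA}\mathcal{K}\subseteq\mathcal{K}$ for all $t\ge 0$. Fix $s_0\ge 0$ and $x\in\mathcal{K}$. For each $t\ge 0$ the vector $e^{-s_0t}e^{tA}x$ is a nonnegative multiple of $e^{tA}x\in\mathcal{K}$ and thus lies in $\mathcal{K}$. The integral $\int_0^\infty e^{-s_0t}e^{tA}x\,dt$ is a limit of Riemann sums, each of which is a finite nonnegative combination of vectors in $\mathcal{K}$ and therefore lies in $\mathcal{K}$; since $\mathcal{K}$ is closed, the limit $(s_0I-A)^{-1}x$ lies in $\mathcal{K}$ as well. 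As $x\in\mathcal{K}$ was arbitrary, $(s_0I-A)^{-1}\in\pi(\mathcal{K})$.

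For the ``only if'' direction, suppose $(s_0I-A)^{-1}\in\pi(\mathcal{K})$ for every $s_0\ge 0$, and verify Definition~\ref{definition:Cross-positive} directly. Let $x\in\mathcal{K}$ and $y\in\mathcal{K}^{\ast}$ with $\langle y,x\rangle=0$. For $s>0$ set $z_s:=s(sI-A)^{-1}x$; then $z_s\in\mathcal{K}$ since $(sI-A)^{-1}\in\pi(\mathcal{K})$ and $s>0$, so $\langle y,z_s\rangle\ge 0$. From $(sI-A)z_s=sx$ one gets $z_s=x+\tfrac1s Az_s$, hence $\langle y,z_s\rangle=\langle y,x\rangle+\tfrac1s\langle y,Az_s\rangle=\tfrac1s\langle y,Az_s\rangle$, which forces $\langle y,Az_s\rangle\ge 0$ for every $s>0$. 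Since $s(sI-A)^{-1}=(I-A/s)^{-1}\to I$ as $s\to\infty$, we have $z_s\to x$, and passing to the limit gives $\langle y,Ax\rangle\ge 0$. Thus $A\in\varpi(\mathcal{K})$.

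I expect the ``if'' direction to be the delicate point: one must justify that the (Riemann/Bochner) integral of a $\mathcal{K}$-valued map stays inside the closed convex cone $\mathcal{K}$, and one must invoke the exponential characterization of cross-positivity. If a self-contained argument avoiding $e^{tA}$ is preferred, the ``if'' direction can be obtained by the same separation-and-limiting idea used in the converse, applied to a minimizer of $\langle y,\cdot\rangle$ along the trajectory $e^{tA}x$; but this essentially reproves the exponential characterization, so citing it is the economical route.
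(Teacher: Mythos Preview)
Your proof is correct and follows essentially the same route as the paper. For sufficiency, the paper also integrates $e^{-s_0 t}e^{tA}x(0)$ (phrased as the Laplace transform of the trajectory of $\dot x=Ax$, invoking Lemma~\ref{lemma:CIS} rather than citing the exponential characterization directly); for necessity, the paper differentiates $f(k)=\langle y,(I-kA)^{-1}x\rangle$ at $k=0$, which is your limit $s\to\infty$ written in the variable $k=1/s$.
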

\begin{proof}
Sufficiency.  
Consider a stable system $\dot{x}(t)=Ax(t)$ with $A\in \varpi(\mathcal{K})$.   
If $x(0)\in \mathcal{K}$, 
 through the cone-invariance,  we have
$x(t)\in \mathcal{K}, \forall\, t\geq 0 $.  
After taking the Laplace transform,  it follows that $X(s)=(sI-A)^{-1}x(0)$.   For each $s_0\geq 0$,   since $A$ is stable,   it  is
 larger than  the radius of convergence of the Laplace transform,  which
 implies that $X(s)$ does converge at point $s_0$.
 From 
$X(s_0)=\int_0^{\infty}x(t)e^{-s_0 t}\mathrm{d} t$,   together with the facts that
$x(t)\in \mathcal{K}$ and  $e^{-s_0 t}>0$,   it is easy to see that 
 $X(s_0)\in \mathcal{K},\,\forall\, s_0\geq 0$.
 Therefore,  for $\forall\, x(0)\in \mathcal{K}$,  we have $X(s_0)=(s_0I-A)^{-1}x(0)\in \mathcal{K}$,   which points out that  $(s_0I-A)^{-1}\in \pi(\mathcal{K}).$ 

Necessity. 
For $s_0=0$,   it can be easily verified that $A\in \varpi(\mathcal{K})$ if $-A^{-1}\in \pi(\mathcal{K})$.
For  $s_0> 0$,
since $(s_0I-A)^{-1}\in \pi(\mathcal{K})$,  it follows that $(I-\frac{1}{s_0}A)^{-1}\in \pi(\mathcal{K})$.  According to Definition \ref{definition:Cross-positive}, we define $f(k)=\left\langle y,(I-kA)^{-1}x \right\rangle $, where $x\in \mathcal{K}, y\in \mathcal{K}^{\ast}$ such that $\left\langle y,x \right\rangle=0$. Clearly,  it holds that $f(0)=0$.  In addition, since $(I-kA)^{-1}x\in \mathcal{K}$ and $y\in \mathcal{K}^{\ast}$,  it follows that $f(k)\geq 0,\,\forall\, k>0.$  Hence,   
denote $\dot{f}(k)$ as  the  first derivative of $f(k)$.   We have  $\dot{f}(0)\geq 0$.
On the other hand, after calculating $\dot{f}(k)$, we obtain 
$$\dot{f}(k)=\left\langle y,(I-kA)^{-1}A(I-kA)^{-1} x \right\rangle,$$
Thus,  we have
$\dot{f}(0)=\left\langle y,Ax \right\rangle\geq 0.$  The matrix $A$ is  cross-positive over $\mathcal{K}$, i.e., $A\in \varpi(\mathcal{K})$.   
\end{proof}

Now, we are ready to prove Theorem \ref{theorem:sensor sets in cis}.

\begin{proof}
For $s_0\geq 0$,  since $A$ is stable,  the matrix $s_0I-A$ has full rank and is invertible. Then,
\begin{equation}
\begin{aligned}
\!&\rank\!\left( \begin{bmatrix}
s_0I-A&b\\c^{\top}&0
\end{bmatrix}\right)\!=\!
\rank\!\left( \begin{bmatrix}
s_0I-A&0\\0&-c^{\top}(s_0I-A)^{-1}b 
\end{bmatrix}\right).
\end{aligned}
\nonumber
\end{equation} 
The  matrix above is  full  rank if and only if $c^{\top}(s_0I-A)^{-1}b\neq 0$ with $\forall s_0\geq 0$, $\forall\, b\in \Lambda$,  and $\forall\, c\in \Pi$. 
 
At first,  by Lemma \ref{lemma:(sI-A)cone-invariance},  we have $(s_0I-A)^{-1}\in \pi(\mathcal{K})$.  Since $b\in \Lambda_{\mathcal{K}} \subseteq \mathcal{K}$,   it follows that $(s_0I-A)^{-1}b \in \mathcal{K}$. In light of Lemma \ref{lemma:cone and its dual},   
we claim that $c^{\top}(s_0I-A)^{-1}b>0$ with $\forall\, c\in \left\lbrace e_j: e_j\in \mathrm{int}\,\mathcal{K}^{\ast}\right\rbrace$.

Consider the case $A\in \varpi'(\mathcal{K})$ and  $c\in \left\lbrace e_j: e_j\in \mathcal{K}^{\ast}\right\rbrace$.
From Definition \ref{definition:K-positive},  if $(s_0I-A)^{-1}$ is a $\mathcal{K}$-positive matrix, i.e., $(s_0I-A)^{-1} \in \pi^{+}(\mathcal{K})$,  we have $(s_0I-A)^{-1}b \in$ int$\mathcal{K}$, and thus $c^{\top}(s_0I-A)^{-1}b>0$ with the help of Lemma \ref{lemma:cone and its dual}.    Hence,  we need to verify that $(s_0I-A)^{-1} \in \pi^{+}(\mathcal{K}) $ with $A\in \varpi'(\mathcal{K})$ and $\forall s_0\geq 0$.

Assume that there exists $\bar{s}_0\geq 0$ such that $(\bar{s}_0I-A)^{-1}\notin \pi^{+}(\mathcal{K})$,  then there exist $l\in \mathcal{K}-\lbrace 0 \rbrace$ and $p\in \partial\mathcal{K}$ such that $p=(\bar{s}_0I-A)^{-1}l$,  i.e., $l=(\bar{s}_0I-A)p$. Choose $q\in \mathcal{K}^{\ast}$ such that $\left\langle q,p\right\rangle=0$. It follows that $\left\langle q,Ap\right\rangle\geq 0$. Therefore, we have
$$
0\leq \left\langle q,l\right\rangle=\left\langle q,(\bar{s}_0I-A)p\right\rangle=\bar{s}_0\left\langle q,p\right\rangle-\left\langle q,Ap\right\rangle\leq 0,
$$
indicating $\left\langle q,l\right\rangle=0$,  and then $l\in \partial\mathcal{K}$. Given $\bar{s}_1>\bar{s}_0$, denote 
$w=(\bar{s}_1I-A)p=(\bar{s}_1-\bar{s}_0)p+l$, we have $w\in \mathcal{K}$. Furthermore,  it follows that $w\in \partial\mathcal{K}$ since 
$$
\left\langle q,w\right\rangle=(\bar{s}_1-\bar{s}_0)\left\langle q,p\right\rangle+\left\langle q,l\right\rangle=0.
$$
In light of \cite{vandergraft1968spectral}, we can construct the face $\mathcal{F}_{w}$ of $\mathcal{K}$ generated by $w$ with 
\begin{equation}
\mathcal{F}_{\omega}=\left\lbrace x:  a x \preceq w,\, \exists a>0 \right\rbrace ,
\nonumber
\end{equation}
where $\preceq$ is defined on $\mathcal{K}$.
Since $w=(\bar{s}_1-\bar{s}_0)p+l\neq 0$,  $\forall x\in \mathcal{F}_{w}\subseteq \mathcal{K}$, we have  
\begin{equation}
\begin{aligned}
0&\leq \left\langle q,x\right\rangle=\left\langle q,\frac{1}{a}(a x- w)+\frac{1}{a}w\right\rangle
\\&=\frac{1}{a}\left\langle q,(a x- w)\right\rangle=-\frac{1}{a}\left\langle q,(w-a x)\right\rangle\leq 0.
\end{aligned}
\nonumber
\end{equation}
We obtain that $\left\langle q,x\right\rangle=0$ and $x\in$ $\partial\mathcal{K}$.
According to Definition \ref{definition:face}, the face $\mathcal{F}_{w}$ is nontrivial.
Next, we can obtain that
\begin{equation}
\begin{aligned}
0&\preceq (\bar{s}_1I-A)^{-1}z \preceq (\bar{s}_1I-A)^{-1}\frac{w}{a}=\frac{p}{a}
\\
&~~~~~~~~~~~~~~~~=\frac{w-l}{(\bar{s}_1-\bar{s}_0)a}\preceq \frac{w}{(\bar{s}_1-\bar{s}_0)a},
\end{aligned}
\nonumber
\end{equation}
which yields that $\mathcal{F}_{w}$ is invariant with respect to  $(\bar{s}_1I-A)^{-1}$, so is to $(\bar{s}_0I-A)^{-1}.$ According to Definition \ref{definition:face} and \ref{definition:irreducible cone-invariant matrix},   it can be shown that
$(\bar{s}_0I-A)^{-1}\notin \pi'(\mathcal{K})$.
In addition,  by Definition 
\ref{definition:irreducible matrix},  we can verify 
the existence of an eigenvector of $(\bar{s}_0I-A)^{-1}$ located on $ \partial\mathcal{K}$,  so is to  $A$ since $(\bar{s}_0I-A)^{-1}$ and $A$ share the same eigenvectors, which therefore leads to  contradiction.  Hence,   if $A$ is a  $\mathcal{K}$-irreducible matrix, $(s_0I-A)^{-1}$ is a $\mathcal{K}$-positive matrix for any $s_0\geq 0$.
\end{proof}

The following corollaries specialize the result in Theorem \ref{theorem:sensor sets in cis}  to  positive systems,  dynamic covariance systems,   and Lorentz cone-invariant systems.  Note that all related cones $\mathbb{R}_{+}^n$, $\mathcal{S}_{+}^n$ and $\mathcal{L}_{+}^n$ are self-dual and the corresponding  $\Lambda_{\mathbb{R}_{+}^n}, ~\Lambda_{\mathcal{S}_{+}^n},~ \Lambda_{\mathcal{L}_{+}^n}$ are specified in \eqref{eq:Lambda cone}.
\begin{corollary}\label{coro:stable positive systems}
Consider a stable positive system in \eqref{eq:cis} such that $A\!\in\! \varpi'(\mathbb{R}_{+}^n)$.  For an  undetectable attack \eqref{eq:undetectable attack 1} with $\forall b\in \Lambda_{\mathbb{R}_{+}^n}$,  the defense strategy with $\forall c\in \Pi=\Lambda_{\mathbb{R}_{+}^n}$ is successful.
\end{corollary}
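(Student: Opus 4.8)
The plan is to obtain Corollary \ref{coro:stable positive systems} as a direct specialization of Theorem \ref{theorem:sensor sets in cis} to the cone $\mathcal{K}=\mathbb{R}_{+}^n$, so the proof is almost entirely bookkeeping. First I would record that Assumption \ref{assumption:the non-empty attack set} is satisfied here, since $\Lambda_{\mathbb{R}_{+}^n}=\{e_1,\dots,e_n\}$ is manifestly nonempty, and likewise $\{e_j:e_j\in\mathcal{K}^{\ast}\}$ is nonempty. The key structural fact is that $\mathbb{R}_{+}^n$ is self-dual with respect to the standard inner product $\langle x,y\rangle=x^{\top}y$, as recalled in Section \ref{sec:2}; hence $\mathcal{K}^{\ast}=\mathbb{R}_{+}^n$, and the candidate sensor set produced by the second clause of the theorem is $\{e_j:e_j\in\mathcal{K}^{\ast}\}=\{e_j:e_j\in\mathbb{R}_{+}^n\}=\{e_1,\dots,e_n\}=\Lambda_{\mathbb{R}_{+}^n}$.

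Next I would invoke the irreducibility hypothesis. The assumption $A\in\varpi'(\mathbb{R}_{+}^n)$ is precisely the condition under which Theorem \ref{theorem:sensor sets in cis} upgrades its conclusion from $\Pi=\{e_j:e_j\in\mathrm{int}\,\mathcal{K}^{\ast}\}$ to $\Pi=\{e_j:e_j\in\mathcal{K}^{\ast}\}$; by Remark \ref{remark:positive irreducible and irreducible matrix} this coincides with $A$ being an irreducible Metzler matrix, the standard notion of irreducibility for positive systems. Substituting $\mathcal{K}=\mathbb{R}_{+}^n$ and $b\in\Lambda_{\mathbb{R}_{+}^n}$ into the theorem then yields at once that the defense strategy with any $c\in\Pi=\Lambda_{\mathbb{R}_{+}^n}$ is successful against the attack \eqref{eq:undetectable attack 1}.

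There is essentially no obstacle beyond this identification: the substantive content — that $(s_0I-A)^{-1}\in\pi^{+}(\mathcal{K})$ for every $s_0\ge 0$ when $A$ is cone-irreducible, hence $c^{\top}(s_0I-A)^{-1}b>0$ and the Rosenbrock matrix has full rank $n+1$ in the sense of Definition \ref{definition:successful defense strategy} — is already carried out inside the proof of Theorem \ref{theorem:sensor sets in cis}. The only points deserving a sentence of care are (a) that the self-duality is taken with respect to the very inner product used in Definition \ref{definition:Cross-positive} and Lemma \ref{lemma:cone and its dual}, so that $\mathrm{int}\,\mathcal{K}^{\ast}$ and $\partial\mathcal{K}$ carry their concrete meanings (the open and boundary orthants), and (b) that in the non-irreducible case $\{e_j:e_j\in\mathrm{int}\,\mathbb{R}_{+}^n\}$ is empty, since no coordinate vector lies in the open orthant — which is exactly why the corollary must, and does, appeal to the $A\in\varpi'(\mathbb{R}_{+}^n)$ branch rather than the generic one.
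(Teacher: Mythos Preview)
Your proposal is correct and follows essentially the same approach as the paper: both specialize Theorem~\ref{theorem:sensor sets in cis} to $\mathcal{K}=\mathbb{R}_{+}^n$, use self-duality to identify $\{e_j:e_j\in\mathcal{K}^{\ast}\}$ with $\Lambda_{\mathbb{R}_{+}^n}$, and invoke the irreducibility hypothesis $A\in\varpi'(\mathbb{R}_{+}^n)$ to access the stronger branch of the theorem (the paper likewise notes that each $e_i\in\partial\mathbb{R}_{+}^n$, so the generic branch yields nothing). Your write-up is in fact more explicit than the paper's one-paragraph justification, but the logical content is identical.
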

For the positive cone $\mathbb{R}_{+}^n$, we have $\forall e_i \in \partial\mathbb{R}_{+}^n$. 
Since the matrix $A$ is $\mathbb{R}_{+}^n$-irreducible, i.e., $A\in \varpi'(\mathbb{R}_{+}^n)$, in light of Theorem \ref{theorem:sensor sets in cis},  we have
$\Pi=\Lambda_{\mathbb{R}_{+}^n}$. Therefore, for the stable positive system with an $\mathbb{R}_{+}^n-$irreducible matrix $A$, all the positions (states) are allowed to be attacked and  an arbitrary single measurement can be successful. 

\begin{corollary}\label{coro:stable dynamic covariance systems}
Consider a stable dynamic covariance system in \eqref{eq:cis} such that $A\in \varpi'(\mathcal{S}_{+}^n)$.  For an  undetectable attack \eqref{eq:undetectable attack 1}  with $\forall b\in \Lambda_{\mathcal{S}_{+}^n}$,  the defense strategy with $\forall c\in\Pi=\Lambda_{\mathcal{S}_{+}^n}$ is successful.
\end{corollary}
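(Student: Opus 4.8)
The plan is to obtain this as a direct specialization of Theorem~\ref{theorem:sensor sets in cis} to the cone $\mathcal{K} = \mathcal{S}_+^n$, so that the only substantive step is to identify the set $\Pi$ produced by that theorem with $\Lambda_{\mathcal{S}_+^n}$. First I would record the two structural facts needed: the PSD cone is self-dual under the inner product $\langle X,Y\rangle = \Tr(XY)$, so $\mathcal{K}^{\ast} = \mathcal{S}_+^n$; and, since $A \in \varpi'(\mathcal{S}_+^n)$ by hypothesis, the stronger conclusion of Theorem~\ref{theorem:sensor sets in cis} applies, giving $\Pi = \lbrace e_j : e_j \in \mathcal{K}^{\ast}\rbrace = \lbrace e_j : e_j \in \mathcal{S}_+^n\rbrace$.

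Next I would make the set $\lbrace e_j : e_j \in \mathcal{S}_+^n\rbrace$ explicit in the $n(n+1)/2$-dimensional representation of $\mathcal{S}_+^n$ introduced in Section~\ref{sec:2}. A standard basis vector $e_j$ of that space is the image of a real symmetric matrix whose only nonzero entry pattern is a single unit on the main diagonal, or a pair of units in symmetric off-diagonal positions. In the first case the matrix is $\diag(e_k)$ for some $k$, which is positive semi-definite; in the second case the matrix has eigenvalues $+1$, $-1$, and $n-2$ zeros, hence is indefinite and not in $\mathcal{S}_+^n$. Consequently $\lbrace e_j : e_j \in \mathcal{S}_+^n\rbrace = \lbrace \vect(\diag(e_k)) : k = 1,\dots,n\rbrace = \Lambda_{\mathcal{S}_+^n}$, the last equality being the expression recorded in~\eqref{eq:Lambda cone}. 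This parallels the positive-system case of Corollary~\ref{coro:stable positive systems}, where $\Lambda_{\mathbb{R}_+^n}$ already coincides with $\lbrace e_j : e_j \in \mathbb{R}_+^n\rbrace$ trivially; for the PSD cone the coincidence is the content of the computation just described.

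Putting the pieces together yields $\Pi = \Lambda_{\mathcal{S}_+^n}$, whereupon Theorem~\ref{theorem:sensor sets in cis} asserts that the defense strategy with any $c \in \Pi = \Lambda_{\mathcal{S}_+^n}$ is successful against the undetectable attack~\eqref{eq:undetectable attack 1} for every $b \in \Lambda_{\mathcal{S}_+^n}$, which is precisely the statement. The main obstacle is a bookkeeping one: one must keep the vectorization convention and the trace inner product mutually consistent so that a standard basis vector of the $n(n+1)/2$-dimensional space is correctly matched with $\vect(\diag(e_k))$ and the off-diagonal basis elements are correctly seen to fall outside the cone; once that is pinned down, the positive-semidefinite versus indefinite dichotomy closes the argument, and nothing further is required beyond invoking Theorem~\ref{theorem:sensor sets in cis}.
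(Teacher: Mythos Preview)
Your proposal is correct and follows essentially the same route as the paper: invoke the irreducible case of Theorem~\ref{theorem:sensor sets in cis} together with self-duality of $\mathcal{S}_+^n$ to obtain $\Pi=\{e_j:e_j\in\mathcal{S}_+^n\}$, and then identify this set with $\Lambda_{\mathcal{S}_+^n}$. The paper's argument is terser---it simply recalls $\Lambda_{\mathcal{S}_+^n}$ from~\eqref{eq:Lambda cone} and notes that each $\vect(\diag(e_i))$ lies on $\partial\mathcal{S}_+^n$ (hence the irreducibility hypothesis is genuinely needed)---whereas you additionally spell out why the off-diagonal basis vectors are indefinite, which is a helpful clarification but not a different method.
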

Not all Euclidean coordinates in $\mathbb{R}^{\frac{n(n+1)}{2}}$ belong to the PSD cone $\mathcal{S}_{+}^n$. We have $\Lambda_{\mathcal{S}_{+}^n}=\left\lbrace \mathrm{vec}(\mathrm{diag}(e_i)),i=1,\ldots,n \right\rbrace$  with $ \mathrm{vec}(\mathrm{diag}(e_i)) \in \partial\mathcal{S}_{+}^n$.  Hence,  since  $A\in \varpi'(\mathcal{S}_{+}^n)$, it follows that $\Pi=\Lambda_{\mathcal{S}_{+}^n}$.

\begin{corollary}\label{coro:stable Lorentz cone-invariant systems}
Consider a stable Lorentz cone-invariant system in \eqref{eq:cis} such that $A\in \varpi(\mathcal{L}_{+}^n)$.  For an undetectable attack \eqref{eq:undetectable attack 1} with $\forall b\in \Lambda_{\mathcal{L}_{+}^n}$, the defense strategy with $\forall c\in\Pi=\Lambda_{\mathcal{L}_{+}^n}$ is successful.
\end{corollary}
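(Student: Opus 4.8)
The plan is to specialize Theorem \ref{theorem:sensor sets in cis} to the cone $\mathcal{K}=\mathcal{L}_{+}^n$, using the fact that the Lorentz cone is self-dual and that the only coordinate direction lying inside it is $e_1$. Concretely, the corollary is just the instance of the theorem in which the sets $\Lambda_{\mathcal{K}}$ and $\Pi$ both collapse to $\{e_1\}$, so the bulk of the work is reading off the geometry of the ice-cream cone relative to the Euclidean basis.

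First I would record that geometry. Since $\mathcal{L}_{+}^n=\{x\in\mathbb{R}^n: x_1\ge \sqrt{x_2^2+\cdots+x_n^2}\}$, the vector $e_1$ satisfies $1>0=\sqrt{0+\cdots+0}$, so $e_1\in\mathrm{int}\,\mathcal{L}_{+}^n$; and for every $j\ge 2$ the first coordinate of $e_j$ is $0<1=\sqrt{\cdots}$, so $e_j\notin\mathcal{L}_{+}^n$. Hence $\Lambda_{\mathcal{L}_{+}^n}=\{e_j:e_j\in\mathcal{L}_{+}^n\}=\{e_1\}$, recovering \eqref{eq:Lambda cone}. Because $\mathcal{L}_{+}^n$ is self-dual under the standard inner product, $\mathcal{K}^{\ast}=\mathcal{L}_{+}^n$, so likewise $\{e_j:e_j\in\mathrm{int}\,\mathcal{K}^{\ast}\}=\{e_1\}$; that is, the set $\Pi$ appearing in Theorem \ref{theorem:sensor sets in cis} already equals $\Lambda_{\mathcal{L}_{+}^n}$, even without assuming $\mathcal{K}$-irreducibility of $A$. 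This is exactly why the hypothesis here is only $A\in\varpi(\mathcal{L}_{+}^n)$ rather than $A\in\varpi'(\mathcal{L}_{+}^n)$: no basis vector sits on $\partial\mathcal{L}_{+}^n$, so the first (non-irreducible) half of the theorem suffices and the "moreover" part adds nothing.

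With this identification the statement is immediate from Theorem \ref{theorem:sensor sets in cis} applied with $b=e_1\in\Lambda_{\mathcal{L}_{+}^n}$ and $c=e_1\in\Pi$. For completeness I would reproduce the one-line computation behind it: for each $s_0\ge 0$, stability of $A$ makes $s_0I-A$ invertible, Lemma \ref{lemma:(sI-A)cone-invariance} gives $(s_0I-A)^{-1}\in\pi(\mathcal{L}_{+}^n)$, hence $(s_0I-A)^{-1}e_1\in\mathcal{L}_{+}^n\setminus\{0\}$, and Lemma \ref{lemma:cone and its dual} together with $e_1\in\mathrm{int}\,\mathcal{L}_{+}^n=\mathrm{int}\,\mathcal{K}^{\ast}$ yields $e_1^{\top}(s_0I-A)^{-1}e_1>0$; by the rank reduction in the proof of Theorem \ref{theorem:sensor sets in cis} this forces the Rosenbrock matrix to have rank $n+1$ for every $s_0\ge 0$, so condition (i) of Definition \ref{definition:successful defense strategy} holds and the defense is successful (not merely almost successful).

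There is essentially no obstacle beyond correctly classifying which basis vectors lie in the interior, on the boundary, or outside $\mathcal{L}_{+}^n$; the substantive content has already been discharged in Theorem \ref{theorem:sensor sets in cis} and Lemmas \ref{lemma:(sI-A)cone-invariance}--\ref{lemma:cone and its dual}. The only point worth stating explicitly is the contrast with Corollaries \ref{coro:stable positive systems} and \ref{coro:stable dynamic covariance systems}, where the relevant basis vectors lie on the cone boundary and irreducibility of $A$ is genuinely needed, whereas here it is not.
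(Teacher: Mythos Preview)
Your proposal is correct and follows essentially the same approach as the paper: both observe that $e_1$ is the unique basis vector in $\mathcal{L}_{+}^n$ and lies in $\mathrm{int}\,\mathcal{L}_{+}^n$, so by self-duality $\Pi=\Lambda_{\mathcal{L}_{+}^n}=\{e_1\}$ and the first (non-irreducible) clause of Theorem~\ref{theorem:sensor sets in cis} applies directly. Your write-up is in fact more explicit than the paper's, spelling out why the irreducibility hypothesis can be dropped and reproducing the rank computation, but the underlying argument is the same.
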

For the Lorentz cone $\mathcal{L}_{+}^n$, there exists only $e_1\in \mathcal{L}_{+}^n$, moreover $e_1\in \mathrm{int}\,\mathcal{L}_{+}^n$. Then we have  $\Pi=\Lambda_{\mathcal{L}_{+}^n}=\lbrace e_1\rbrace$.   For Lorentz cone-invariant systems, only the position labeled by $e_1$ can be attacked and the same position should also be measured to achieve  the protection.

It is worth emphasizing that existing works mainly focus on the protection of static systems, like power systems, in which undetectable attacks should hide in the kernel space of the studied matrix, projecting attack signals into the measurement residual.  This work extends this protection scheme into dynamic systems and prevents undetectable attacks corresponding to non-minimum phase zeros.	


\subsection{$\!\!$Marginally Stable Cone-Invariant System}
$\!\!$An LTI system is marginally stable if and only if the real part of every eigenvalue  of the matrix $A$ is non-positive and at least one simple eigenvalue is located on the imaginary axis.  
In this section, we extend the preceding  defensive strategies to marginally stable systems.

\begin{theorem}\label{theorem:sensor sets in marginally  cis}
Given a marginally stable cone-invariant system  \eqref{eq:cis} over a cone $\mathcal{K}$,   if $A\in \varpi'(\mathcal{K})$, for an undetectable attack \eqref{eq:undetectable attack 1} with  $\forall b\in  \Lambda_{\mathcal{K}}$,  the defense strategy with $\forall c\in\Pi$ is successful, where
$\Pi=\left\lbrace e_j:e_j\in \mathcal{K}^{\ast}\right\rbrace. $
\end{theorem}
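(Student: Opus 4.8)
The plan is to check that condition (i) of Definition~\ref{definition:successful defense strategy} holds for \emph{every} real $s_0\ge 0$, which already suffices since (i) is the stronger of the two alternatives there. I would split the verification into the case $s_0>0$, where $s_0I-A$ is invertible and the Hurwitz analysis can be recycled after a spectral shift, and the case $s_0=0$, where $A$ itself is singular and cone-irreducibility must be exploited directly. To set up both cases, note that because $A\in\varpi'(\mathcal{K})$, Lemma~\ref{lemma:eigenvalue and eigenvector of cross-positive} guarantees that $\mu:=\max\{\mathrm{Re}\,\delta:\delta\in\mathrm{spectrum}(A)\}$ is a \emph{simple} eigenvalue of $A$ with a right eigenvector $v\in\mathrm{int}\,\mathcal{K}$ and a left eigenvector $w\in\mathrm{int}\,\mathcal{K}^{\ast}$. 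Marginal stability forces $\mu=0$, so $\ker A=\mathrm{span}\{v\}$, $w^{\top}A=0$, and every $s_0>0$ lies strictly to the right of $\mathrm{spectrum}(A)$, whence $s_0I-A$ is invertible for all $s_0>0$.

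For $s_0>0$, fix any $\varepsilon>0$ and set $A_{\varepsilon}:=A-\varepsilon I$. Then $A_{\varepsilon}$ is Hurwitz, since its eigenvalues are those of $A$ shifted left by $\varepsilon$; it remains cross-positive over $\mathcal{K}$, because for $x\in\mathcal{K}$, $y\in\mathcal{K}^{\ast}$ with $\langle y,x\rangle=0$ one has $\langle y,A_{\varepsilon}x\rangle=\langle y,Ax\rangle-\varepsilon\langle y,x\rangle=\langle y,Ax\rangle\ge 0$; and it is $\mathcal{K}$-irreducible, since it shares all its eigenvectors with $A$ (Definition~\ref{definition:irreducible matrix}). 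Hence $A_{\varepsilon}\in\varpi'(\mathcal{K})$, and running the argument in the proof of Theorem~\ref{theorem:sensor sets in cis} on $A_{\varepsilon}$ yields $(tI-A_{\varepsilon})^{-1}=\big((t+\varepsilon)I-A\big)^{-1}\in\pi^{+}(\mathcal{K})$ for all $t\ge 0$; choosing $t=\varepsilon=s_0/2$ gives $(s_0I-A)^{-1}\in\pi^{+}(\mathcal{K})$ for every $s_0>0$. Since $b\in\Lambda_{\mathcal{K}}\subseteq\mathcal{K}\setminus\{0\}$ we get $(s_0I-A)^{-1}b\in\mathrm{int}\,\mathcal{K}$, and since $c=e_j\in\mathcal{K}^{\ast}\setminus\{0\}$, Lemma~\ref{lemma:cone and its dual} (applied with $\mathcal{K}^{\ast\ast}=\mathcal{K}$) gives $c^{\top}(s_0I-A)^{-1}b>0$. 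The Schur-complement identity used in the proof of Theorem~\ref{theorem:sensor sets in cis} then makes the Rosenbrock matrix in \eqref{eq:rosenbrock} have rank $n+1$, i.e.\ condition (i) holds for $s_0>0$.

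For $s_0=0$, the matrix $s_0I-A=-A$ is singular, so I would argue on the kernel of the Rosenbrock matrix directly. Suppose $A\zeta=bd_0$ and $c^{\top}\zeta=0$ with $(\zeta,d_0)\neq 0$. Multiplying $A\zeta=bd_0$ on the left by $w^{\top}$ and using $w^{\top}A=0$ gives $0=d_0\,w^{\top}b=d_0\,w^{\top}e_i$; since $w\in\mathrm{int}\,\mathcal{K}^{\ast}$ and $e_i\in\mathcal{K}\setminus\{0\}$, Lemma~\ref{lemma:cone and its dual} forces $w^{\top}e_i>0$, so $d_0=0$. Then $A\zeta=0$, hence $\zeta=\lambda v$ for some scalar $\lambda$, and $0=c^{\top}\zeta=\lambda\,c^{\top}v$ with $c^{\top}v>0$ (again Lemma~\ref{lemma:cone and its dual}, now with $v\in\mathrm{int}\,\mathcal{K}$ and $c=e_j\in\mathcal{K}^{\ast}\setminus\{0\}$), forcing $\lambda=0$ and $\zeta=0$ --- contradicting $(\zeta,d_0)\neq 0$. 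Therefore the Rosenbrock matrix at $s_0=0$ also has rank $n+1$, so condition (i) holds at $s_0=0$ as well, and combining the two cases completes the proof.

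The step I expect to be the main obstacle is precisely the case $s_0=0$: the Hurwitz proof (Lemma~\ref{lemma:(sI-A)cone-invariance} plus the Schur complement) cannot be invoked there because $A$ is no longer invertible, and it is exactly at $s_0=0$ that $\mathcal{K}$-irreducibility is indispensable --- through the simplicity of the zero eigenvalue and the strict interiority of its left and right eigenvectors. A secondary subtlety is that $\Pi$ here is the \emph{larger} set $\{e_j:e_j\in\mathcal{K}^{\ast}\}$, which includes boundary directions; this is why for $s_0>0$ one genuinely needs the conclusion $(s_0I-A)^{-1}\in\pi^{+}(\mathcal{K})$ rather than merely $(s_0I-A)^{-1}\in\pi(\mathcal{K})$, and hence the spectral-shift reduction to the irreducible Hurwitz case of Theorem~\ref{theorem:sensor sets in cis} is used in place of Lemma~\ref{lemma:(sI-A)cone-invariance} alone.
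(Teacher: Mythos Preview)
Your proposal is correct and follows essentially the same line as the paper. For $s_0>0$ the paper simply refers back to the irreducible case of Theorem~\ref{theorem:sensor sets in cis}; your spectral shift $A_\varepsilon=A-\varepsilon I$ is just an explicit way of carrying that reduction out, and it is sound. For $s_0=0$ both arguments hinge on the interior left eigenvector $w\in\mathrm{int}\,\mathcal{K}^{\ast}$ with $w^{\top}A=0$ (from Lemma~\ref{lemma:eigenvalue and eigenvector of cross-positive}) to force $d_0=0$ via $0=w^{\top}A\zeta=d_0\,w^{\top}b$ and $w^{\top}b>0$. The one genuine difference is that the paper stops there and invokes condition~(ii) of Definition~\ref{definition:successful defense strategy}, whereas you push on: using simplicity of the zero eigenvalue and the interior \emph{right} eigenvector $v\in\mathrm{int}\,\mathcal{K}$ you also get $\zeta=\lambda v$ and $c^{\top}v>0$, hence $\zeta=0$, so the Rosenbrock matrix is in fact full rank and condition~(i) holds at $s_0=0$ as well. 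That is a slightly sharper conclusion than the paper's, obtained at essentially no extra cost.
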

\begin{proof}
For $s_0>0$,  the proof is technically similar to that for Theorem \ref{theorem:sensor sets in cis},  thus, omitted.
For $s_0=0$, in light of Lemma \ref{lemma:eigenvalue and eigenvector of cross-positive},  there exists a zero eigenvalue of 
$A\in \varpi'(\mathcal{K})$  and $A$ is not invertible.
Consider the matrix $[s_0-A \,\,b\,;\,c^{\top}\,0]=[-A \,\,e_i\,;\,e_j^{\top}\,0]$. 
If $e_i\notin \mathrm{Im}(A)\cap\Lambda_{\mathcal{K}}$,
we can easily verify that either the matrix $[-A \,\,e_i\,;\,e_j^{\top}\,0]$ has full rank or it is not full rank but $d_0=0$,  satisfying the conditions (i) and (ii) in Definition \ref{definition:successful defense strategy}, respectively. 
Assume that there exists  $e_i\in \mathrm{Im}(A)\cap\Lambda_{\mathcal{K}}$,  which also indicates 
a vector $\xi$ such that $e_i=A\xi$. 
From Lemma \ref{lemma:eigenvalue and eigenvector of cross-positive},  there  exists a vector
 $y\in \mathrm{int}\,\mathcal{K}^{\ast}$ satisfying $y^{\top}A=0$. 
 Therefore, it follows that 
$$\left\langle y,\,e_i\right\rangle=\left\langle y,\,A\xi\right\rangle=\left\langle yA^{\top}\!\!,\,\xi\right\rangle=0. $$
On the other hand,  
since $y\in \mathrm{int}\,\mathcal{K}^{\ast}$ and $e_i\in \Lambda_{\mathcal{K}}\in\mathcal{K}$,  by Lemma \ref{lemma:cone and its dual},  we have $\left\langle y,\,e_i\right\rangle>0$, which leads to contradiction.  
\end{proof}

\section{Multi-Agent Systems}  \label{sec:5}
In this section, we consider sensor placement strategies for multi-agent systems against undetectable attacks. 
We show that first-order multi-agent systems can be formulated as a special case of marginally stable positive systems, whereas second-order multi-agent systems are not. 
Nevertheless,  the features of  positive cone $\mathbb{R}_{+}^n$ still shed  useful lights into the defense strategy of 
second-order systems.
\subsection{Algebraic Graph Theory}
Let $\mathcal{G} = (\mathcal{V}, \mathcal{E}, \mathcal{A})$ denote a weighted digraph with the set of agents (vertices or nodes) $\mathcal{V}=\lbrace v_1,\ldots,v_n\rbrace$, the set of edges (links) $\mathcal{E}=\mathcal{V}\times \mathcal{V}$, and the weighted adjacency matrix $A=(a_{ij})$ with nonnegative elements $a_{ij}$. An edge of $\mathcal{G}$ is denoted by $(v_i,v_j)\in \mathcal{E}$ if there exists a directed link from agent $v_i$ to agent $v_j$ with weight $a_{ij}>0$, i.e., agent $v_j$ can receive information from agent $v_i$.  Denote $\mathcal{N}_i = \left\lbrace j \in \mathcal{V} : (v_j,v_i) \in \mathcal{E}\right\rbrace $ as the neighborhood set of node $i$.
A path from node $v_1$
to node $v_k$ is a sequence of nodes $v_1,\ldots,v_k$, such that for
each $i$, $1 \leq i \leq k-1$, $(v_i,v_{i+1})$ is an edge.
The in-degree of agent $v_i$ is defined as $\mathrm{deg}_{i}^{\mathrm{in}}=\sum_{j=1}^{n}a_{ij}.$ Define 
for the graph $\mathcal{G}$  the Laplacian matrix $L =\mathcal{D}-A$, where $\mathcal{D}\triangleq \mathrm{diag}(\mathrm{deg}_{1}^{\mathrm{in}},\ldots,\mathrm{deg}_{n}^{\mathrm{in}})$ is the in-degree matrix.   Its row sums equal to zero.
Note that any negative Laplacian matrix $-L$ is a Metzler matrix,  which is cross-positive over the positive cone $\mathbb{R}_{+}^n$.
And it follows that $L$ can be decomposed as
 $L=\alpha I-\bar{L}$, where $\alpha=\rho(\bar{L})$ is the spectral radius of $\bar{L}$, and $\bar{L}\in \pi(\mathbb{R}_+^n)$.

Connectivity is one of the basic concepts in graph theory.
A digraph is strongly connected if it contains a directed path for every pair of vertices. Meanwhile, the Laplacian matrix $L$ is irreducible if and only if its corresponding digraph is strongly connected. 
Different from the strong connectivity, a
 weaker connectivity  requires a digraph to have a directed spanning tree, by which we mean that a certain 
agent $v_i$ exists such
that for any other agent, there is a directed path from 
 $v_i$ to that agent.
 The agent $v_i$ is called a root of the spanning tree.
A digraph can be broken down into strongly connected components
and every component is a  strongly connected subgraph.
Consider a digraph with $M$ strongly connected
components. The $i$-th strongly connected
component  is denoted as $SCC_i,\, i = 1,\ldots, M$ with $n_i$ nodes.  
The Laplacian matrix in this case can be represented by a block triangular matrix,  with a proper arrangement of the nodes:
$$L= \begin{bmatrix}
  L_1&\ast&\ast&\ast\\
  0&L_2&\ast&\ast\\
  \vdots&\vdots&\ddots&\vdots\\
  0&0&0&L_{M}
  \end{bmatrix}.
$$
In this form,  $L_i\in \mathbb{R}^{n_i\times n_i}$.
Clearly,  if the digraph contains  spanning trees,  all feasible roots are located in the first component  $SCC_1$ associated with the submatrix $L_1$  above.
For example, Fig. \ref{fig:digraph two components} shows a digraph of 7 agents,   divided into two strongly connected components. The corresponding Laplacian matrix is given by
\begin{equation}\label{eq:Laplacian}
L= \begin{bmatrix}
    \begin{array}{ccc|ccc}
  1.5&-1.5&0&0&0&0 \\
  0&2.8&-0.3&-2.5&0&0\\
  -2&0&3.5&-1.5&0&0\\
  \hline
 0&0&0&0.1&-0.1&0\\
 0&0&0&0&1&-1\\
 0&0&0&-2.7&0&2.7
    \end{array}
  \end{bmatrix}.
\end{equation}  
\vspace{0.5cm}

\begin{figure}[H]
\centering
\includegraphics[width=7.5cm, height=5.3cm]{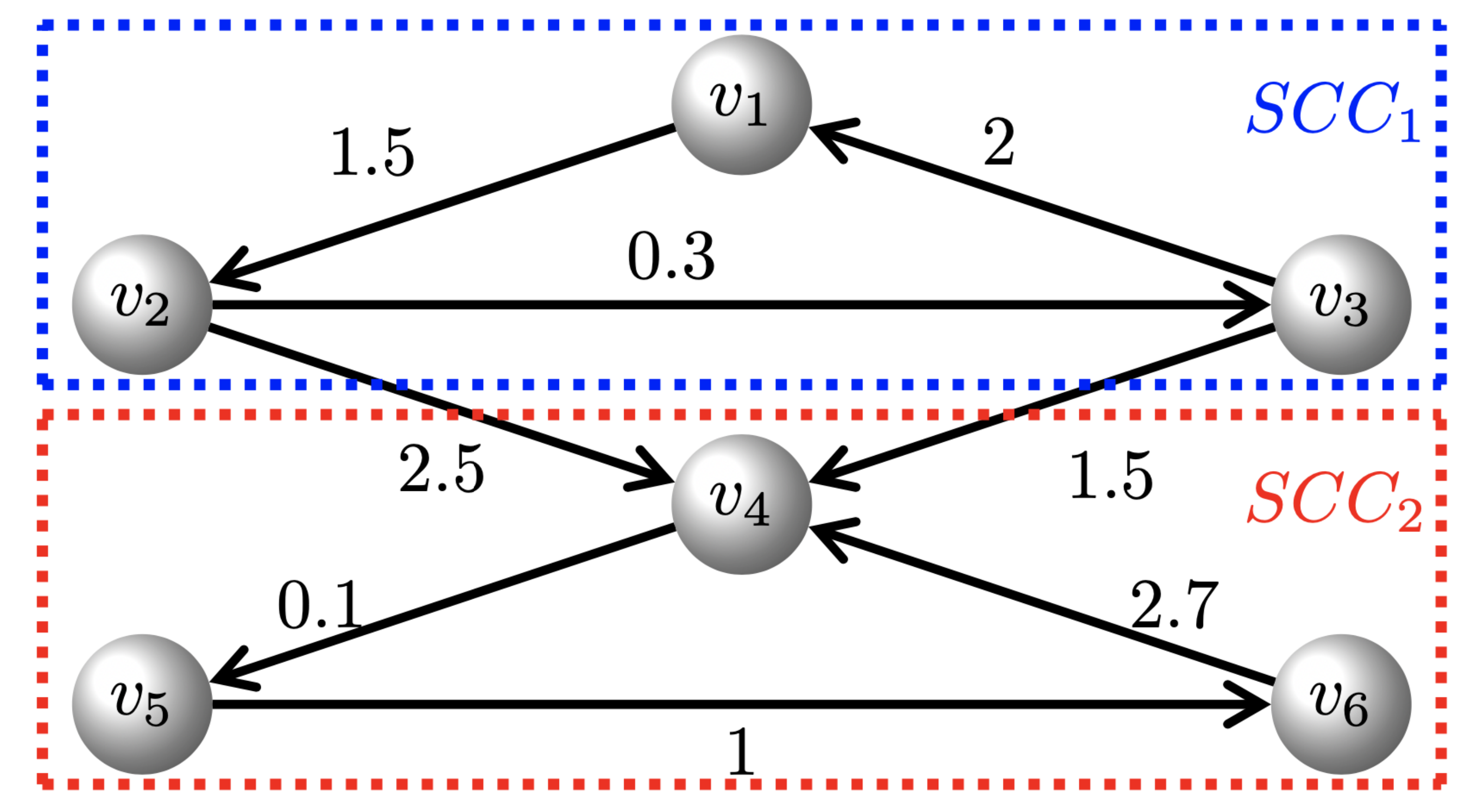}
	\caption{A digraph contains two strongly connected components}
		\label{fig:digraph two components}
		\end{figure}
		
\subsection{Single-Integrator Agent Dynamics}
Consider the single-integrator 

\noindent agents with state-space equations given by
\begin{equation}\label{eq:single integrator}
\dot{x}_i(t)=u_i(t),
\end{equation}
together with 
 the  distributed consensus protocol 
\begin{equation}\label{eq:single integrator distributed consensus protocol}
u_i(t)=-\sum\limits_{j\in \mathcal{N}_i} (x_i(t)-x_j(t)),
\end{equation}
where $x_i \in \mathbb{R}$ is the state and $u_i\in \mathbb{R}$ is the control law. 
Then the closed-loop system combining \eqref{eq:single integrator} and \eqref{eq:single integrator distributed consensus protocol} under an undetectable attack can be written as:
\begin{equation}\label{eq: first-order multi-agent system}
\begin{aligned}
\dot{x}(t)&=-Lx(t)+bd(t),\\
y(t)&=c^{\top}x(t),
\end{aligned}
\end{equation}
where $x=[x_1,\ldots, x_n]^{\top}$, and $L$ is the Laplacian matrix of the corresponding digraph. Note that this first-order multi-agent system is a  marginally stable cone-invariant system over the positive cone $\mathbb{R}_{+}^n$.  When the digraph is strongly connected, 
in view of  Remark \ref{remark:positive irreducible and irreducible matrix},
$L$ is irreducible
and thus is $\mathbb{R}_+^n$-irreducible.  According to Theorem \ref{theorem:sensor sets in marginally  cis}, we have the following result immediately.
\begin{corollary}\label{coro:first-order strongly connected multi-agent}
Consider the first-order multi-agent system  \eqref{eq: first-order multi-agent system} defined on a strongly connected digraph.  For an  undetectable attack \eqref{eq:undetectable attack 1} with $\forall b\in\Lambda_{\mathbb{R}_+^n}$,  the defense strategy with $\forall c\in\Pi$ is successful, where $\Pi=\Lambda_{\mathbb{R}_+^n}$.
\end{corollary}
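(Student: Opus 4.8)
The plan is to exhibit the closed-loop system \eqref{eq: first-order multi-agent system} as an instance of the marginally stable cone-invariant configuration treated in Theorem~\ref{theorem:sensor sets in marginally  cis}, with $\mathcal{K}=\mathbb{R}_+^n$ and $A=-L$, to verify the three hypotheses of that theorem, and then to identify $\Pi$ using self-duality of $\mathbb{R}_+^n$. First I would establish cone-invariance: the off-diagonal entries of $-L$ are the nonnegative adjacency weights $a_{ij}$, so $-L$ is a Metzler matrix and hence $-L\in\varpi(\mathbb{R}_+^n)$; together with $b\in\Lambda_{\mathbb{R}_+^n}\subset\mathbb{R}_+^n$ and $c\in\Pi\subset(\mathbb{R}_+^n)^{\ast}$, Lemma~\ref{lemma:CIS} then shows that \eqref{eq: first-order multi-agent system} is cone-invariant over $\mathbb{R}_+^n$. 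Next I would establish $\mathbb{R}_+^n$-irreducibility of $-L$: strong connectivity of the digraph makes its weighted adjacency matrix irreducible in the conventional sense, and since $L$ and $-L$ carry the same off-diagonal zero pattern, $-L$ is conventionally irreducible; Remark~\ref{remark:positive irreducible and irreducible matrix} then upgrades this to $-L\in\varpi'(\mathbb{R}_+^n)$.

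The remaining hypothesis is marginal stability, which I would obtain as follows. The matrix $L$ is diagonally dominant with nonnegative diagonal and zero row sums, so Gershgorin's theorem places every eigenvalue of $L$ in the closed right half-plane, whence every eigenvalue of $-L$ has non-positive real part; moreover the zero row sums give $0\in\mathrm{spectrum}(-L)$, so $0=\max\{\mathrm{Re}(\delta):\delta\in\mathrm{spectrum}(-L)\}$. Since $-L\in\varpi'(\mathbb{R}_+^n)$ from the previous step, Lemma~\ref{lemma:eigenvalue and eigenvector of cross-positive} guarantees this maximal eigenvalue is simple. (Equivalently, this can be read off the decomposition $L=\alpha I-\bar L$ with $\bar L\in\pi(\mathbb{R}_+^n)$ and $\alpha=\rho(\bar L)$, using Perron--Frobenius for the irreducible matrix $\bar L$.) Thus $-L$ has a simple eigenvalue on the imaginary axis and no eigenvalue with positive real part, i.e., \eqref{eq: first-order multi-agent system} is marginally stable in the sense of Section~\ref{sec:4}. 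With all hypotheses in hand, Theorem~\ref{theorem:sensor sets in marginally  cis} applies with $\mathcal{K}=\mathbb{R}_+^n$ and yields that every $c\in\Pi=\{e_j:e_j\in(\mathbb{R}_+^n)^{\ast}\}$ gives a successful defense against \eqref{eq:undetectable attack 1} for every $b\in\Lambda_{\mathbb{R}_+^n}$; and since $\mathbb{R}_+^n$ is self-dual, $(\mathbb{R}_+^n)^{\ast}=\mathbb{R}_+^n$, whence $\Pi=\{e_j:e_j\in\mathbb{R}_+^n\}=\{e_1,\ldots,e_n\}=\Lambda_{\mathbb{R}_+^n}$, which is precisely the claimed conclusion.

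The only genuinely non-routine point I anticipate is the marginal-stability step: one must verify that the zero eigenvalue of $L$ is \emph{simple} --- not merely that $0$ is an eigenvalue --- which is where the $\mathbb{R}_+^n$-irreducibility of $-L$, via Lemma~\ref{lemma:eigenvalue and eigenvector of cross-positive} (equivalently Perron--Frobenius), enters; the non-positivity of the real parts is then immediate from the diagonal-dominance (Gershgorin) structure of $L$. Everything else is a routine translation of the graph-theoretic hypotheses --- strong connectivity and $-L$ Metzler --- into the cone vocabulary of Section~\ref{sec:4}, after which Theorem~\ref{theorem:sensor sets in marginally  cis} may be invoked essentially verbatim.
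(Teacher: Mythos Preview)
Your proposal is correct and follows essentially the same route as the paper: the paper simply notes that \eqref{eq: first-order multi-agent system} is a marginally stable cone-invariant system over $\mathbb{R}_+^n$, invokes Remark~\ref{remark:positive irreducible and irreducible matrix} to pass from strong connectivity to $\mathbb{R}_+^n$-irreducibility of $-L$, and then appeals directly to Theorem~\ref{theorem:sensor sets in marginally  cis}. You have filled in the details the paper leaves implicit (Metzler structure, Gershgorin for non-positive real parts, simplicity of the zero eigenvalue via Lemma~\ref{lemma:eigenvalue and eigenvector of cross-positive}, and self-duality of $\mathbb{R}_+^n$), but the logical skeleton is identical.
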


Moreover, going beyond Theorem \ref{theorem:sensor sets in marginally  cis}, we can generalize Corollary \ref{coro:first-order strongly connected multi-agent} by weakening the strong connectivity requirement   of the digraph in the following theorem.

\begin{theorem}\label{theorem:first-order multi-agent 2}
Consider the first-order multi-agent system \eqref{eq: first-order multi-agent system} defined on a digraph. 

\noindent (1) For any $SCC_i$, define the set 
\begin{equation}\label{eq:SCC_i 1}
\Lambda_i=\lbrace e_j: \sum_{l=1}^{i-1} n_l < j\leq \sum_{l=1}^{i} n_l \rbrace,
\end{equation}
in which $\Lambda_1=\lbrace e_j:0<j\leq n_1\rbrace$.
For an  undetectable attack \eqref{eq:undetectable attack 1} with $\forall b\in\Lambda_i$, the defense strategy with $\forall c\in\Pi_i$ is almost successful, where $\Pi_i=\Lambda_i$. 

\noindent (2) If the digraph contains spanning trees, 
for an  undetectable attack \eqref{eq:undetectable attack 1} with $\forall b\in\Lambda_{\mathbb{R}_+^n}$,  the defense strategy with $\forall c\in\Pi$ is almost successful, where $\Pi=\Lambda_1$.
\end{theorem}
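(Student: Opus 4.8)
The plan is to reduce both claims to the block-triangular structure of the Laplacian matrix $L$ and to exploit the cone-invariance machinery already developed. For part (1), fix a strongly connected component $SCC_i$ and note that by the block-triangular form of $L$, the submatrix $L_i \in \mathbb{R}^{n_i \times n_i}$ is itself a Laplacian of the strongly connected subgraph $SCC_i$, hence $-L_i$ is an irreducible Metzler matrix and, by Remark \ref{remark:positive irreducible and irreducible matrix}, $-L_i \in \varpi'(\mathbb{R}_+^{n_i})$. The key observation is that when $b = e_j \in \Lambda_i$ and $c = e_k \in \Pi_i = \Lambda_i$, both indices lie in the block range $(\sum_{l<i} n_l, \sum_{l \le i} n_l]$, so that the transfer function $c^\top(sI+L)^{-1}b$ depends only on the diagonal block $L_i$: indeed, by the block-lower-triangular structure of $sI+L$, the $(i,i)$ sub-block of $(sI+L)^{-1}$ is exactly $(sI+L_i)^{-1}$, and the relevant scalar equals $e_k^\top (sI+L_i)^{-1} e_j$ with the indices reindexed into $SCC_i$. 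I would then invoke Theorem \ref{theorem:sensor sets in marginally cis} (or its proof) applied to the marginally stable cone-invariant subsystem governed by $-L_i$ over $\mathbb{R}_+^{n_i}$: since $-L_i \in \varpi'(\mathbb{R}_+^{n_i})$ and every $e_k \in \mathbb{R}_+^{n_i}$, we get $e_k^\top(s_0 I+L_i)^{-1}e_j \neq 0$ for all $s_0 > 0$, which establishes condition (i) of Definition \ref{definition:successful defense strategy} for $s_0 > 0$; the $s_0 = 0$ case is handled exactly as in the proof of Theorem \ref{theorem:sensor sets in marginally cis}, using that $-L_i$ has a simple zero eigenvalue with left eigenvector in $\mathrm{int}\,\mathbb{R}_+^{n_i}$ and the pairing argument with Lemma \ref{lemma:cone and its dual}, so that either the Rosenbrock matrix is full rank or $d_0 = 0$. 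Hence the strategy is \emph{almost} successful (the $s_0 = 0$, $d_0 \neq 0$ case being benign as discussed after Definition \ref{definition:successful defense strategy}).

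For part (2), assume the digraph contains spanning trees. Then all roots lie in $SCC_1$, and it is known that $L_1$ is the Laplacian of a strongly connected graph possessing a single simple zero eigenvalue, while each $L_i$ for $i \ge 2$ is a proper (non-Laplacian) submatrix whose row sums are not all zero — in fact $L_i$ is a nonsingular M-matrix, so $L_i$ is Hurwitz (equivalently $-L_i$ is Hurwitz stable over $\mathbb{R}_+^{n_i}$). The plan is to show that for any attack location $b = e_j \in \Lambda_{\mathbb{R}_+^n}$ (say $j$ falls in block $SCC_p$) and any sensor location $c = e_k \in \Pi = \Lambda_1$ (so $k$ falls in block $SCC_1$), the relevant scalar $c^\top(s_0 I + L)^{-1} b$ is nonzero for all $s_0 > 0$. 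If $p = 1$ this follows from part (1) applied to $SCC_1$. If $p \ge 2$, I would use the block-triangular structure: the $(1,p)$ sub-block of $(sI+L)^{-1}$ is a product/sum of the diagonal inverse blocks $(sI+L_1)^{-1}, \dots, (sI+L_p)^{-1}$ interlaced with the off-diagonal coupling blocks of $-L$, all of which are nonnegative-preserving for $s_0 \ge 0$ (each $(s_0 I + L_r)^{-1} \in \pi(\mathbb{R}_+^{n_r})$ by Lemma \ref{lemma:(sI-A)cone-invariance}, and the coupling blocks are nonnegative since $-L$ is Metzler). The crucial point is \emph{strict} positivity: because $SCC_1$ contains a root, there is a directed path from some node in $SCC_1$ to the attacked node in $SCC_p$, and reversing this along the block structure gives a nonzero contribution; combined with the $\mathbb{R}_+^{n_1}$-irreducibility of $-L_1$ (which makes $(s_0 I + L_1)^{-1}$ map $\mathbb{R}_+^{n_1}\setminus\{0\}$ into $\mathrm{int}\,\mathbb{R}_+^{n_1}$ for $s_0 > 0$, by the argument in the proof of Theorem \ref{theorem:sensor sets in cis}) and the fact that $(s_0 I + L_p)^{-1}$ is positivity-preserving and injective, one concludes $e_k^\top (s_0 I + L)^{-1} e_j > 0$. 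The $s_0 = 0$ case again follows the pattern of Theorem \ref{theorem:sensor sets in marginally cis}: the only zero mode of $L$ lives in $SCC_1$, its left eigenvector $y$ is the Perron vector of $-L_1^\top$ padded with the appropriate (possibly nonzero, but sign-definite) components determined by back-substitution through the upper blocks, and pairing $y$ against $e_j$ shows the Rosenbrock matrix is full rank unless $d_0 = 0$.

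The main obstacle I anticipate is the strict-positivity bookkeeping in part (2) when $p \ge 2$: one must carefully track how the nonnegativity and irreducibility propagate through the off-diagonal coupling blocks of the block-triangular resolvent $(sI+L)^{-1}$, ensuring that the spanning-tree hypothesis (root in $SCC_1$) translates into a genuinely nonzero entry rather than a cancellation. A clean way to organize this is to write $(s_0 I + L)^{-1}$ via the Neumann-type block elimination and observe that every term contributing to the $(1,p)$ block is a product of nonnegative matrices, so no cancellation can occur; strict positivity of the $(1,1)$ factor (from $\mathbb{R}_+^{n_1}$-irreducibility) then forces the whole scalar to be positive as long as the graph-theoretic path exists, which is exactly guaranteed by the presence of a spanning tree rooted in $SCC_1$. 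The remaining parts are routine given the lemmas already in hand.
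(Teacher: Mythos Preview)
Your approach differs substantially from the paper's. The paper never uses the block-triangular decomposition of $L$ to prove this theorem; instead it writes $L=\alpha I-\bar L$ with $\bar L$ entrywise nonnegative and $\alpha=\rho(\bar L)$, expands
\[
(s_0I+L)^{-1}=\frac{1}{s_0+\alpha}\sum_{k\ge 0}\Bigl(\tfrac{1}{s_0+\alpha}\bar L\Bigr)^{k}
\]
for $s_0>0$, and reads off $[(s_0I+L)^{-1}]_{x,y}>0$ from the purely combinatorial fact that $[\bar L^{\,k}]_{x,y}>0$ exactly when there is a walk of length $k$ from $v_x$ to $v_y$ in the digraph. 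Strong connectivity of $SCC_i$ (part (1)) and the spanning-tree hypothesis with roots in $SCC_1$ (part (2)) then supply the required walks in one line each. This path-counting argument is shorter and sidesteps all block bookkeeping; your route, by contrast, is more modular in that it recycles Theorems~\ref{theorem:sensor sets in cis} and~\ref{theorem:sensor sets in marginally  cis} as black boxes, but pays for it in the strict-positivity tracking through off-diagonal blocks that you correctly flag as the main obstacle.

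There is one concrete misstatement you should fix. The diagonal block $L_i$ is \emph{not} in general the Laplacian of $SCC_i$: it is the restriction of $L$ to those rows and columns, which equals the Laplacian of $SCC_i$ plus a nonnegative diagonal correction coming from the links between $SCC_i$ and higher-indexed components. (In the paper's own $6\times 6$ example, $L_1$ has row sums $0,\,2.5,\,1.5$.) Consequently your part-(2) claim that ``$L_1$ is a Laplacian while $L_i$, $i\ge 2$, are nonsingular $M$-matrices'' is backwards under the paper's block ordering --- it is $L_M$ that has zero row sums --- and your $s_0=0$ discussion via the left Perron vector of $L_1$ is therefore misplaced. None of this is fatal: $-L_i$ remains irreducible Metzler and (Hurwitz or marginally) stable either way, so $(s_0I+L_i)^{-1}$ is strictly positive for every $s_0>0$, and since the theorem only asserts \emph{almost} success the $s_0=0$ case need not be argued at all. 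With that correction your block-elimination sketch does go through.
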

\begin{proof}
Note that,  for a general digraph,   the corresponding Laplacian matrix $L$ may be reducible.   For $s_0>0$,
in light of Lemma \ref{lemma:(sI-A)cone-invariance}, we have $(s_0I+L)^{-1}\in \pi(\mathbb{R}_+^n)$.  
We decompose $L$ as
$L=\alpha I-\bar{L}$ with $\alpha=\rho\left(\bar{L} \right) $ and $\bar{L}\in \pi(\mathbb{R}_+^n)$.
 Then it follows that
\begin{equation}
(s_0I+L)^{-1}=((s_0+\alpha)I-\bar{L})^{-1}=\frac{1}{s_0+\alpha}(I-\frac{1}{s_0+\alpha}\bar{L})^{-1}.
\nonumber
\end{equation} 
From $\rho(\frac{1}{s_0+\alpha}\bar{L})<1$, we can refer to Taylor series and get
\begin{equation}
\frac{1}{s_0+\alpha}(I-\frac{1}{s_0+\alpha}\bar{L})^{-1}=\frac{1}{s_0+\alpha}\sum\limits_{k=0}^{\infty} \left( \frac{1}{s_0+\alpha}\bar{L}\right)^k.
\nonumber
\end{equation} 
For $\bar{L}\in \pi(\mathbb{R}_+^n)$,  if $\bar{L}_{x,\,y}>0$, the $(x,y)$-th entry of $\bar{L}$,  it implies that there exists an edge $(v_x,v_y)$ in the digraph.  Thus, 
one can claim that $\bar{L}^{k}_{x,\,y}>0$ if and only if there exists at least one path from node $v_x$ to node $v_y$ within $k$ steps.

In the case (1),  in view of the strong connectivity of $SCC_i$,  then for any two nodes $v_x,\,v_y$ in $SCC_i$, 
there exists at least one directed path  with not more than $n_i$ steps,  
yielding that
$\sum_{k=0}^{\infty} \bar{L}^k_{x,\,y}>0$, so is to $\frac{1}{s_0+\alpha}(I-\frac{1}{s_0+\alpha}\bar{L})^{-1}_{x,\,y}$.  
Consequently,  we have 
$c^{\top}(s_0I+L)^{-1}b>0$ with $\forall s_0> 0$, $\forall\, b\in \Lambda_i$,  and $\forall\, c\in \Pi_i$. 

In the case (2),  the sets $\Lambda_{\mathbb{R}_+^n}$ and  $\Pi=\Lambda_1$ can be determined in a similar  manner by noting that
$SCC_1$ contains all roots of spanning trees in the digraph.
\end{proof}


\subsection{Double-Integrator Agent Dynamics}
Consider the double-integrator agents with state-space equations described by
\begin{equation}\label{eq:second-order multi-agent-plant}
\begin{aligned}
\dot{\xi}_i(t)&=\zeta_{i}(t)\\
\dot{\zeta}_i(t)&=u_i(t),
\end{aligned}
\end{equation}
where $\xi_i\in\mathbb{R}$ and $\zeta_i\in\mathbb{R}$ are the corresponding states and $u_i\in \mathbb{R}$ denotes the distributed control law. 
Next,  we extend our preceding sensor placement strategies into 
second-order multi-agent systems which are
 not invariant with the positive cone $\mathbb{R}_+^n$. 
Before showing main results of this section,  we provide the following lemma at first.
\begin{lemma}\label{lemma:Laplacian e_i}
For a Laplacian matrix $L$ defined on a strongly connected digraph, we have  
$e_i \notin \mathrm{Im}(L)$ and $e_i \notin \mathrm{Im}(L^{\top})$,  $\forall e_i, i\in \lbrace 1,\ldots,n \rbrace$.
\end{lemma}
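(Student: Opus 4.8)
The plan is to exploit the defining structural property of a Laplacian: every row of $L$ sums to zero, i.e.\ $L\mathbf{1}=0$ where $\mathbf{1}=[1,\ldots,1]^\top$. Equivalently, $\mathbf{1}^\top L^\top = 0$, so $\mathbf{1}$ is a left null vector of $L^\top$. Dually, because the digraph is strongly connected, $L$ is irreducible (as noted in the excerpt), hence by the Perron--Frobenius/Krein--Rutman machinery of Lemma~\ref{lemma:eigenvalue and eigenvector of cross-positive} applied to the Metzler matrix $-L$, the zero eigenvalue of $-L$ (which is $\mu=\max\mathrm{Re}$ of its spectrum, since all eigenvalues of $-L$ have nonpositive real part and $0$ is one of them) is \emph{simple}, and there is a left eigenvector $w\in\mathrm{int}\,\mathbb{R}_+^n$ with $w^\top L = 0$, with all entries strictly positive. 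Thus $\ker(L^\top)=\mathrm{span}\{w\}$ with $w>0$ componentwise, and $\ker(L)=\mathrm{span}\{\mathbf{1}\}$.

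First I would record that for any real square matrix $M$, $\mathrm{Im}(M) = (\ker(M^\top))^\perp$. Applying this to $M=L$: $e_i\in\mathrm{Im}(L)$ iff $e_i\perp\ker(L^\top)$ iff $w^\top e_i = 0$ iff $w_i = 0$. But $w\in\mathrm{int}\,\mathbb{R}_+^n$ forces $w_i>0$, so $e_i\notin\mathrm{Im}(L)$ for every $i$. Symmetrically, applying it to $M=L^\top$: $e_i\in\mathrm{Im}(L^\top)$ iff $e_i\perp\ker(L)$ iff $\mathbf{1}^\top e_i = 0$ iff $1=0$, which is absurd; hence $e_i\notin\mathrm{Im}(L^\top)$ for every $i$. (Alternatively, one may phrase the first part without invoking irreducibility explicitly by citing Lemma~\ref{lemma:cone and its dual}: $-L\in\varpi'(\mathbb{R}_+^n)$, so by Lemma~\ref{lemma:eigenvalue and eigenvector of cross-positive} its left eigenvector for $\mu=0$ lies in $\mathrm{int}\,(\mathbb{R}_+^n)^\ast=\mathrm{int}\,\mathbb{R}_+^n$, and then $\langle w, e_i\rangle>0$ by Lemma~\ref{lemma:cone and its dual}.)

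The only real subtlety — and the one place a careless argument could slip — is the appeal to \emph{strict} positivity of the left null vector $w$. A generic (reducible) Laplacian can have a left null vector with zero entries, and then some $e_i$ would indeed lie in $\mathrm{Im}(L)$; so strong connectivity is genuinely used, and the cleanest justification is precisely the irreducibility $\Rightarrow$ simple dominant eigenvalue $\Rightarrow$ interior eigenvector chain from Lemma~\ref{lemma:eigenvalue and eigenvector of cross-positive}. The $L^\top$ statement, by contrast, needs nothing beyond the row-sum-zero property and is immediate. I would therefore present the two halves in that order, leading with the $\mathrm{Im}(L^\top)$ case as a warm-up and then doing $\mathrm{Im}(L)$ via the Perron left eigenvector. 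No lengthy computation is required; the whole argument is three or four lines once the identity $\mathrm{Im}(M)=\ker(M^\top)^\perp$ and the interior-eigenvector lemma are in hand.
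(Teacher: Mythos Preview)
Your proposal is correct and follows essentially the same route as the paper: both arguments hinge on the strictly positive left null vector $w$ of $L$ furnished by Lemma~\ref{lemma:eigenvalue and eigenvector of cross-positive} (via irreducibility of $-L$), and then observe that $e_i\in\mathrm{Im}(L)$ would force $w_i=0$. The paper phrases this as a short contradiction ($v^\top e_i = v^\top Lx = 0$) rather than invoking $\mathrm{Im}(M)=\ker(M^\top)^\perp$ explicitly, and handles the $L^\top$ case by the same template; your use of $\mathbf{1}$ for that half is a mild simplification but not a different idea.
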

\begin{proof}
We  prove the lemma by contradiction.   Assume that there exists $e_i$ such that $e_i\in \mathrm{Im}(A)$.  Then
we can obtain  $x \neq 0$ satisfying $e_i = Lx$. In light of Lemma \ref{lemma:eigenvalue and eigenvector of cross-positive}, there exists a strictly positive vector $v$ such that $v^{\top}L=0$. 
It follows that $v^{\top}e_i=v^{\top}Lx=0$,
showing that the $i$th entry of $v$ is zero.  This fact contradicts with the strict  positivity of $v$.  Also,  the result $e_i \notin \mathrm{Im}(L^{\top})$ can be proved in a similar manner. 
\end{proof}

Consider two general types of distributed control laws,  respectively.
The first class of distributed control law is designed as
\begin{equation}
\label{eq:control law 1}
u_i(t)=-k_i\zeta_{i}(t)+\sum_{j\in N_i}\upsilon _{i,j}(\xi_j(t)-\xi_i(t)),
\end{equation}
where $k_i,\,\upsilon_{i,j}>0,\; \forall i,j=1,\ldots,N$ \cite{shames2011distributed,chen2017protecting}. 
One available physical interpretation of $\xi_i(t)$ and $\zeta_{i}(t)$ are the position and the velocity of the agent $i$ in a mechanical system, and $k_i$ can be interpreted as the radio of agent's damper to mass. The agents under the control law \eqref{eq:control law 1} move toward the position of their neighbors while damping their current velocity \cite{van2013port}.
The closed-loop system combining (\ref{eq:second-order multi-agent-plant}) and (\ref{eq:control law 1})  under an undetectable attack can be written as:
\begin{equation}\label{eq:second-order multi-agent-closedloop 1}
\begin{aligned}
\begin{bmatrix}
\dot{\xi}(t)\\\dot{\zeta}(t)
\end{bmatrix}&=\begin{bmatrix}
0&I\\
-L&-K
\end{bmatrix}\begin{bmatrix}
\xi(t)\\ \zeta(t)
\end{bmatrix}+bd(t),\\
y(t)&=c^{\top}\begin{bmatrix}
\xi(t)\\ \zeta(t)
\end{bmatrix},
\end{aligned}
\end{equation}
where $\xi(t)=[\xi_1(t),\ldots, \xi_n(t)]^{\top}$, $\zeta(t)=[\zeta_1(t),\ldots, \zeta_n(t)]^{\top}$,  $L$ is the Laplacian matrix,  and $K=\mathrm{diag}\left\lbrace  k_1,\ldots ,k_n\right\rbrace $. Denote the canonical  basis in $\mathbb{R}^{2n}$ as $\lbrace f_i, i=1,\ldots,2n\rbrace$, i.e., $f_j\!=\![e_j^{\top} \,\textbf{0}^{\top}]^{\top}$ and $f_{j+n}\!=\![\textbf{0}^{\top} \,e_j^{\top}]^{\top}\!\!, j=1,\ldots,n$,  where $\textbf{0}$ denotes an $n$-dimensional zero vector.
\begin{theorem}\label{theorem:secondt-order strongly connected multi-agent 1}
Consider the second-order multi-agent system \eqref{eq:second-order multi-agent-closedloop 1}  defined on a strongly connected digraph.  For an  undetectable attack  \eqref{eq:undetectable attack 1} 
with $\forall b\in\Lambda_{\mathbb{R}_{+}^{2n}}$,  the defense strategy with $\forall c\in\Pi$ is successful, where $\Pi=\Lambda_{\mathbb{R}_{+}^{2n}}=\lbrace f_i, i=1,\ldots,2n\rbrace$.
\end{theorem}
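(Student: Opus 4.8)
The plan is to reduce the rank condition in Definition \ref{definition:successful defense strategy} to a scalar transfer-function nonvanishing condition, exactly as in the proof of Theorem \ref{theorem:sensor sets in cis}, and then to exploit the block structure of the system matrix $\widehat{A}=\left[\begin{smallmatrix} 0 & I \\ -L & -K \end{smallmatrix}\right]$ together with Lemma \ref{lemma:Laplacian e_i}. Since the matrix $\widehat{A}$ is \emph{not} cross-positive over $\mathbb{R}_{+}^{2n}$ (the top-left block is zero and $-L$ has positive off-diagonal entries, but the sign pattern of the whole matrix is wrong), I cannot invoke Lemma \ref{lemma:(sI-A)cone-invariance} directly on $\widehat{A}$; instead I will compute $(s_0 I - \widehat{A})^{-1}$ explicitly via its Schur complement and transfer positivity from $(s_0 I + L)^{-1}$, which \emph{is} cone-invariant by Lemma \ref{lemma:(sI-A)cone-invariance} applied to $-L\in\varpi(\mathbb{R}_{+}^n)$.

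First I would split into the cases $s_0>0$ and $s_0=0$. For $s_0>0$: the system is stable on the strongly connected digraph under the control law \eqref{eq:control law 1} (this requires a short stability argument — that $\widehat{A}$ is Hurwitz — which I expect to follow from $L$ having a simple zero eigenvalue and $K\succ 0$, via a standard Lyapunov/root-locus argument for second-order consensus), so $s_0 I - \widehat{A}$ is invertible and, as in Theorem \ref{theorem:sensor sets in cis}, the defense is successful iff $c^{\top}(s_0 I-\widehat{A})^{-1}b\neq 0$ for all $b,c\in\Lambda_{\mathbb{R}_{+}^{2n}}$. Using the block inverse formula with Schur complement $\Phi(s_0):=(s_0^2 I + s_0 K + L)^{-1}$, one gets
\begin{equation}
(s_0 I-\widehat{A})^{-1}=\begin{bmatrix} s_0\Phi(s_0)+\Phi(s_0)K & \Phi(s_0) \\ -L\Phi(s_0) & s_0\Phi(s_0) \end{bmatrix}.
\nonumber
\end{equation}
The key observation is that $s_0^2 I + s_0 K + L = (s_0^2 I + s_0 K) - (\,- L + \text{diag part})$ can be written as $\beta I - M$ with $M\in\pi(\mathbb{R}_{+}^n)$ and $\beta>\rho(M)$ for $s_0>0$, so $\Phi(s_0)\in\pi(\mathbb{R}_{+}^n)$ with \emph{all entries strictly positive} by the strong-connectivity/Taylor-series argument of Theorem \ref{theorem:first-order multi-agent 2}. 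Then $s_0\Phi(s_0)+\Phi(s_0)K$, $\Phi(s_0)$, and $s_0\Phi(s_0)$ are entrywise positive, while $-L\Phi(s_0)$ needs separate handling: I will show $e_i^{\top}(-L\Phi(s_0))e_j\neq 0$ by noting $-L\Phi(s_0) = \frac{1}{s_0}\big(I - (s_0^2 I+s_0K)\Phi(s_0)\big)$ and arguing its entries cannot vanish. Picking out the entry $f_p^{\top}(s_0 I-\widehat{A})^{-1}f_q$ for any $p,q\in\{1,\dots,2n\}$ then lands in one of these four blocks and is nonzero, giving condition (i).

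For $s_0=0$: $\widehat{A}$ is singular (it has a zero eigenvalue inherited from $L$), so I must verify condition (ii) of Definition \ref{definition:successful defense strategy}, i.e.\ that whenever the rank drops, $d_0=0$. Examining $\left[\begin{smallmatrix} -\widehat{A} & b \\ c^{\top} & 0\end{smallmatrix}\right]$ with $b=f_q$, $c=f_p$, I would show the rank deficiency forces $f_q\in\mathrm{Im}(\widehat{A})$; since $\mathrm{Im}(\widehat{A})=\{[v^{\top}\,w^{\top}]^{\top}: w\in\mathbb{R}^n,\ v\in\mathrm{Im}(L)+\ldots\}$, Lemma \ref{lemma:Laplacian e_i} ($e_i\notin\mathrm{Im}(L)$) rules out $q\le n$, and for $q>n$ one checks directly via the Rosenbrock equation \eqref{eq:rosenbrock} that the corresponding $d_0$ component must be zero. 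I expect the main obstacle to be the $s_0=0$ analysis of $\mathrm{Im}(\widehat{A})$ and the bookkeeping needed to pin down $d_0=0$ across all $2n$ basis directions; the strictly-positive-entries argument for $\Phi(s_0)$ when $s_0>0$ is routine given Theorem \ref{theorem:first-order multi-agent 2}, but the $-L\Phi(s_0)$ block requires a genuinely separate nonvanishing argument that I would handle by contradiction using $v^{\top}L=0$ for the strictly positive left null vector $v$ of $L$.
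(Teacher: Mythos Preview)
Your overall strategy matches the paper's: split into $s_0>0$ and $s_0=0$, reduce via the Schur complement $\Phi(s_0)=(s_0^2I+s_0K+L)^{-1}$, show $\Phi(s_0)$ is entrywise strictly positive from the irreducible Metzler structure of $-(s_0K+L)$, and handle $s_0=0$ via Lemma~\ref{lemma:Laplacian e_i}. Three of the four blocks of $(s_0I-\widehat{A})^{-1}$ are disposed of this way, exactly as in the paper.

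The genuine gap is the $(2,1)$ block (attack on a position state, sensor on a velocity state). First a correction: that block is $-\Phi(s_0)L$, not $-L\Phi(s_0)$, since $K$ and $L$ do not commute; equivalently it equals $-(I-s_0\Phi(s_0)(s_0I+K))$. The off-diagonal entries are automatically nonzero because $[s_0\Phi(s_0)(s_0I+K)]_{ji}>0$ for $j\neq i$; only the diagonal is at issue. Your proposed contradiction via the \emph{left} null vector $v^\top L=0$ does not interact usefully with $\Phi(s_0)L$, because $v^\top\Phi(s_0)$ has no special structure. The paper's device---and the one that actually closes the argument---uses the \emph{right} null vector: $L\mathbf{1}=0$ gives $\Phi(s_0)L\,\mathbf{1}=0$, so each row of $\Phi(s_0)L$ sums to zero; equivalently $s_0\Phi(s_0)(s_0I+K)=(I+(s_0^2I+s_0K)^{-1}L)^{-1}$ is row-stochastic with strictly positive entries, hence every entry lies in $(0,1)$ and $I-s_0\Phi(s_0)(s_0I+K)$ has strictly positive diagonal and strictly negative off-diagonal. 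Without this row-sum observation your plan cannot rule out a vanishing diagonal entry.

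Two smaller points. The matrix $\widehat{A}$ is not Hurwitz: it has a simple zero eigenvalue with eigenvector $[\mathbf{1}^\top\,\mathbf{0}^\top]^\top$, so the system is only marginally stable. This is harmless for $s_0>0$ because $s_0^2I+s_0K+L$ is still invertible there (indeed $-(s_0K+L)$ is Metzler with strictly negative row sums, hence Hurwitz). For $s_0=0$, your image description needs repair: one has $[a^\top\,b^\top]^\top\in\mathrm{Im}(\widehat{A})$ iff $Ka+b\in\mathrm{Im}(L)$, i.e.\ iff $v^\top(Ka+b)=0$; since $v>0$ and $K$ is positive diagonal this fails for every $f_q$, forcing $d_0=0$ in all cases, which is the paper's conclusion.
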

\begin{proof}
We start by the case $s_0>0$.
Consider all the four possible combinations of $b$ and $c$
 with  $b=\begin{bmatrix}
e_i^{\top}&\textbf{0}^{\top}
\end{bmatrix}^{\top}$ or $b=\begin{bmatrix}
\textbf{0}^{\top}&e_i^{\top}
\end{bmatrix}^{\top}$,  and   $c=\begin{bmatrix}
e_j^{\top}&\textbf{0}^{\top}
\end{bmatrix}^{\top}$ or $c=\begin{bmatrix}
\textbf{0}^{\top}&e_j^{\top}
\end{bmatrix}^{\top}\!\!,\forall e_i,e_j :$

\begin{enumerate}[(i)]
\item If $b=\begin{bmatrix}
e_i^{\top}&\textbf{0}^{\top}
\end{bmatrix}^{\top}$ and  $c^{\top}=\begin{bmatrix}
e_j^{\top}&\textbf{0}^{\top}
\end{bmatrix}$, we have
\begin{equation}
\begin{aligned}
&\rank\left( \begin{bmatrix}
s_0I-A&b\\c^{\top}&0
\end{bmatrix}\right)
=\rank\left( \begin{bmatrix}
s_0I&-I&e_i\\L&s_0I+K&\textbf{0}\\e_j^\top &\textbf{0}^{\top}&0
\end{bmatrix}\right)  
\\
&=\rank\left( \begin{bmatrix}
\textbf{0}_{n\times n}&-I&e_i\\I&\textbf{0}_{n\times n}&(s_0I+K)e_i\\ \textbf{0}^{\top} &\textbf{0}^{\top}& -e_j^\top (s_0^2I\!+\!s_0K\!+\!L)^{-1}(s_0I\!+\!K)e_i
\end{bmatrix}\right),
\end{aligned}
\nonumber
\end{equation}

\noindent where $\textbf{0}_{n\times n}\in \mathbb{R}^{n\times n}$ is the zero matrix.
The matrix $s_0^2I+s_0K+L$ is invertible since $s_0>0$ and $K=\mathrm{diag}\left( k_1,\ldots ,k_n\right)$ is a strictly positive diagonal matrix.
Therefore, it holds that
\begin{equation}\label{e:rank-second-order-system}
\rank \left( \begin{bmatrix}
s_0I-A&b\\c^{\top}&0
\end{bmatrix}\right)=2n+1
\end{equation} 
if and only if
\begin{center}
$e_j^\top (s_0^2I+s_0K+L)^{-1}(s_0I+K)e_i \neq 0,\, \forall e_i,e_j$.
\end{center} 

\item If $b=\begin{bmatrix}
\textbf{0}^{\top}&e_i^{\top}
\end{bmatrix}^{\top}$ and  $c^{\top}=\begin{bmatrix}
e_j^{\top}&\textbf{0}^{\top}
\end{bmatrix}$, we have that \eqref{e:rank-second-order-system} holds 
if and only if
\begin{center}
$e_j^\top (s_0^2I+s_0K+L)^{-1}e_i \neq 0,\, \forall e_i,e_j$.
\end{center}

\item If $b=\begin{bmatrix}
\textbf{0}^{\top}&e_i^{\top}
\end{bmatrix}^{\top}$ and  $c^{\top}=\begin{bmatrix}
\textbf{0}^{\top}&e_j^{\top}
\end{bmatrix}$, we have that \eqref{e:rank-second-order-system} holds 
if and only if
\begin{center}
$e_j^\top s_0(s_0^2I+s_0K+L)^{-1}e_i \neq 0,\, \forall e_i,e_j$,
\end{center}

\item If $b=\begin{bmatrix}
e_i^{\top}&\textbf{0}^{\top}
\end{bmatrix}^{\top}$ and  $c^{\top}=\begin{bmatrix}
\textbf{0}^{\top}&e_j^{\top}
\end{bmatrix}$, we have that \eqref{e:rank-second-order-system} holds 
if and only if
\begin{center}
$e_j^\top\left(I-s_0(s_0^2I+s_0K+L)^{-1}(s_0I+K)\right) e_i \neq 0,\, \forall e_i,e_j$.
\end{center}
\end{enumerate}
The cases (i), (ii), and (iii) require us to prove that 
all entries of the matrix $(s_0^2I+s_0K+L)^{-1}$ are non-zero, whilst we need to clarify that
the matrix $I-s_0(s_0^2I+s_0K+L)^{-1}(s_0I+K)$ holds no zero entry
for the case (iv).

For the cases (i), (ii), and (iii), one can easily obtain that $-K-(1/s_0)L$ is a stable matrix and we omit the proof. We then prove that all entries of $(s_0^2I+s_0K+L)^{-1}$ are non-zero via proving that $-K-(1/s_0)L$ is a Metzler matrix, i.e., $-K-(1/s_0)L\in \varpi(\mathbb{R}_+^n)$.  
According to Definition \ref{definition:Cross-positive},  $\forall x,\,y \in \mathbb{R}_+^n$ such that
$\left\langle y,x \right\rangle=0$,
it follows that 
 $$ \left\langle y,(-K-(1/s_0)L)x \right\rangle=(1/s_0)\left\langle y,-Lx \right\rangle-\left\langle y,Kx \right\rangle. $$ 
 Since $-L\in \varpi(\mathbb{R}_+^n)$, we have $\left\langle y,-Lx \right\rangle\geq 0$. 
 On the other hand,   we can easily obtain that $\left\langle y,Kx \right\rangle=0$ on account of the fact that $\left\langle y,x \right\rangle=0$ and the strict positiveness of the diagonal matrix $K$.  As such,  by noting that
 $\left\langle y,(-K-(1/s_0)L)x \right\rangle \geq 0,$
  we claim that 
 $-K-(1/s_0)L\in \varpi(\mathbb{R}_+^n)$.  Additionally,
 since 
 $L$ is  irreducible,   we have  $-K-(1/s_0)L\in \varpi'(\mathbb{R}_+^n)$.  
Hence,  
 all the entries of $(s_0^2I+s_0K+L)^{-1}$ are strictly positive by Theorem \ref{theorem:sensor sets in cis}.



For the case (iv),  we  consider the matrix $I-s_0(s_0^2I+s_0K+L)^{-1}(s_0I+K)$. 
Since all the entries of $(s_0^2I+s_0K+L)^{-1}$ are strictly positive, 
the matrix $s_0(s_0^2I+s_0K+L)^{-1}(s_0I+K)$ is also a strictly positive matrix.  
One can easily see that
\begin{equation}
s_0(s_0^2I+s_0K+L)^{-1}(s_0I+K)=
\left(I+(s_0^2I+s_0K)^{-1}L \right)^{-1},
\nonumber
\end{equation}
in which  the matrix
$(s_0^2I+s_0K)^{-1}L$, denoted as $\tilde{L}$,  is also a Laplacian matrix defined on a strongly connected digraph. 
Together with the fact that $(I+\tilde{L})^{-1}$ is a row stochastic matrix, which is implied by that $(I+\tilde{L})$ is a row stochastic matrix,
it follows that  the $(i,j)$-th entry $(I+\tilde{L})^{-1}_{i,j}\in\left(0,1 \right), \forall i,j,$  which further leads to
\begin{equation}
\begin{aligned}
\left(I-s_0(s_0^2I+s_0K+L)^{-1}(s_0I+K)\right)_{i,i}&>0,\\
\left(I-s_0(s_0^2I+s_0K+L)^{-1}(s_0I+K)\right)_{i,j}&<0, \, \forall i\neq j.
\end{aligned}
\nonumber
\end{equation}

Finally,  we study the case $s_0=0$.
When $b=\begin{bmatrix}
e_i^{\top}&\textbf{0}^{\top}
\end{bmatrix}^{\top}$ or $b=\begin{bmatrix}
\textbf{0}^{\top}&e_i^{\top}
\end{bmatrix}^{\top}\!\!,$ and $c^{\top}=\begin{bmatrix}
e_j^{\top}&\textbf{0}^{\top}
\end{bmatrix}$, 
the matrices   
\begin{equation}
\begin{bmatrix}
\textbf{0}_{n\times n}&-I&e_i\\
L&K&\textbf{0}\\
e_j^{\top}&\textbf{0}^{\top}&0
\end{bmatrix}
\quad
\mathrm{or}
\quad
\begin{bmatrix}
\textbf{0}_{n\times n}&-I&\textbf{0}\\
L&K&e_i\\
e_j^{\top}&\textbf{0}^{\top}&0
\end{bmatrix},
\end{equation}
are both full rank  in view of the fact, from Lemma \ref{lemma:Laplacian e_i},  that $e_i \notin \mathrm{Im}(L),\,e_j \notin \mathrm{Im}(L^{\top}),\forall e_i,e_j$.
When $b=\begin{bmatrix}
e_i^{\top}&\textbf{0}^{\top}
\end{bmatrix}^{\top}$ or $b=\begin{bmatrix}
\textbf{0}^{\top}&e_i^{\top}
\end{bmatrix}^{\top}$,  and $c^{\top}=\begin{bmatrix}
\textbf{0}^{\top}&e_j^{\top}
\end{bmatrix}$,  clearly,   the  matrices 
\begin{equation}\label{e:RSM-a=0}
\begin{bmatrix}
\textbf{0}_{n\times n}&-I&e_i\\
L&K&\textbf{0}\\
\textbf{0}^{\top}&e_j^{\top}&0
\end{bmatrix}
\quad
\mathrm{or}
\quad
\begin{bmatrix}
\textbf{0}_{n\times n}&-I&\textbf{0}\\
L&K&e_i\\
\textbf{0}^{\top}&e_j^{\top}&0
\end{bmatrix},
\end{equation} 
are  not full  rank.   
However,  
since $e_i \notin \mathrm{Im}(L),\forall e_i$,    recall the equation \eqref{eq:rosenbrock}, 
we can arrive at $d_0=0$ directly.
The proof is now completed  by invoking the condition (ii) in Definition \ref{definition:successful defense strategy}.
\end{proof}

Consider the second type of distributed control law given by
\begin{equation}
\label{eq:control law 2}
u_i(t)=r\sum_{j\in N_i}\upsilon _{i,j}(\zeta_j(t)-\zeta_i(t))+\sum_{j\in N_i}\upsilon _{i,j}(\xi_j(t)-\xi_i(t)),
\end{equation}
where $r>0$ and $\upsilon_{i,j}>0,\; \forall i,j=1,\ldots,N$ \cite{yu2010some,hong2008distributed}. 
One motivation for such distributed control law comes from power systems, where $\xi_i(t)$ and $\zeta_i(t)$ are interpreted as the phase and the frequency of the agent $i$. Under this control law,   the power system reaches consensus  while not only the phase differences but also the frequency difference are penalized \cite{kundur1994power}. 
The compact form of the closed-loop system combining (\ref{eq:second-order multi-agent-plant}) and (\ref{eq:control law 2}) under an undetectable attack is given by
\begin{equation}\label{eq:second-order multi-agent-closedloop 2}
\begin{aligned}
\begin{bmatrix}
\dot{\xi}(t)\\\dot{\zeta}(t)
\end{bmatrix}&=\begin{bmatrix}
0&I\\
-L&-rL
\end{bmatrix}\begin{bmatrix}
\xi(t)\\ \zeta(t)
\end{bmatrix}+bd(t),\\
y(t)&=c^{\top}\begin{bmatrix}
\xi(t)\\ \zeta(t)
\end{bmatrix}.
\end{aligned}
\end{equation}
\begin{theorem}\label{theorem:secondt-order strongly connected multi-agent 2}
Consider a second-order multi-agent system \eqref{eq:second-order multi-agent-closedloop 2} defined on a strongly connected digraph. 

\noindent (1) For an undetectable attack  \eqref{eq:undetectable attack 1} 
 with $\forall b\in \Lambda_1=\lbrace f_i, i=1,\ldots,n \rbrace$,  the defense strategy with $\forall c\in \Pi$ is almost successful, where $\Pi=\Lambda_{\mathbb{R}_{+}^{2n}}$.

\noindent (2) For an  undetectable attack  \eqref{eq:undetectable attack 1} 
with $\forall b\in\Lambda_2=\lbrace f_i, i\!=\!n+1,\ldots,2n \rbrace$, 
the defense strategy with $\forall c\in\Pi$ is successful, also $\Pi=\Lambda_{\mathbb{R}_{+}^{2n}}$.

\end{theorem}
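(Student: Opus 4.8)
The plan is to follow the proof of Theorem~\ref{theorem:secondt-order strongly connected multi-agent 1}, with the off-diagonal block $-rL$ here playing the role of the damping block $-K$ there; the genuinely new feature is that $rL$ is no longer a strictly positive diagonal matrix, and this must be accommodated at two points. Write $A=\left[\begin{smallmatrix}0&I\\-L&-rL\end{smallmatrix}\right]$, and for $s_0>0$ set $S:=s_0^2I+(1+s_0r)L$. Since every eigenvalue $\lambda$ of the Laplacian $L$ satisfies $\operatorname{Re}\lambda\ge0$, we have $\operatorname{Re}\bigl(s_0^2+(1+s_0r)\lambda\bigr)\ge s_0^2>0$, so $S$ is invertible and hence so is $s_0I-A$ (in fact $\det(s_0I-A)=\det S$). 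A Schur-complement reduction of the Rosenbrock matrix then shows it has full rank $2n+1$ exactly when $c^{\top}(s_0I-A)^{-1}b\neq0$, and the block inverse works out to
\[
(s_0I-A)^{-1}=\begin{bmatrix}\tfrac{1}{1+s_0r}\bigl(rI+s_0S^{-1}\bigr)&S^{-1}\\ -S^{-1}L&s_0S^{-1}\end{bmatrix}.
\]
Running through the four combinations $b\in\{[e_i^{\top}\,\mathbf{0}^{\top}]^{\top},\,[\mathbf{0}^{\top}\,e_i^{\top}]^{\top}\}$ and $c\in\{[e_j^{\top}\,\mathbf{0}^{\top}]^{\top},\,[\mathbf{0}^{\top}\,e_j^{\top}]^{\top}\}$ turns the rank test into the nonvanishing of one of $\tfrac{1}{1+s_0r}\bigl(r\delta_{ij}+s_0(S^{-1})_{ji}\bigr)$, $(S^{-1})_{ji}$, $s_0(S^{-1})_{ji}$, or $(S^{-1}L)_{ji}=\tfrac{1}{1+s_0r}\bigl(\delta_{ij}-s_0^2(S^{-1})_{ji}\bigr)$.

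Two facts about $S^{-1}$ then settle the case $s_0>0$. First, $S^{-1}$ is entrywise positive: with $L=\alpha I-\bar{L}$, $\bar{L}\in\pi(\mathbb{R}_+^n)$, $\alpha=\rho(\bar{L})$, we get $S=\bigl(s_0^2+(1+s_0r)\alpha\bigr)I-(1+s_0r)\bar{L}$ with $\rho\bigl((1+s_0r)\bar{L}\bigr)<s_0^2+(1+s_0r)\alpha$, so the Neumann-series / strong-connectivity argument in the proof of Theorem~\ref{theorem:first-order multi-agent 2} applies and gives positivity of every entry. Second, $L\mathbf{1}=0$ gives $S\mathbf{1}=s_0^2\mathbf{1}$, so every row of $S^{-1}$ sums to $1/s_0^2$; together with strict positivity of its off-diagonal entries (for $n\ge2$) this forces $(S^{-1})_{ii}<1/s_0^2$. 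Hence the first three scalars are strictly positive, while $(S^{-1}L)_{ji}$ equals $-\tfrac{s_0^2}{1+s_0r}(S^{-1})_{ji}<0$ for $i\neq j$ and $\tfrac{1}{1+s_0r}\bigl(1-s_0^2(S^{-1})_{ii}\bigr)>0$ for $i=j$. Therefore for every $s_0>0$ the Rosenbrock matrix has full rank $2n+1$ for all $b,c\in\Lambda_{\mathbb{R}_+^{2n}}$, which disposes of all non-minimum-phase zeros $s_0>0$ in both parts.

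For $s_0=0$ the matrix $-A$ is singular, and the plan is to use its $-I$ block to clear the velocity columns of the Rosenbrock matrix, collapsing the rank test to whether $\left[\begin{smallmatrix}L&p\\ q^{\top}&0\end{smallmatrix}\right]$ has rank $n+1$, where $p=e_i$ if $b$ is a velocity coordinate and $p=rLe_i$ if $b$ is a position coordinate, the bottom row is built from $c$, and $\ker L=\operatorname{span}\mathbf{1}$. If $b\in\Lambda_2$, then $p=e_i\notin\mathrm{Im}\,L$ by Lemma~\ref{lemma:Laplacian e_i}, and one checks that either the reduced block is nonsingular (so the Rosenbrock matrix has rank $2n+1$) or else \eqref{eq:rosenbrock} forces $d_0=0$, since $L\zeta_1=-d_0e_i$ has no solution otherwise; condition~(i) or~(ii) of Definition~\ref{definition:successful defense strategy} holds, so the defense is \emph{successful}, which proves~(2). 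If instead $b\in\Lambda_1$, the cleared column is $rLe_i\in\mathrm{Im}\,L$, the reduced block loses rank, and \eqref{eq:rosenbrock} does admit solutions with $d_0\neq0$; these correspond to the constant attacks $d(t)\equiv d_0$, which by the convention of Definition~\ref{definition:successful defense strategy} leave the defense only \emph{almost} successful, which proves~(1).

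The step I expect to be the main obstacle is the position--velocity cross term $(S^{-1}L)_{ji}$ (relevant to~(1), with $b$ a position coordinate and $c$ a velocity coordinate): unlike the other three quantities, $S^{-1}L=\tfrac{1}{1+s_0r}(I-s_0^2S^{-1})$ is \emph{not} entrywise positive, so Lemma~\ref{lemma:cone and its dual} yields nothing directly, and the nonvanishing of its diagonal entries is exactly where the row-sum identity $S^{-1}\mathbf{1}=\tfrac{1}{s_0^2}\mathbf{1}$ combined with strict positivity is indispensable --- this is the analogue of the ``$(I+\tilde{L})^{-1}$ is row-stochastic with entries in $(0,1)$'' step used for Theorem~\ref{theorem:secondt-order strongly connected multi-agent 1}. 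The other point demanding care is the $s_0=0$ bookkeeping, which must be arranged so that the sharper conclusion (\emph{successful}) emerges precisely for attacks through the velocity states and the weaker one (\emph{almost successful}) for attacks through the position states.
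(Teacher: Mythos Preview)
Your proof is correct and follows the same overall architecture as the paper: a four-case breakdown for $s_0>0$ based on which blocks of $b$ and $c$ are nonzero, positivity of all entries of $S^{-1}=(s_0^2I+(1+s_0r)L)^{-1}$ via the Neumann series and strong connectivity, a row-stochastic type argument for the mixed position/velocity case, and a separate treatment of $s_0=0$ using $e_i\notin\operatorname{Im}(L)$.

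The difference lies in the algebra for cases~(i) and~(iv). The paper rewrites $S^{-1}(s_0I+rL)$ as $\bigl(s_0I+(s_0I+rL)^{-1}L\bigr)^{-1}$ and then spends a page showing that $(s_0I+rL)^{-1}L$ is an irreducible Laplacian via a Neumann expansion and a spectral-radius computation; you instead use the elementary identity $S^{-1}(s_0I+rL)=\tfrac{1}{1+s_0r}(rI+s_0S^{-1})$, which makes entrywise positivity immediate from that of $S^{-1}$. Likewise, for case~(iv) the paper argues that $s_0S^{-1}(s_0I+rL)$ is row-stochastic with entries in $(0,1)$, whereas you obtain the same conclusion directly from $S\mathbf{1}=s_0^2\mathbf{1}$ and the strict positivity of the off-diagonal entries of $S^{-1}$. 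Your route is shorter and avoids the auxiliary Laplacian construction entirely; the paper's route, on the other hand, makes the connection to Theorems~\ref{theorem:sensor sets in cis} and~\ref{theorem:first-order multi-agent 2} more visible. Your parenthetical ``for $n\ge2$'' in the diagonal bound $(S^{-1})_{ii}<1/s_0^2$ is a useful caveat that the paper leaves implicit.
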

\begin{proof}
Consider the case $s_0>0$. Likewise,  from the four possible combinations of $b$ and $c$,   we have:
\begin{enumerate}[(i)]
\item If $b=\begin{bmatrix}
e_i^{\top}&\textbf{0}^{\top}
\end{bmatrix}^{\top}$ and  $c^{\top}=\begin{bmatrix}
e_j^{\top}&\textbf{0}^{\top}
\end{bmatrix}$, we need to verify that
\begin{center}
$e_j^\top (s_0^2I+(s_0r+1)L)^{-1}(s_0I+rL)e_i \neq 0,\, \forall e_i,e_j$.
\end{center} 

\item If $b=\begin{bmatrix}
\textbf{0}^{\top}&e_i^{\top}
\end{bmatrix}^{\top}$ and  $c^{\top}=\begin{bmatrix}
e_j^{\top}&\textbf{0}^{\top}
\end{bmatrix}$, we need to verify that
\begin{center}
$e_j^\top (s_0^2I+(s_0r+1)L)^{-1}e_i \neq 0,\, \forall e_i,e_j$.
\end{center}

\item If $b=\begin{bmatrix}
\textbf{0}^{\top}&e_i^{\top}
\end{bmatrix}^{\top}$ and  $c^{\top}=\begin{bmatrix}
\textbf{0}^{\top}&e_j^{\top}
\end{bmatrix}$, we need to verify that
\begin{center}
$e_j^\top s_0(s_0^2I+(s_0r+1)L)^{-1}e_i \neq 0,\, \forall e_i,e_j$,
\end{center}

\item If $b=\begin{bmatrix}
e_i^{\top}&\textbf{0}^{\top}
\end{bmatrix}^{\top}$ and  $c^{\top}=\begin{bmatrix}
\textbf{0}^{\top}&e_j^{\top}
\end{bmatrix}$, we need to verify that
\begin{center}
\hspace{-0.85cm}
$e_j^\top\left(I-s_0(s_0^2I+(s_0r+1)L)^{-1}(s_0I+rL)\right) e_i \neq 0,\, \forall e_i,e_j$.
\end{center}
\end{enumerate}
The cases (i) and (iv) represent $\Lambda_1$ and the cases (ii) and (iii) point to $\Lambda_2$.  

For the cases (ii) and (iii),  it follows as in the proofs of Theorem \ref{theorem:sensor sets in cis} and \ref{theorem:secondt-order strongly connected multi-agent 1} that
all the entries of the matrix $(s_0^2I+(s_0r+1)L)^{-1}$ are non-zero.

For the case (i),  we study the matrix
the matrix $(s_0^2I+(s_0r+1)L)^{-1}(s_0I+rL)=\left(s_0I+(s_0I+rL)^{-1}L\right) ^{-1}$.
We first  show  that $(s_0I+rL)^{-1}L$ is Laplacian.  Clearly, 
 its row sums equal to zero.  
Let $L=\alpha I-\bar{L}$ with $\alpha=\rho(\bar{L})$ and $\bar{L}\in \pi(\mathbb{R}_{+}^n)$. It follows that
\begin{equation}
\begin{aligned}
(s_0I+rL)^{-1}L=\frac{1}{s_0+r\alpha}\left(I-\frac{r}{s_0+r\alpha}\bar{L} \right) ^{-1}(\alpha I-\bar{L}).
\end{aligned}
\nonumber
\end{equation}
Since  $\rho(\frac{r}{s_0+r\alpha}\bar{L} )<1$,  we have
$\left(I-(r/(s_0+r\alpha))\bar{L} \right) ^{-1}=\sum_{k=0}^{\infty}\left((r/(s_0+r\alpha))\bar{L} \right) ^{k}$.  This leads to
\begin{equation}
\begin{aligned}
(s_0I+rL)^{-1}L&=\frac{1}{s_0+r\alpha}\sum_{k=0}^{\infty}\left(\frac{r}{s_0+r\alpha}\bar{L} \right) ^{k}(\alpha I-\bar{L})
\\
&=\frac{\alpha}{s_0+r\alpha}I+\frac{\alpha}{s_0+r\alpha}\sum_{k=1}^{\infty}\left(\frac{r}{s_0+r\alpha}\bar{L} \right) ^{k}
-\frac{1}{r}\sum_{k=1}^{\infty}\left(\frac{r}{s_0+r\alpha}\bar{L} \right) ^{k}
\\
&=\frac{\alpha}{s_0+r\alpha}I-\frac{s_0}{r(s_0+r\alpha)} \sum_{k=1}^{\infty}\left(\frac{r}{s_0+r\alpha}\bar{L} \right) ^{k}
\\
&=\frac{s_0}{r(s_0+r\alpha)}\left(\frac{r\alpha}{s_0} I-\sum_{k=1}^{\infty}\left(\frac{r}{s_0+r\alpha}\bar{L} \right) ^{k} \right),
\end{aligned}
\nonumber
\end{equation}
where $\left((r/(s_0+r\alpha))\bar{L} \right) ^{k}\in \pi(\mathbb{R}_+^n),\,\forall k,$.
Furthermore, 
\begin{equation}
\rho\left( \sum_{k=1}^{\infty}\left(\frac{r}{s_0+r\alpha}\bar{L} \right) ^{k}\right) =\sum_{k=1}^{\infty}\left(\frac{r\alpha}{s_0+r\alpha} \right) ^{k}=\frac{r\alpha}{s_0},
\nonumber
\end{equation}
where the Perron-Frobenius theorem yields the first equality.  Hence,   $(s_0I+rL)^{-1}L$  is an M-matrix with row sums equal to zero, that is 
a Laplacian matrix \cite{chen2017protecting}.  In addition, it is irreducible.  We hence claim that all the entries of $\left(s_0I+(s_0I+rL)^{-1}L\right) ^{-1}$ are strictly positive.


For the case (iv),  we consider
the matrix $I-s_0(s_0^2I+(s_0r+1)L)^{-1}(s_0I+rL)$.  We have proved that
all entries of $s_0(s_0^2I+(s_0r+1)L)^{-1}(s_0I+rL)$ are strictly positive. On the other hand, one can easily see that $s_0(s_0^2I+(s_0r+1)L)^{-1}(s_0I+rL)=\left(I+(I+(r/s_0)L)^{-1}L\right) ^{-1}$ is a row stochastic matrix since $I+(I+(r/s_0)L)^{-1}L$ itself is row stochastic, which yields that 
\begin{equation}
\begin{aligned}
\left(I-s_0(s_0^2I+(s_0r+1)L)^{-1}(s_0I+rL)\right)_{i,i}&>0,\\
\left(I-s_0(s_0^2I+(s_0r+1)L)^{-1}(s_0I+rL)\right)_{i,j}&<0, \, \forall i\neq j.
\end{aligned}
\nonumber
\end{equation}

Consider the case $s_0=0$.  In light of Lemma \ref{lemma:Laplacian e_i}, in the case (ii), the associated matrix $\left[ -A,b;c^{\top}\!\!,0\right] $
has full  rank.    In the case (iii),  the matrix
 is not full rank,  but it can be easily verified that $d_0$ associated with $s_0=0$ always equals to zero. Therefore,  for the attack with $\forall b\in \Lambda_2~( \Lambda_1)$,  any defense $c$ located in the set $\Pi$ is (almost) successful.
\end{proof}

In   Theorem  \ref{theorem:secondt-order strongly connected multi-agent 2},  it holds that $\Lambda_1\cup \Lambda_2=\Lambda_{\mathbb{R}_{+}^{2n}}$. Therefore,
together with Theorem \ref{theorem:secondt-order strongly connected multi-agent 1},  for  both second-order multi-agent systems \eqref{eq:second-order multi-agent-closedloop 1} and \eqref{eq:second-order multi-agent-closedloop 2},
 all the potentially vulnerable positions for undetectable attacks have been considered.   At the same time,   an arbitrary single measurement in $\Pi=\lbrace f_i, i=1,\ldots,2n\rbrace$ is successful to prevent those attacks.

\section{Further Discussion on MIMO Case}  \label{sec:6}
In the preceding sections, we proposed  a single-sensor placement strategy to defend a single attack. This strategy is effective for 
single-input,  single-output (SISO) systems.  We now discuss its extension to MIMO systems.

Assume that multiple undetectable attacks enter the system
through $m$ channels. Specifically,  in the system \eqref{eq:mimolti}, let
 $B=[b_1,\ldots,b_m]\in \mathbb{R}^{n\times m}$,  and $C^{\top}=[c_1,\ldots,c_l] \in \mathbb{R}^{n\times l}$,  where $b_i,~c_j$ 
 are the $i$th and $j$th Euclidean coordinates of $\mathbb{R}^n$, respectively.
 According to
Definition \ref{definition:successful defense strategy}, for  the multiple attacks, the defense strategy will be  successful if and only if
\begin{equation}\label{eq:mimo rank condition}
\rank(C(s_0I-A)^{-1}B)=m
\end{equation}
for 
 each $s_0$ with $s_0\geq 0$.
 Apparently, for the extreme case such that $m=l=n$ and $B=C=I$, it yields that 
$$\rank(C(s_0I-A)^{-1}B)=\rank((s_0I-A)^{-1})=n.$$
That is, if all nodes and channels are placed by sensors, then even they all have been attacked, the defense strategy is still successful. Nevertheless, if $m\ll n$, together with the cost-efficient consideration, 
the complete sensor placement is indeed impractical.  Under this circumstance, we assume that $m=l<n$. The condition \eqref{eq:mimo rank condition} can reduce to
\begin{equation}\label{eq:mimo det condition}
\det(C(s_0I-A)^{-1}B)\neq 0,~~~\forall s_0\geq 0.
\end{equation}
 We use $\det$ to  represent the matrix determinant.
 Different from the SISO case concentrating on the entries of $(s_0I-A)^{-1}$,  for the MIMO case, it requires us to study the minors of $(s_0I-A)^{-1}$. The cut-down columns and rows are determined by the matrices $B$ and $C$.

 As an example, we design defense strategy against multiple attacks over the  first-order multi-agent system \eqref{eq: first-order multi-agent system} $\dot{x}(t)=-Lx(t)$ to demonstrate the highly increased complexity from the SISO scenario to the MIMO one.
From the condition \eqref{eq:mimo det condition}, it is necessary to examine 
$\det(C(s_0I+L)^{-1}B),~~~\forall s_0> 0.$
Denote by $\mathcal{I}_{B},~\mathcal{I}_C\subseteq \lbrace1,\ldots,n\rbrace$ the sets of integers with the cardinality $m$ indexed by the matrices $B$ and $C$. Then we use 
$(s_0I+L)^{-1}_{\mathcal{I}_{C} \mathcal{I}_B}$ to denote the submatrix of $(s_0I+L)^{-1}$ restricted to rows ordered by $C$ and columns ordered by $B$. From the Jacobi's complementary minor formula \cite{caracciolo2013algebraic}, it follows that  $\det((s_0I+L)^{-1}_{\mathcal{I}_{C}\mathcal{I}_B})\neq 0$ if and only if $\det((s_0I+L)_{\bar{\mathcal{I}}_{B}\bar{\mathcal{I}}_C})\neq 0$, where $\bar{\mathcal{I}}_{B}$ and $\bar{\mathcal{I}}_C$ are the complementary sets of $\mathcal{I}_{B}$ and $\mathcal{I}_C$. We add a fictitious vertex on the underlying digraph indexed by $0$, where all other vertices point to the vertex $0$ with the weight $s_0$ on each new edge. The enlarged Laplacian matrix is noted as $\hat{L}\in \mathbb{R}^{(n+1)\times (n+1)}$. Then it is straightforward that
$$
\det((s_0I+L)_{\bar{\mathcal{I}}_{B}\bar{\mathcal{I}}_C})=\det(\hat{L}_{\bar{\mathcal{I}}_{B}\bar{\mathcal{I}}_C}).
$$
Therefore, the target injection set and the sensor placement set
should contain all available sets $\mathcal{I}_{B}$ and $\mathcal{I}_C$ such that  $\det(\hat{L}_{\bar{\mathcal{I}}_{B}\bar{\mathcal{I}}_C})\neq 0,~\forall s_0> 0.$ 
They can be solved effectively based on the underlying digraph's structure in light of the \textit{all minors matrix tree theorem} \cite{chaiken1982combinatorial}.
Note that if the digraph is strongly connected and $m=1$, we can arrive at Corollary \ref{coro:first-order strongly connected multi-agent} directly from the \textit{matrix tree theorem}.  



 In the next step, we will continue to extend our
analysis to the multiple attacks over more general cone-invariant systems in a progressive manner. 

\section{Examples}  \label{sec:7}

In this section, we use two power network models as examples to illustrate our results, 
demonstrating how sensors may be placed to prevent undetectable attacks. We consider the decentralized voltage control schemes in interconnected buses over networks. Each bus can be controlled to regulate its voltage magnitude and phase-angle independently.   We refer to \cite{kundur1994power,anderson2008power} for the network models and the terminologies.

\begin{exmp}\label{ex:positive}
In this example, we  verify the result in Theorem \ref{theorem:sensor sets in cis} for positive systems. We start by considering the voltage magnitude dynamics of the power network that consists of a set of interconnected buses. Assume that the power network is outfitted with $n$ buses which
 are all inverter buses(microgrids). For bus $i$, its voltage magnitude  dynamics can be  modeled by a single integrator 
$$\tau_i\dot{V}_i(t)=u_i(t),$$
where $\tau_i>0$ is the inverter's time-constant. Consider the voltage quadratic droop controller \cite{simpson2016voltage} described by
\begin{equation}
u_i(t)=-\kappa_iV_i(t)(V_i(t)-V_i^{\ast}(t))-Q_i(t),
\nonumber
\end{equation}
where $\kappa_i>0$,  $V_i^{\ast}$, and $Q_i$ are the controller gain, the nominal voltage, and the reactive power injection at bus $i$, respectively. The reactive power injection $Q_i$ is given by:
$$
Q_i(t) = -\sum_{j\in \mathcal{N}_i}k_{ij} \cos(\theta_i(t)-\theta_j(t)),
$$
where $k_{ij}=V_iV_jb_{ij}$, $b_{ij}$ is the susceptance of the power line between buses $i$ and $j$, and $\theta_i$ is the phase angle of bus $i$. Since 
all the phase angles are close, under the standard decoupling approximation, it follows that $\cos(\theta_i-\theta_j)\approx 1$. Denoting $V=[V_1,\ldots,V_n]^{\top}$, $\tau=[\tau_1,\ldots,\tau_n]^{\top}$, and $\kappa=[\kappa_1,\ldots,\kappa_n]^{\top}$. After the Jacobian linearization around an equilibrium point $\bar{V}$,  
the  state-space form  \eqref{eq:cis} of  the power network's voltage magnitude  dynamics is given by the state $x=V-\bar{V}$ and the system matrix
$$A=-\mathrm{diag}(\bar{V})\mathrm{diag}(\tau)^{-1}\left( \mathrm{diag}(\kappa)+B_{\mathcal{G}} \right), $$
where  $B_{\mathcal{G}}$ is the susceptance matrix over the underlying topology.

\begin{figure}[!htb]
\begin{center}
\includegraphics[width=8.2cm, height=4.5cm]{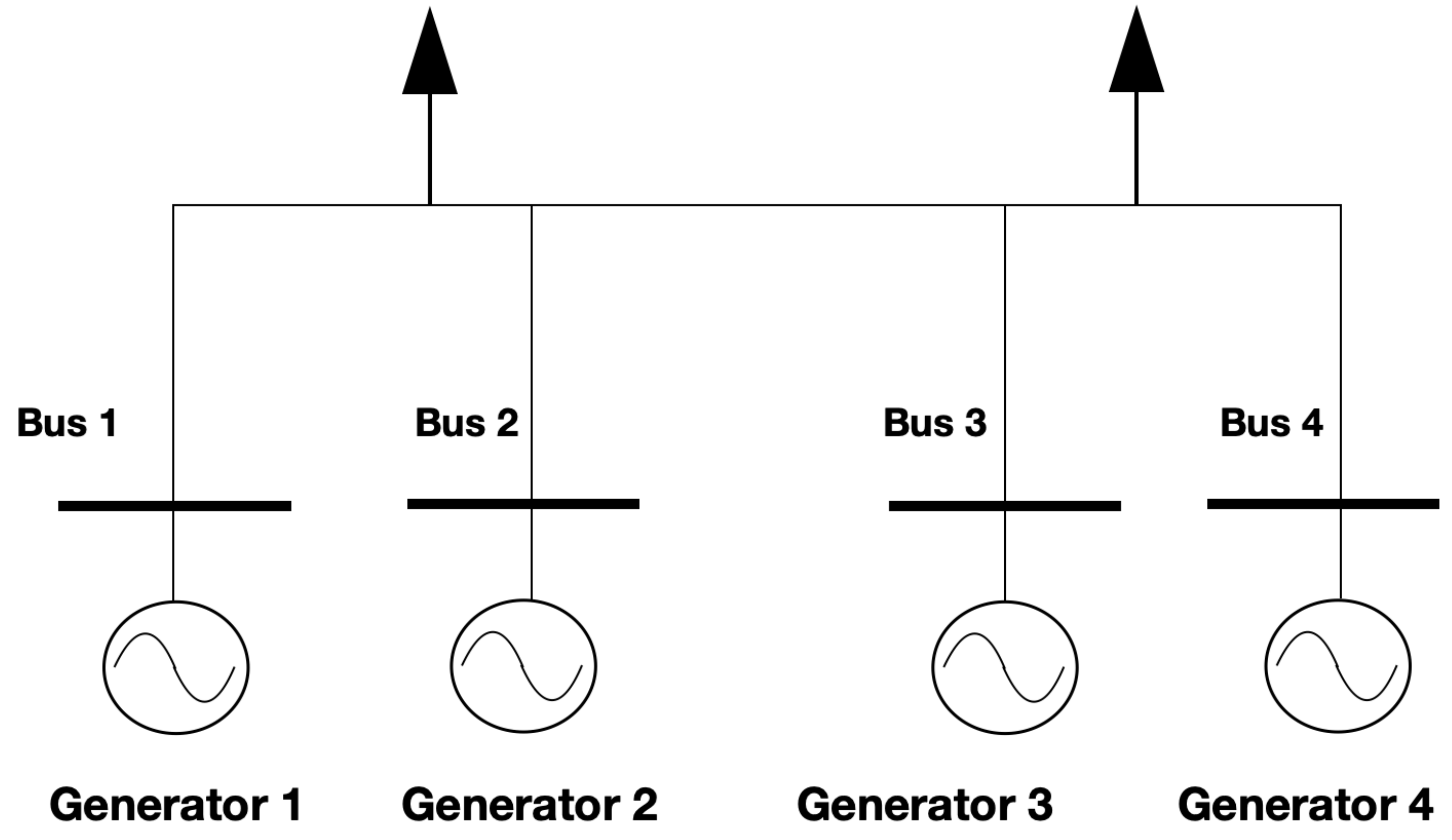}
\vspace*{-1.0mm}
\caption{The 4-bus power network with a line topology.}
\label{fig:4-bus}
\end{center}
\end{figure}

Consider the 4-bus power network with a line topology, as depicted in Fig \ref{fig:4-bus} \cite{simpson2016voltage}. It can be treated as the simplest model used in studies of the schematic of a parallel microgrid.  For simplicity, we assume that $\bar{V}_i=1$ and $\tau_i=1$ for all inverter buses. The controller gains $\kappa$ and the corresponding susceptance matrix $B_{\mathcal{G}}$ were  arbitrarily generated within reasonable ranges.
The voltage magnitude dynamics is characterized by
\begin{equation}\label{e:ex-positive system}
\begin{aligned}
A&=\begin{bmatrix}
-4.03&1.48&0&0\\1.48&-3.57&1.57&0\\0&1.57&-3.24&0.64\\0&0&0.64&-1.25
\end{bmatrix},\quad b=\begin{bmatrix}
1\\0\\0\\0
\end{bmatrix},\\
c^{\top}&=\,\begin{bmatrix}
\,\quad 0&\ \ \,\,\quad 0&\ \ \qquad1&\,\!\quad\quad0\ \,
\end{bmatrix},
\end{aligned}
\nonumber
\end{equation}
where $A$ is an irreducible and stable Metzler matrix, whilst $b$ and $c$ belong to $\Lambda_{\mathbb{R}_{+}^4}=\Pi=\lbrace e_i,i=1,\ldots,4\rbrace$. 
Since the unique real zero of the system is $-1.25$, one possible solution of \eqref{eq:rosenbrock} with $s_0=-1.25$ is $x(0)-\tilde{x}(0)=[
-5.5800\;-3.5596\;0\;8.7322]^{\top}$ and $d_0=10.2441$. This implies an attack signal is $d(t)=-10.2441e^{-1.25t}$.

In Fig. \ref{fig:attack+fake}, the solid lines indicate the trajectories initiated from  $x(0)=[
 12.5822$
 $10.0375\;
   13.4447\;14.7301]^{\top}$ with attack $d(t)$, while the dot lines are initiated from the possibly fraudulent initial state $\tilde{x}(0)=[
  18.1621\;13.5972\;13.4447\; 5.9979]^{\top}$. Note that the outputs of these two cases  coincide always, which shows the undetectability of the attack $d(t)$. In Fig. \ref{fig:attack+normal}, there are the trajectories initiated from $x(0)$ with (solid lines) and without (dash-dot lines) the attack $d(t)$. The attack $d(t)$ cannot affect the asymptotic stability of the system. In these figures, $x(t)$, $\tilde{x}(t)$, and $\hat{x}(t)$ represent the normal state trajectories, the fraudulent state trajectories, and the attacked state trajectories, respectively. 
\begin{figure}[!htb]
\begin{center}
\includegraphics[width=8.7cm, height=6.7cm]{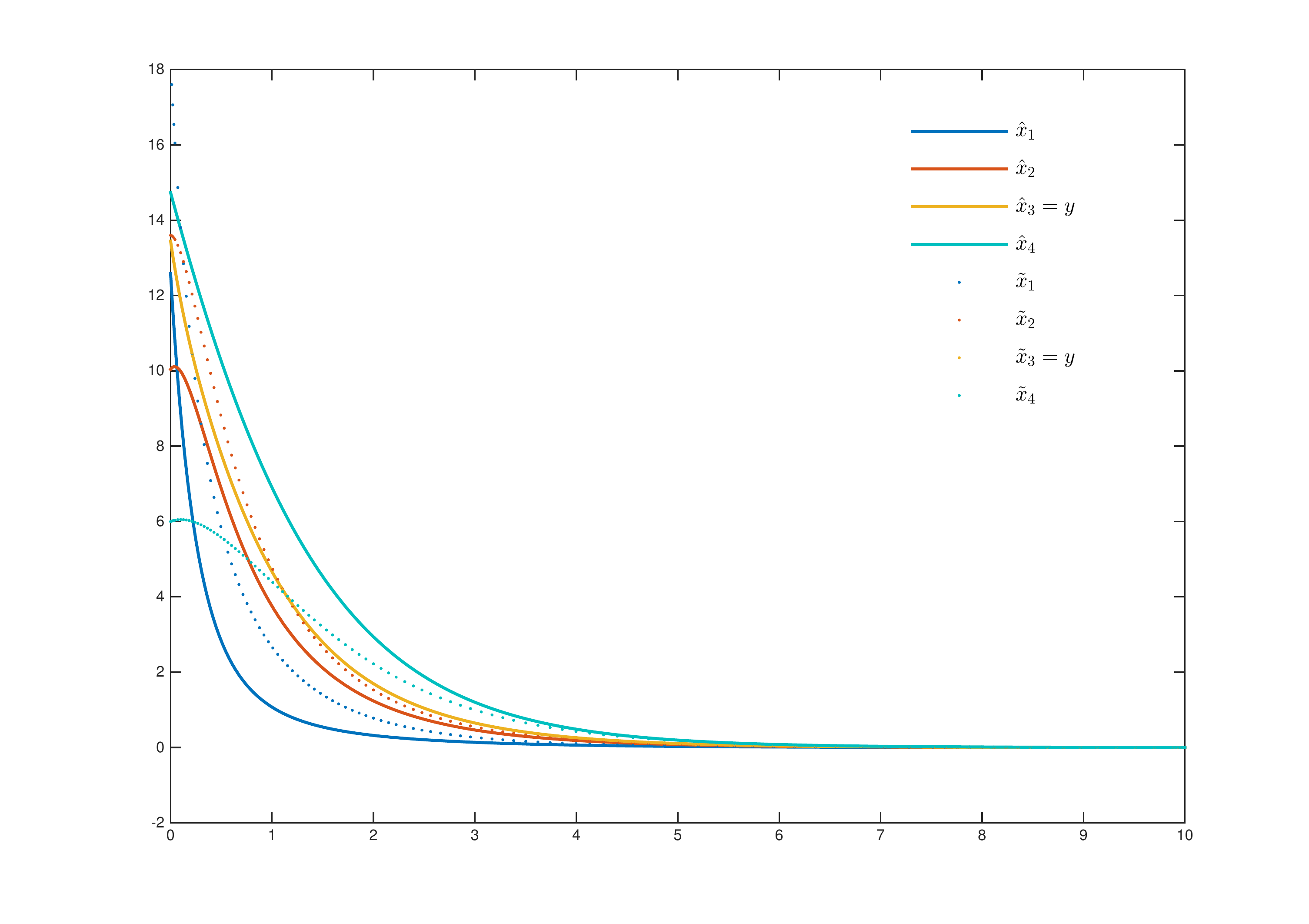}
\vspace*{-1.0mm}
\caption{The undetectability of the attack $d(t)$ on the 4-bus power network defined in Example \ref{ex:positive}.}
\label{fig:attack+fake}
\end{center}
\end{figure}
\begin{figure}[!htb]
\begin{center}
\includegraphics[width=8.7cm, height=6.7cm]{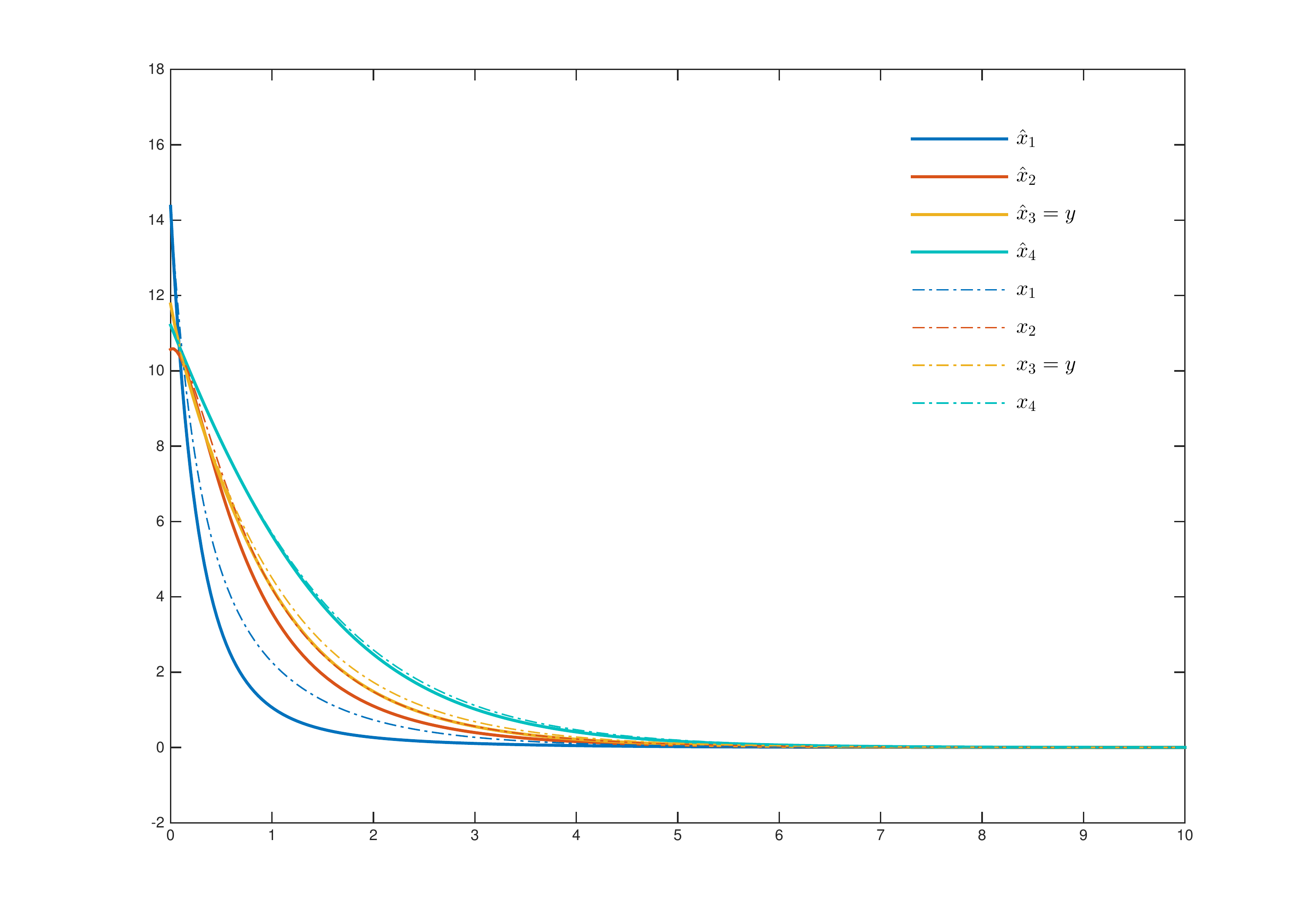}
\vspace*{-1.0mm}
\caption{The influence of the attack $d(t)$ on the 4-bus power network defined in Example \ref{ex:positive}.}
\label{fig:attack+normal}
\end{center}
\end{figure}
\end{exmp}

\begin{exmp} \label{e:ex-second order}
In this example, the result in Theorem \ref{theorem:secondt-order strongly connected multi-agent 1} 
is verified. We turn to consider  the phase-angle dynamics of power network with 9 buses, as depicted in Fig. \ref{fig:9-bus} \cite{anderson2008power}.
\begin{figure}[!htb]
\begin{center}
\includegraphics[width=8.7cm, height=6.2cm]{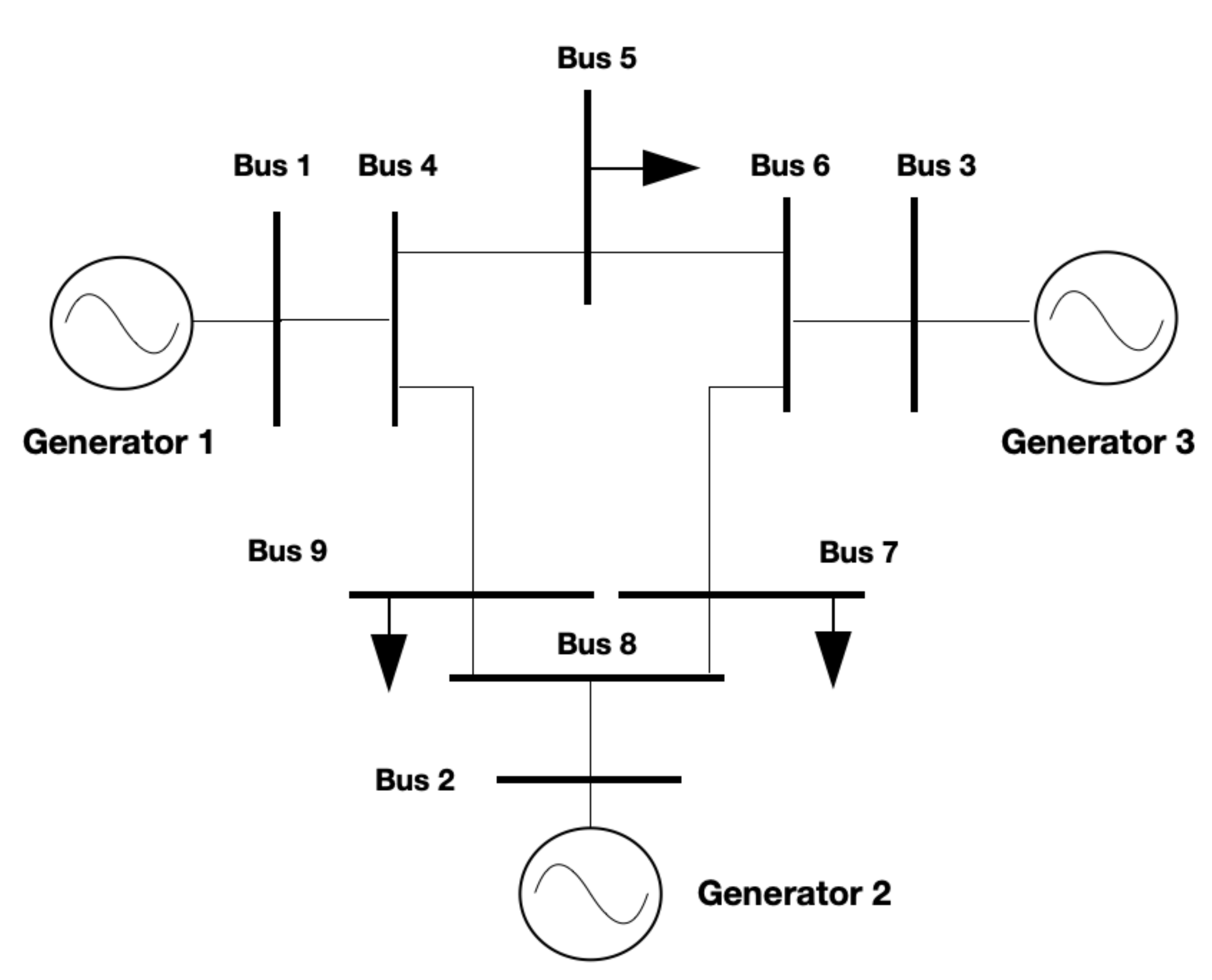}
\vspace*{-1.0mm}
\caption{The 9-bus power network.}
\label{fig:9-bus}
\end{center}
\end{figure}
Assume that all buses are synchronous cases.  The phase-angle dynamics of bus $i$ can be described by
the so-called swing equation:
\begin{equation}
m_i\ddot{\theta}_i(t)+d_i\dot{\theta}_i(t)-P_{mi}(t)+\sum_{j\in \mathcal{N}_i}P_{ij}(t)=0,
\nonumber
\end{equation}
where $\theta_i$ is the phase angle of bus $i$, $m_i$ and $d_i$ are the inertia
and damping coefficients, respectively, $P_{mi}$ is the mechanical input
power, and $P_{ij}$ is the active power flow from bus $i$ to $j$. The active power flow $P_{ij}$ is given by:
$$
P_{ij}(t) = k_{ij} \sin(\theta_i(t)-\theta_j(t)),
$$
where $k_{ij}=V_iV_jb_{ij}$, $V_i$ is the voltage magnitude of bus $i$, and $b_{ij}$ is the susceptance of the power line between buses $i$ and $j$.
Consider the case when the phase angles are close, after the linearization, the phase-angle dynamics of bus $i$ can be rewritten as: 
\begin{equation}
m_i\ddot{\theta}_i(t)+d_i\dot{\theta}_i(t)-P_{mi}(t)+\sum_{j\in \mathcal{N}_i}k_{ij} (\theta_i(t)-\theta_j(t))=0.
\nonumber
\end{equation}
The 9-bus power network's topological parameters are noted in Fig. \ref{fig:9-bus}. For simplicity, we normalized
the voltage magnitudes $V_{i}=1,\, \forall i$, while the rest dynamic coefficients of the buses were arbitrarily taken from reasonable values.
The phase-angle dynamics of the power network  can be rewritten  
in the state-space form \eqref{eq:second-order multi-agent-closedloop 1} with the size $A\in \mathbb{R}^{18\times 18}$. The system's state is denoted by $x=[\theta_1,\ldots,\theta_9,\dot{\theta}_1,\ldots,\dot{\theta}_9]^{\top}$.
The attacker's target is specified as $x_1$. We choose to measure $x_{10}$ as the defense strategy, which can be 
readily available through phase measurement units (PMU). 
The real zeros 
of the system are $0$ and $-0.1123$. 
For $s_0=0$, the solution to \eqref{eq:rosenbrock} satisfies $d_0=0$. 
For $s_0=-0.1123$, one solution to the equation \eqref{eq:rosenbrock} is $x(0)-\tilde{x}(0)=[
6.7189\;7.6055\;6.4301\;6.7189\;5.6482\;6.1816\;6.0540\;6.8980\\
\;7.1553\;0.0000\;-0.8538\;-0.7218\;-0.7542\;-0.6340\\-0.6939
\;-0.6796\;-0.7743-0.8032]^{\top}$ and $d_0=0.7542$, which implies the attack signal $d(t)=-0.7542e^{-0.1123t}$. 

For the clarity, only the behaviors of buses 1,3, and 5 are recorded in figures.
The solid lines in Fig. \ref{fig:attack+fake+9bus} are the trajectories initiated from $\left\lbrace x_i(0), i=1,3,5,10,12,14\right\rbrace$ with attack $d(t)$, while the dot lines are initiated from $ \left\lbrace \tilde{x}_i(0), i=1,3,5,10,12,14\right\rbrace$. Note that the outputs of these two cases (the cyan line) are exactly the same and the undetectability is confirmed. In Fig. \ref{fig:attack+normal+9bus}, there are the trajectories initiated from $x_{i}(0)$ with (solid lines) and without (dash lines) the attack $d(t)$. The 9-bus power system can still achieve synchronous with the attack $d(t)$ while the final synchronous points of $\left\lbrace x_j(t), j=1,3,5\right\rbrace $ are deflected.
\begin{figure}[!htb]
\begin{center}
\includegraphics[width=8.7cm, height=6.7cm]{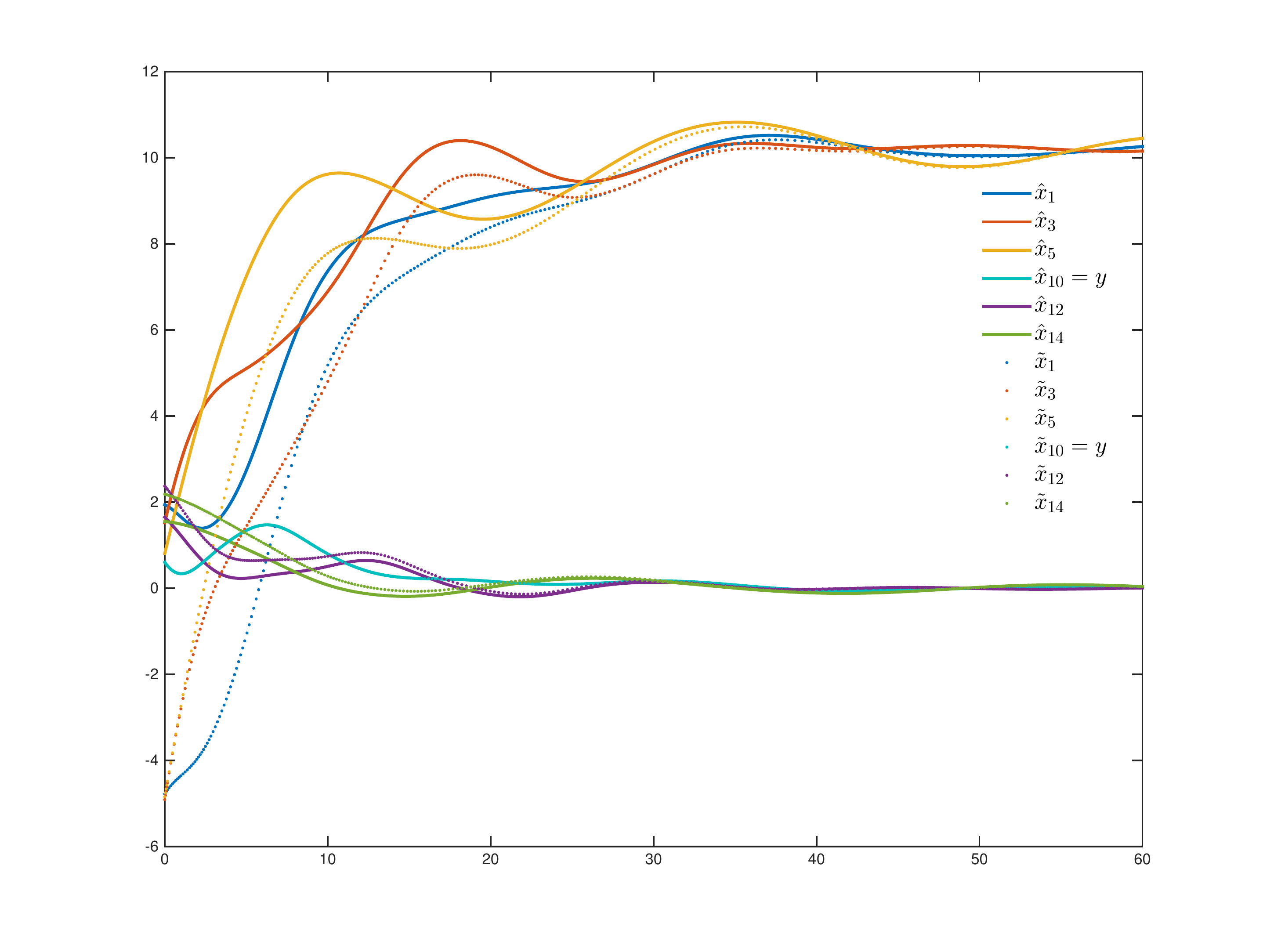}
\vspace*{-1.0mm}
\caption{The undetectability of the attack $d(t)$ on the 9-bus power network defined in Example \ref{e:ex-second order}.}
\label{fig:attack+fake+9bus}
\end{center}
\end{figure}

\begin{figure}[!htb]
\begin{center}
\includegraphics[width=8.7cm, height=6.7cm]{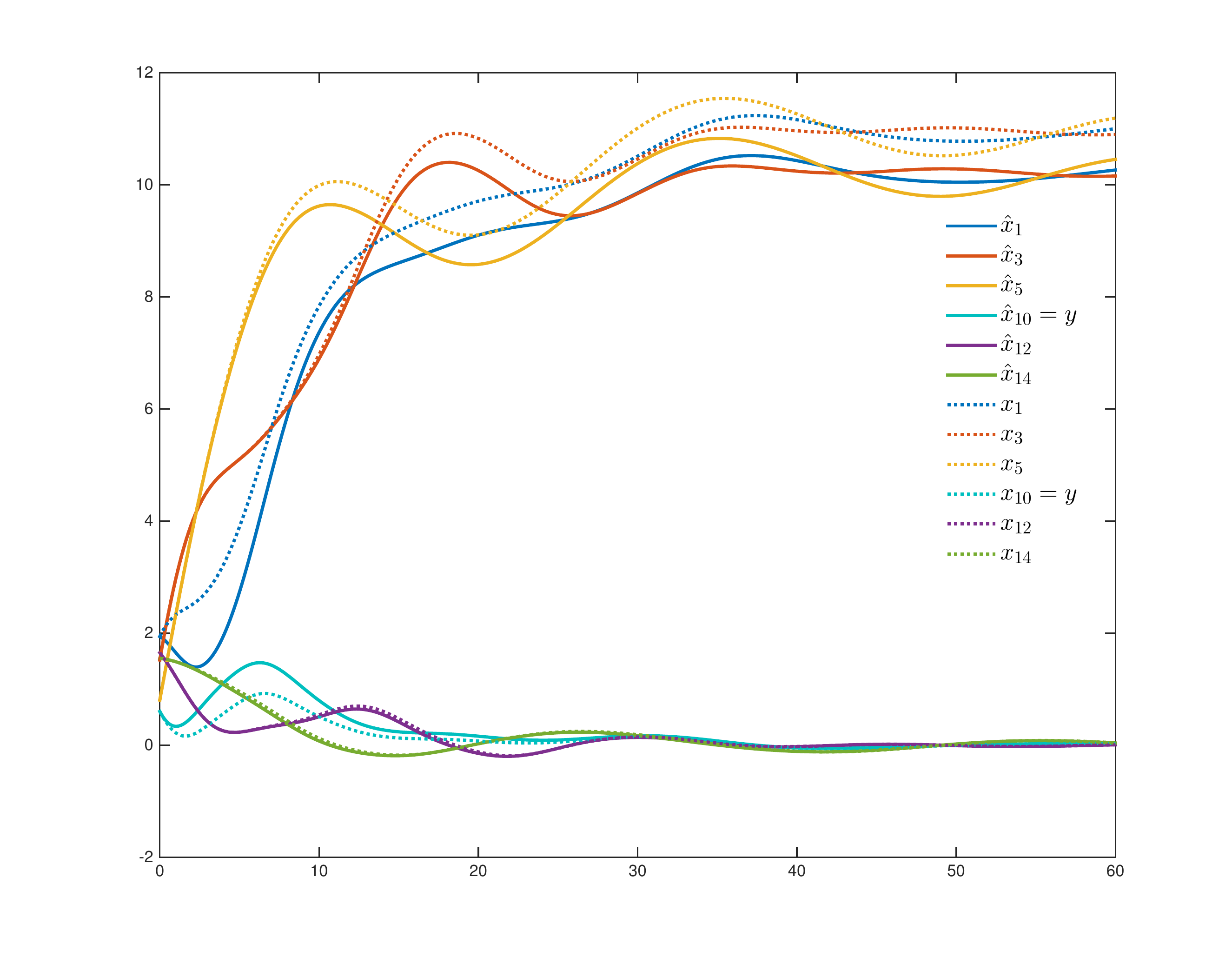}
\vspace*{-1.0mm}
\caption{The influence of the attack $d(t)$ on the 9-bus power network defined in Example \ref{e:ex-second order}.}
\label{fig:attack+normal+9bus}
\end{center}
\end{figure}
\end{exmp}

\section{Conclusion}  \label{sec:8}
An explicit and efficient detection and defense framework for cone-invariant systems
and multi-agent systems against undetectable attacks has been
developed in this paper. We have shown that for an undetectable attacks getting into the system through any position in $\Lambda_{\mathcal{K}}$, any defense by placing one sensor in the union $\Pi$ is successful or almost successful. The sets $\Lambda_{\mathcal{K}}$ and $\Pi$ can be characterized by the intersection of the canonical basis vectors and the cone $\mathcal{K}$ or its dual cone $\mathcal{K}^{\ast}$, which are easily computable from a geometric perspective. 
For multi-agent systems with single- and double-integrator dynamics,   we show that  if the associated digraph is strongly connected,
the corresponding sets $\Lambda_{\mathcal{K}}$ and $\Pi$ will both contain all nodes in the  digraph.

\bibliographystyle{IEEEtran}        
\bibliography{Cite}           

\end{document}